\numberwithin{equation}{section}
\numberwithin{figure}{section}
\theoremstyle{definition}
\newtheorem{theo}{Theorem}[section]
\newtheorem{prop}[theo]{Proposition}
\newtheorem{cor}[theo]{Corollary}
\newtheorem{defi}[theo]{Definition}
\newtheorem{exa}[theo]{Example}
\newtheorem{rem}[theo]{Remark}
\numberwithin{equation}{section}
\newtheorem{lemma}[theo]{Lemma}
\newcommand{\numberset}{\mathbb}
\newcommand{\N}{\numberset{N}}
\newcommand{\Z}{\numberset{Z}}
\newcommand{\R}{\numberset{R}}
\newcommand{\C}{\numberset{C}}
\newcommand{\F}{\numberset{F}}
\newcommand{\A}{\numberset{A}}
\newcommand{\mL}{\mathcal{L}}
\newcommand{\mC}{\mathcal{C}}
\newcommand{\mP}{\mathcal{P}}
\newcommand{\drk}{d_{\textnormal{rk}}}
\newcommand{\rs}{\textnormal{rs}}
\newcommand{\tr}{\textnormal{tr}}
\newcommand{\Tr}{\textnormal{Tr}}
\newcommand{\rk}{\textnormal{rk}}
\newcommand{\RRE}{\textnormal{RREF}}
\newcommand{\piv}{\textnormal{piv}}
\newcommand{\rpiv}{\textnormal{rpiv}}
\newcommand{\supp}{\textnormal{supp}}
\newcommand{\cC}{{\mathcal C}}
\newcommand{\cF}{{\mathcal F}}
\newcommand{\cG}{{\mathcal G}}
\newcommand{\cI}{{\mathcal I}}
\newcommand{\cJ}{{\mathcal J}}
\newcommand{\cL}{{\mathcal L}}
\newcommand{\cO}{{\mathcal O}}
\newcommand{\cP}{{\mathcal P}}
\newcommand{\cQ}{{\mathcal Q}}
\newcommand{\cS}{{\mathcal S}}
\newcommand{\cT}{{\mathcal T}}
\newcommand{\cU}{{\mathcal U}}
\newcommand{\mat}{\F^{n \times m}}
\newcommand{\len}[1]{{|\!|{#1}|\!|}}
\newcommand{\widesim}[1][1.5]{\scalebox{#1}[1]{$\sim$}}
\newcommand{\Fnm}{{\F^{n\times m}}}
\newcommand{\wcP}{\widehat{\mathcal P}}
\newcommand{\wwcP}{\widehat{\phantom{\big|}\hspace*{.5em}}\hspace*{-.9em}\wcP}
\newcommand{\GL}{\mbox{\rm GL}}
\newcommand{\inner}[1]{\langle{#1}\rangle}
\newcommand{\Binom}[2]{\genfrac{(}{)}{0pt}{1}{#1}{#2}}
\newcommand{\BinomS}[2]{\genfrac{(}{)}{0pt}{2}{#1}{#2}}
\newcommand{\Gaussian}[2]{\genfrac{[}{]}{0pt}{1}{#1}{#2}}
\newcommand{\GaussianD}[2]{\genfrac{[}{]}{0pt}{0}{#1}{#2}}
\newcommand{\SmallMat}[2]{\mbox{$\left(\begin{smallmatrix}{#1}\\{#2}\end{smallmatrix}\right)$}}
\begin{document}

\title{\textbf{Partitions of Matrix Spaces\\ With an Application to $q$-Rook Polynomials}}

\author[1]{Heide Gluesing-Luerssen\thanks{The author was partially supported by grant \# 422479 from the Simons Foundation.}}
\author[2]{Alberto Ravagnani\thanks{The author was supported by the Swiss National Science Foundation through grant \# P2NEP2\_168527 and by the
 Marie Curie Research Grants Scheme, grant \# 740880.}}

\affil[1]{Department of Mathematics, University of Kentucky}
\affil[2]{School of Mathematics and Statistics, University College Dublin, Ireland}

\setlist[itemize]{noitemsep, nolistsep}

\renewcommand\Authands{ and }

\date{}

\maketitle

\thispagestyle{empty}

\begin{abstract}
We study the row-space partition and the pivot partition on the matrix space $\F_q^{n \times m}$. We show that both these partitions are reflexive and that the row-space partition is self-dual. Moreover, using various combinatorial methods, we explicitly compute the Krawtchouk coefficients associated with these partitions. This establishes MacWilliams-type identities for the row-space and pivot enumerators of linear rank-metric codes.
We then generalize the Singleton-like bound for rank-metric codes, and introduce two new concepts of code extremality.  
Both of them generalize the notion of MRD codes and are preserved by trace-duality. 
Moreover, codes that are extremal according to either notion satisfy strong rigidity properties analogous to those of MRD codes.
As an application of our results to combinatorics, we give closed formulas for the $q$-rook polynomials associated with Ferrers diagram boards.
Moreover, we exploit connections between matrices over finite fields and rook placements to prove that the number of matrices
of rank $r$ over $\F_q$ supported on a Ferrers diagram is a polynomial in~$q$,
whose degree is strictly increasing in $r$.
Finally, we investigate the natural analogues of the MacWilliams Extension Theorem for the rank, the row-space, and the pivot partitions.
\end{abstract}



\section*{Introduction}\label{sec:intro}

This paper investigates the mathematical structure of rank-metric codes,
with a particular focus on partition enumerators and their connection with the theory of $q$-rook polynomials. A rank-metric code is an $\F_q$-linear space of matrices endowed with the rank distance. The latter measures the distance between two matrices as the rank of their difference. Rank-metric codes were first studied by Delsarte~\cite{Del78} for combinatorial interest via association schemes, and were independently re-discovered by Gabidulin~\cite{Gab85}, Roth~\cite{roth1991maximum},
and Cooperstein~\cite{cooperstein1998external}
in different contexts.

In 2008, rank-metric codes were proposed as a solution to the problem of error amplification in communication networks by
Silva/K\"otter/Kschischang~\cite{SKK08}. Since then, the mathematical theory of rank-metric codes has seen a resurgence of interest. In particular, $\F_q$-linear spaces of matrices have been studied in connection with various topics in enumerative and algebraic combinatorics; see \cite{lewis2018rook,Ra18,jurrius2018defining,gorla2018rank,
shiromoto2019codes,schmidt2018hermitian,ghorpade2019polymatroid} among many others.
This paper belongs to the latter line of research.

The best known class of rank-metric codes are the \textit{maximum rank distance} (MRD) codes.
Fixing the desired matrix size, field size, and minimum rank distance, they have the largest cardinality meeting these
parameters, see~\eqref{e-Singl} and the paragraph thereafter.
MRD codes have remarkable rigidity properties:
(a) the dual of an MRD code, with respect to a natural bilinear form, is an MRD code again;
(b) the \textit{rank distribution} of an MRD code (i.e., the number of matrices of each rank) is fully determined by the parameters of the code and therefore does not depend on the particular choice of the code.
This shows the analogy between MRD codes in the rank metric and MDS codes in the classical Hamming metric, which enjoy similar properties; see \cite{HP03} for a general reference on Hamming-metric codes.

The rank distribution is a special instance of a \textit{partition distribution}. Partitioning the entire matrix space, say $\F_q^{n \times m}$, into subsets according to some property (such as the rank) gives rise to the \textit{partition enumerator} of a code, which simply encodes the number of codewords in any partition block.
In this paper, we study of the \textit{row-space partition}~$\cP^\rs$ and the \textit{pivot partition}~$\cP^\piv$ on $\F_q^{n \times m}$, and the connections between these and topics in $q$-rook theory.

In~$\cP^\rs$, matrices in $\F_q^{n\times m}$ are in the same partition block if they have the same row space, while in~$\cP^\piv$ they are grouped according to their pivot indices after row reduction.
Thus~$\cP^\rs$ is finer than $\cP^\piv$, which is finer than the rank partition,~$\cP^\rk$.
Continuing the analogy from above, where the rank distribution is the analogue of the \textit{Hamming weight distribution},
the row-space distribution may be considered the analogue of the \textit{support distribution} (counting the number of codewords with a given support set). The terminology ``support'' is consistent with \cite{Ra18}.

In this paper, we show that that the row-space partition and the pivot partition are both reflexive, and that the row-space partition is also self-dual. We then compute the Krawtchouk coefficients of the row-space partition using a combinatorial approach based on M\"{o}bius inversion. This leads to an explicit MacWilliams identity for the row-space enumerator. We further introduce $U$-extremal codes (which generalize MRD codes) and show that they satisfy natural rigidity properties: (a) $U$-extremality is preserved by trace-duality; (b) for codes that are $U$-extremal for all~$U$ of a fixed dimension and below a fixed subspace, say~$T$, the partial row-space distribution below~$T$ only depends on the specified parameters, but not on the code or~$T$ itself (Theorem~\ref{T-Rigid}).

In the second part of the paper we study the dual of the pivot partition, showing that it can be naturally identified with the \textit{reverse pivot partition} $\cP^\rpiv$, where matrices are grouped according to their pivot indices after row reduction from the right. We then express the Krawtchouk coefficients of the partition pair $(\cP^\piv,\cP^\rpiv)$ in terms of the rank distribution of matrices supported on Ferrers diagrams (see Section~\ref{sec:KKpiv} for the precise definition of Ferrers diagram), establishing a MacWilliams identity in this context. We also provide both a recursive and an explicit formula for such rank distributions. Then we define pivot-extremal codes and show that they satisfy rigidity properties analogous to those for $U$-extremal codes.

Following work by Garsia/Remmel~\cite{GaRe86} and Haglund~\cite{Hag98}, in the third part of the paper we investigate connections between the rank distribution of matrices supported on Ferrers diagrams and
$q$-rook polynomials. 
The latter can be regarded as the $q$-analogues of classical rook polynomials associated with
a board; see \cite[Sections 7 and 8]{riordan2012introduction} for a general reference.
More precisely, as an application of our results, we give explicit expressions for the $q$-rook polynomials associated with a Ferrers board, and show that the number of matrices over $\F_q$ with rank~$r$ and
supported on a Ferrers diagram is a polynomial in $q$ whose degree strictly increases with $r$.

In the last part of the paper we characterize the linear maps on $\F_q^{n \times m}$ that preserve the rank, the row-space, or the pivot partition. We then give examples to show that in neither situation a MacWilliams Extension Theorem holds.

\paragraph*{Outline.} The paper is organized as follows.
In Section~\ref{sec:partitions} we recall the main definitions and results on partitions of finite abelian groups, Krawtchouk coefficients, rank-metric codes and MacWilliams identities. In Section~\ref{sec:ourpartitions} we introduce and establish the first properties of the row-space partition, the pivot partition and the reverse-pivot partition on the matrix space $\F_q^{n \times m}$. We devote Section~\ref{sec:KKrs} to the computation of the Krawtchouk coefficients of the row-space partition. In Section~\ref{S-UExtr} we define $U$-extremal codes and establish their rigidity properties. We compute the Krawtchouk coefficients of the pivot partition in Section~\ref{sec:KKpiv}, expressing them in terms of the rank distribution of matrices having a Ferrers diagram shape. Pivot-extremal codes are studied in Section~\ref{S-PivotExtr}. In Section~\ref{sec:FD} we give both a recursive and an explicit formula for the rank distribution of matrices supported on a Ferrers diagram. As a corollary, we show that the distribution is a polynomial in $q$. We then use these results to give a closed formula for the $q$-rook polynomials associated with Ferrers diagrams.
In Section~\ref{sec:pres} we study, for each of the three partitions, the partition-preserving linear maps,
and show that they do not satisfy the analogue of the MacWilliams Extension Theorem.

\section{Partitions and MacWilliams Identities}
\label{sec:partitions}
In this section we introduce partitions on matrix spaces and their character-theoretic dual.
We also define the Krawtchouk coefficients, which then determine the MacWilliams identities.

Throughout this paper, $q$ denotes a prime power and $\F=\F_q$ is the finite field with $q$ elements.
We denote by $\F^{n \times m}$ the space of $n \times m$ matrices over $\F$ and assume for the rest of the paper\footnote{In fact, the assumption is not needed for Sections~\ref{sec:KKpiv},~\ref{sec:FD}, and~\ref{sec:pres}.} that
\[
   m\leq n.
\]
Let $\N=\{1,2,3,\ldots\}$ and $\N_0=\{0,1,2,...\}$. For $i\geq 1$, we let $[i]:=\{1,...,i\}$.

Recall that the \textbf{trace product} of matrices $M,N \in\Fnm$ is
\begin{equation}\label{e-TraceProd}
    \langle A,B \rangle := \Tr(AB^\top),
\end{equation}
where $\Tr$ denotes the matrix trace.
Identifying $\Fnm$ with~$\F^{nm}$ via row concatenation, the trace product becomes the classical inner product of $\F^{nm}$.
Thus $(A,B) \longmapsto \langle A,B \rangle$ defines a symmetric and non-degenerate bilinear form on $\Fnm$.

\begin{defi}
Let $(G,+)$ be a group. The \textbf{character group} of $G$ is the set of all group homomorphisms $G \longrightarrow \C^*$ endowed with point-wise multiplication. It is denoted by $\widehat{G}$.
\end{defi}

It is well known (see for instance~\cite{Is76} for background on character theory) that if $G$ is a finite abelian group, then $G$ and $\widehat{G}$ are isomorphic (though not canonically so).
This is not the case for more general classes of groups.
Note also that if~$G$ is finite, then $|\chi(g)|=1$ for all~$g\in G$ and $\chi\in\widehat{G}$.
The character~$\chi$ given by $\chi(g)=1$ for all~$g\in G$ is called the \textbf{trivial character}.
For an $\F$-vector space $V$ we simply write $\widehat{V}$ for the character group of $(V,+)$.
Note that in this case $\widehat{V}$ carries a natural $\F$-vector space structure via
\begin{equation}\label{e-VhatVS}
   (c\chi)(v):=\chi(c v) \; \text{ for all }c\in\F,\;\chi\in\widehat{V},\; v\in V.
\end{equation}

Let $\chi:\F \longrightarrow \C^*$ be a non-trivial character of~$(\F,+)$.
The trace-product on $\Fnm$ induces via $\chi$ an
isomorphism of $\F$-vector spaces
\[
  \Fnm \longrightarrow \widehat{\Fnm},\qquad  B\longmapsto
  \left\{\begin{array}{ccc}\Fnm&\longrightarrow&\C^*\\  A&\longmapsto&\chi(\inner{A,B}).\end{array}\right.
\]
This isomorphism allows us to identify $\Fnm$ with its character group via the chosen character.
This identification is taken into account in the following definition.

\begin{defi}\label{D-DualPart}
Let $\cP=(P_i)_{i\in\cI}$ be a partition of $\Fnm$, and let $\chi$ be a non-trivial character of $\F$. The \textbf{dual}
of $\cP$ with respect to $\chi$ is the partition $\wcP$ of $\Fnm$ defined via the equivalence relation
\begin{equation}\label{e-Dual}
   B\widesim_{\wcP} \; B'\Longleftrightarrow
   \sum_{A\in P_i}\chi(\inner{A,B})=\sum_{A\in P_i}\chi(\inner{A,B'})\ \text{ for all }i\in\cI.
\end{equation}
We say that $\cP$ is \textbf{reflexive} if $\cP=\wwcP$  and
\textbf{self-dual} if $\cP=\wcP$. Note that self-duality implies reflexivity.
\end{defi}

One should be aware of the fact that the dual partition may depend on the choice
of the non-trivial character~$\chi$; see for example \cite[Ex.~2.2]{GL14homog}.
Therein, it is shown that even self-duality of a partition depends in general on the choice of the character.
Reflexivity, however, is independent of this choice.
This is a consequence of ~\cite[Prop.~4.4]{BGL15}.

For the partitions studied in this paper, the dual partitions do not depend on the choice of the character.
In fact, (as we will see) they belong to the following special class.

\begin{defi}
A partition $\cP=(P_i)_{i\in\cI}$ of $\Fnm$ is called \textbf{invariant} if $u P_i=P_i$ for all $u\in\F^*$ and $i\in\cI$,
that is, all blocks of the partition are invariant under multiplication by non-zero scalars.
\end{defi}

\begin{rem}\label{R-ChiInd}
Suppose $\cP=(P_i)_{i\in\cI}$ is an invariant partition of~$\Fnm$.
\begin{enumerate}
\item The dual partition~$\wcP$ does not depend on the choice of the non-trivial character~$\chi$.
        This follows from the fact that every other non-trivial character of~$\F$ is of the form $u\chi$ for some $u\in \F^*$; see~\eqref{e-VhatVS} for $u\chi$.
        Hence
        \begin{equation}\label{e-SumChiInv}
           \sum_{A\in P_i}(u\chi)(\inner{A,B})=\sum_{A\in P_i}\chi(u\inner{A,B})=\sum_{A\in P_i}\chi(\inner{uA,B})
           =\sum_{A\in P_i}\chi(\inner{A,B}),
        \end{equation}
        from which the statement follows.
\item The partition $\widehat{\mP}$ is invariant as well.
\end{enumerate}
\end{rem}

Now we are ready to introduce some fundamental parameters of invariant partitions.

\begin{defi}\label{D-Kraw}
Fix a non-trivial character~$\chi$ of~$\F$.
Let $\cP=(P_i)_{i\in\cI}$ be an invariant partition of $\Fnm$ and let $\wcP=(Q_j)_{j\in\cJ}$ be its dual partition.
For all $(i,j) \in \cI \times \cJ$, the complex number
\begin{equation}\label{e-Kraw}
  K(\mP;i,j):=\sum_{A\in P_i}\chi(\inner{A,B}), \quad \text{where }B\in Q_j ,
\end{equation}
is called the \textbf{Krawtchouk coefficient} of $\cP$ with index $(i,j)$.
Note that, thanks to~\eqref{e-SumChiInv}, the Krawtchouk coefficients do not depend on the choice of~$\chi$.
\end{defi}

We now introduce the main objects studied in this paper.

\begin{defi}\label{D-Code}
A (\textbf{matrix}) \textbf{code} is a linear subspace $\cC \le\Fnm$.
 The \textbf{dual} of $\mC$ is the matrix code
         \[
                \mC^\perp:=\{B \in\Fnm \mid \inner{A,B}=0 \mbox{ for all } A \in \mC\}.
         \]
         Observe that $\dim(\mC^\perp)=mn-\dim(\mC)$, and that $\mC^{\perp \perp}=\mC$.
\end{defi}

\begin{defi}\label{D-PartDistr}
Given a partition $\cP=(P_i)_{i\in\cI}$ of $\Fnm$ and a code $\cC \le \Fnm$, we define
$$\mP(\mC,i):= | \mC \cap P_i|, \quad i\in\cI.$$
We call the collection $(\mP(\mC,i))_{i \in \cI}$ the \textbf{$\cP$-distribution} of $\mC$.
\end{defi}

Now we can formulate a general version of the MacWilliams identities.
Such identities have been established various times for different settings:
for general subgroups of finite abelian groups in \cite[Thm.~4.72, Prop.~5.42]{Cam98} and~\cite[Thm.~2.7]{GL15Fourier},
for discrete subgroups of locally compact abelian groups in \cite[p.~94]{Fo98}, for codes over Frobenius rings
in \cite[Thm.~21]{HoLa01} and \cite[Thm.~2.11]{byrne2007linear}, and for codes supported on lattices in \cite[Thm.~29]{Ra18}.

\begin{theo}[MacWilliams Identities, see~\mbox{\cite[Thm.~2.7]{GL15Fourier}}] \label{th:MWI}
Let $\cQ=(Q_j)_{j\in\cJ}$ be an invariant partition of $\Fnm$ and let $\widehat{\cQ}=:\cP=(P_i)_{i\in\cI}$.
For all codes $\mC \le\Fnm$ and all $j\in\cJ$ we have
\[
    \cQ(\mC^\perp,j) =\frac{1}{|\cC|}\sum_{i\in\cI}K(\cQ;j,i) \: \cP(\cC,i).
\]
\end{theo}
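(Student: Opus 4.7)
The plan is to apply the Poisson summation formula to the indicator function of the block $Q_j$ and then to group the resulting character sum according to the blocks of the dual partition $\cP = \widehat\cQ$. After fixing a non-trivial character $\chi$ of $(\F,+)$ and using the trace-product identification of $\Fnm$ with its character group, I would write
\[
   \cQ(\cC^\perp,j) \;=\; \sum_{B \in \cC^\perp} f(B),\qquad f := \mathbf{1}_{Q_j}.
\]
By Remark~\ref{R-ChiInd}(1), the final expression will not depend on the choice of $\chi$.

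The key ingredient is the Poisson summation formula for the subgroup $\cC \le \Fnm$, which for any $f\colon \Fnm \to \C$ asserts
\[
   \sum_{B \in \cC^\perp} f(B) \;=\; \frac{1}{|\cC|}\sum_{A \in \cC} \widehat{f}(A),\qquad \widehat{f}(A) \;:=\; \sum_{B \in \Fnm}\chi(\inner{A,B})\,f(B).
\]
I would derive this from the orthogonality relation that $\sum_{B \in \cC^\perp}\chi(\inner{A,B})$ equals $|\cC^\perp|$ if $A\in\cC$ and vanishes otherwise, combined with an interchange of summation and the identity $|\cC^\perp|/|\Fnm| = 1/|\cC|$. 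Specialising to $f = \mathbf{1}_{Q_j}$ gives $\widehat{f}(A) = \sum_{B \in Q_j}\chi(\inner{A,B})$.

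To finish, I would invoke the very definition of the dual partition: by~\eqref{e-Dual}, together with the symmetry of the trace-product, the function $A \mapsto \sum_{B \in Q_j}\chi(\inner{A,B})$ is constant on each block $P_i$ of $\cP = \widehat\cQ$, and its value there is precisely $K(\cQ;j,i)$ in the sense of Definition~\ref{D-Kraw}. Partitioning $\sum_{A \in \cC}\widehat f(A)$ according to the $\cP$-blocks then yields
\[
   \sum_{A \in \cC}\widehat f(A) \;=\; \sum_{i\in\cI}|\cC\cap P_i|\,K(\cQ;j,i)\;=\;\sum_{i\in\cI}\cP(\cC,i)\,K(\cQ;j,i),
\]
and dividing by $|\cC|$ gives the claimed identity. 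There is no genuine obstacle here: the only point requiring care is keeping straight that the Krawtchouk sum in Definition~\ref{D-Kraw} ranges over a block of the \emph{primary} partition evaluated at a representative of a block of its \emph{dual}, which matches $\widehat f(A)$ exactly once the trace-product symmetry is applied. Since the statement is already available as~\cite[Thm.~2.7]{GL15Fourier}, one could alternatively just cite it.
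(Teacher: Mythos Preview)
Your proof is correct and is the standard Poisson-summation argument. Note that the paper does not actually give its own proof of this theorem; it simply quotes the result from~\cite[Thm.~2.7]{GL15Fourier}, so there is nothing to compare against beyond observing that your argument is precisely the kind of Fourier-analytic proof one finds in that reference. One small wrinkle: the orthogonality relation you state, $\sum_{B\in\cC^\perp}\chi(\inner{A,B})=|\cC^\perp|\cdot[A\in\cC]$, is the one you would plug in after Fourier-inverting~$f$, whereas the more direct route (interchange the sums in $\frac{1}{|\cC|}\sum_{A\in\cC}\widehat f(A)$ and use $\sum_{A\in\cC}\chi(\inner{A,B})=|\cC|\cdot[B\in\cC^\perp]$) avoids that extra step; either way the computation goes through.
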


Note that in the above formulation~$\cQ$ is the primal partition and~$\cP$ its dual.
The result tells us that the $\cQ$-distribution of~$\cC^{\perp}$ is fully determined by the $\cP$-distribution of~$\cC$.
The converse is not true in general.
However, if~$\cQ$ is reflexive, thus $\cQ=\widehat{\cP}$, then the two distributions mutually determine each other.

The MacWilliams identities give rise to the task to determine the Krawtchouk coefficients explicitly.
We will do so for various invariant partitions of~$\Fnm$, which we introduce in the next section.

\section{The Row-Space Partition and the Pivot Partition}
\label{sec:ourpartitions}
In this section we introduce the partitions mentioned in the title along with their character-theoretic duals.
Before doing so, we briefly discuss the rank partition. Recall that $m\leq n$.

\begin{defi}\label{rkP}
For $0 \le i \le m$ set
$P^\rk_i:=\{A\in\Fnm\mid \rk(A)=i\}$.
Then $\cP^\rk:=(P^\rk_r)_{0 \le r \le m}$ is a partition of $\Fnm$ of size $m+1$, called the \textbf{rank partition} of~$\Fnm$.
\end{defi}

This partition, which is clearly invariant, has been well studied in the past.
Self-duality is well-known but will also follow from our more general considerations later; see Corollary~\ref{C-SD}.
MacWilliams identities for additive codes endowed with the rank partition were first discovered by Delsarte~\cite[Thm.~3.3]{Del78} along with explicit expressions for the Krawtchouk coefficients
\cite[Thm.~A2]{Del78}; see also \cite[Ex.~39]{Ra18} for a proof using lattice theory.
They are given by
\begin{equation}\label{e-KrawRank}
   K(\cP^\rk;r,s)=\sum_{i=0}^m(-1)^{r-i}q^{ni+\BinomS{r-i}{2}}\GaussianD{m-i}{m-r}\GaussianD{m-s}{i} \quad \text{for all }0\leq r,s\leq m.
\end{equation}
Here $\Gaussian{a}{b}$ denotes the $q$-binomial coefficient.
It is the number of $b$-dimensional subspaces of~$\F_q^a$.

\medskip

We now turn to the partitions that will be the main subject of our investigation later on.
Let~$\cL$ be the set of all subspaces of~$\F^m$.
We have $\cL=\bigcup_{l=0}^m\cG_q(m,l)$, where $\cG_q(m,l)$ is the Grassmannian of $l$-dimensional subspaces of $\F^m$.
Then~$\cL$ is a lattice with respect to inclusion.

\begin{defi}\label{D-RSP}
For a matrix $A\in\Fnm$ we define $\rs(A):=\{uA\mid u\in\F^n\}$ to be the row space of~$A$.
For $U\in\cL$ set $P^\rs_U:=\{A\in\Fnm\mid \rs(A)=U\}$.
Then $\cP^\rs:=(P^\rs_U)_{U\in\cL}$ is a partition of $\Fnm$, called the \textbf{row-space partition} of~$\Fnm$.
\end{defi}

\begin{defi}\label{D-PivPart}
Define $\Pi=\{(j_1,\ldots,j_r)\mid 1\leq r\leq m,\, 1\leq j_1<\ldots<j_r\leq m\}\cup\{(\;)\}$, where $(\;)$ denotes the empty list.
For a list $\lambda\in\Pi$ we define $|\lambda|\in\{0,\ldots,m\}$ as its length.
For a matrix $A \in \Fnm$ we denote by $\RRE(A)$ the reduced row echelon form of~$A$, and define
\[
    \piv(A):=(j_1,\ldots,j_r)\in\Pi, \ \text{where }1\leq j_1<\ldots<j_r\leq m\text{ are the pivot indices of } \RRE(A).
\]
Then $\piv(0):=()$ and $|\piv(A)|=\rk(A)$ for all $A\in\Fnm$.
Matrices $A,B \in \Fnm$ are called \textbf{pivot-equivalent} if $\piv(A)=\piv(B)$.
This defines an equivalence relation on $\Fnm$. The equivalence classes form the \textbf{pivot partition} of $\Fnm$, denoted by $\cP^\piv$.
\end{defi}

Obviously,~$\Pi$ is in bijection to the set of all subsets of~$[m]$.
For us it will be helpful to record pivots as ordered lists, as introduced above.
We will use set-theoretical operations in the obvious way for pivot lists.

The three partitions defined above ($\cP^\rk$, $\cP^\rs$, and $\cP^\piv$) arise as the collection of orbits with respect to suitable group actions on $\mat$.
Indeed, consider the general linear groups of order~$n$ and~$m$ as well as the group
$\cU_m(\F)=\{S\in\GL_m(\F)\mid S\text{ is upper}$ $\text{triangular}\}$.
Define the actions
\begin{equation}\label{e-GroupActions}
\left.\begin{array}{ccccrcl}
  \rho_1:& \GL_n(\F)\times \Fnm                            &\longrightarrow&\Fnm,          & (S,A)&\longmapsto &\ SA,\\[.6ex]
  \rho_2:&(\GL_n(\F)\times\cU_m(\F))\times \Fnm  &\longrightarrow&\Fnm,         & (S,U,A)&\longmapsto &\ SAU^{-1},\\[.6ex]
  \rho_3:& (\GL_n(\F)\times\GL_m(\F))\times \Fnm&\longrightarrow&\Fnm, & (S,T,A)&\longmapsto &\ SAT^{-1}.
\end{array}\qquad \right\}
\end{equation}
Denote by $\cO_i$ the partition of~$\Fnm$ consisting of the orbits of~$\rho_i$.
We summarize some important properties of these partitions.

\begin{prop}\label{P-BasicsP}
\begin{enumerate}
\item \label{p1} $\cP^\rs \le \cP^\piv \le \cP^\rk$, that is, the row-space partition is finer than the pivot partition, which is finer than the rank partition.
\item \label{p2} $|\cP^\rk|=m+1$, $|\cP^\rs|=|\cL|=\sum_{l=0}^m\Gaussian{m}{l}$, and
$|\cP^\piv|=|\Pi|=\sum_{r=0}^m\Binom{m}{r}=2^m$.
\item \label{p3} $\cP^\rs=\cO_1$, $\cP^\piv=\cO_2$, and $\cP^\rk=\cO_3$.
\item \label{p4} $\cP^\rk$, $\cP^\rs$ and $\cP^\piv$ are invariant partitions.
\end{enumerate}
\end{prop}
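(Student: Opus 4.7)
The plan is to treat the four claims sequentially, noting that (4) will be essentially a corollary of (3). For part (1), I would simply unwind the definitions: if $\rs(A)=\rs(B)$ then $A$ and $B$ share the same reduced row echelon form and hence the same pivot list, giving $\cP^\rs\leq\cP^\piv$; and if $\piv(A)=\piv(B)$ then $\rk(A)=|\piv(A)|=|\piv(B)|=\rk(B)$, giving $\cP^\piv\leq\cP^\rk$. For part (2), the count $|\cP^\rk|=m+1$ uses the hypothesis $m\leq n$ to guarantee that every rank $0,1,\ldots,m$ actually occurs. The equalities $|\cP^\rs|=|\cL|$ and $|\cP^\piv|=|\Pi|$ follow by exhibiting, for each subspace $U\leq\F^m$ and each list $\lambda\in\Pi$, an explicit matrix with that row space (respectively that pivot list): place a basis of $U$ in the top rows with zero rows below, or write down the RREF matrix whose pivot columns are the standard basis vectors and whose non-pivot columns vanish.

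Part (3) contains the main content. The identifications $\cP^\rs=\cO_1$ and $\cP^\rk=\cO_3$ are classical: two matrices share a row space iff they are related by left multiplication by some $S\in\GL_n(\F)$, and two matrices have the same rank iff they are $\GL_n(\F)\times\GL_m(\F)$-equivalent under $(S,T)\cdot A=SAT^{-1}$. The delicate case is $\cP^\piv=\cO_2$, which I would establish by proving both refinements. For $\cO_2\leq\cP^\piv$: left multiplication by $S\in\GL_n(\F)$ preserves the RREF and hence the pivots, while right multiplication by $U^{-1}$ with $U\in\cU_m(\F)$ acts via elementary column operations that add multiples of earlier columns to later ones (together with column scalings); such operations preserve, for every $j$, the column span of the first $j$ columns, hence preserve the pivot indices. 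For $\cP^\piv\leq\cO_2$: given $A,B\in\Fnm$ with $\piv(A)=\piv(B)$, choose $S_A,S_B\in\GL_n(\F)$ with $S_AA=\RRE(A)$ and $S_BB=\RRE(B)$; the two RREFs share the same pivot columns (identical standard basis vectors in identical positions), so it suffices to construct $T\in\cU_m(\F)$ with $\RRE(A)\cdot T^{-1}=\RRE(B)$, whence $B=S_B^{-1}S_AAT^{-1}$ exhibits $A$ and $B$ as $\rho_2$-equivalent.

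The main obstacle is producing this $T\in\cU_m(\F)$. I would construct it greedily, sweeping the non-pivot columns from left to right: each non-pivot column $j$ of an RREF is supported on rows indexed by strictly preceding pivots and is thus a prescribed linear combination of strictly earlier pivot columns of that RREF, and any such combination can be installed by right-multiplication with a unit upper triangular elementary matrix that adds multiples of columns $k<j$ to column $j$. Composing finitely many such elementary upper triangular factors yields the desired $T$. Finally, part (4) falls out of part (3) for free: for any $u\in\F^*$ the scalar matrix $uI_n$ lies in $\GL_n(\F)$, so $uA=(uI_n)A$ lies in the same orbit as $A$ under each of $\rho_1$, $\rho_2$, and $\rho_3$, which is precisely the invariance of $\cP^\rs$, $\cP^\piv$, and $\cP^\rk$ under the scalar action of $\F^*$.
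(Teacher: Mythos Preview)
Your proposal is correct and follows essentially the same approach as the paper. The only stylistic difference is in the direction $\cP^\piv\leq\cO_2$: the paper introduces the intermediate ``pivot-only'' matrix $M$ (pivot columns equal to $e_\ell$, all others zero) and observes that both $\RRE(A)$ and $\RRE(B)$ factor as $MV$ and $MW$ with $V,W\in\cU_m(\F)$, whereas you build the upper-triangular $T$ directly by a left-to-right sweep of elementary column operations; these are equivalent arguments.
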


\begin{proof} Property
(\ref{p1}) is clear and (\ref{p4}) is immediate from~(\ref{p3}).
Property (\ref{p2}) follows from the fact that for every possible rank $r\in\{0,\ldots,m\}$ we have $\Binom{m}{r}$
possibilities for the pivot indices of a matrix in $\Fnm$ with rank~$r$. The other two statements are clear.

Let us show (\ref{p3}). The identities concerning $\cP^\rs$ and $\cP^\rk$ are basic Linear Algebra.
It remains to show $\cP^\piv=\cO_2$.
Consider a matrix $A \in \Fnm$ and denote its columns by $A_1,\ldots,A_m$.
Then for any $j\in[m]$ we have
\begin{equation}\label{e-pivcol}
   j\in\piv(A)\Longleftrightarrow A_j\text{ is not in the span of the columns }A_1,\ldots,A_{j-1}.
\end{equation}
Let now $B=SAU^{-1}$ for some $S\in\GL_n(\F)$ and $U\in\cU_m(\F)$.
Then~\eqref{e-pivcol} immediately implies that $j\in\piv(A)\Longleftrightarrow j\in\piv(B)$ for any $j\in[m]$.
This proves $\cO_2\leq\cP^\piv$.
For the converse let $A,B\in\Fnm$ such that $\piv(A)=\piv(B):=(j_1,\ldots,j_r)$.
Let $\hat{A},\hat{B}$ be the RREF's of~$A,B$, respectively.
Then $\hat{A}=XA$ and $\hat{B}=YB$ for some $X,Y\in\GL_n(\F)$.
Denote by $e_1,\ldots,e_n$ the standard basis (column) vectors of ~$\F^n$.
Define the matrix $M=(M_1,\ldots,M_m)\in\Fnm$ via
\[
    M_i=\left\{\begin{array}{ll}e_\ell,&\text{if $i=j_\ell$ for some $\ell\in\{1,\ldots,r\}$},\\ 0,&\text{otherwise}\end{array}\right.
\]
In other words,~$M$ is obtained from~$\hat{A}$  (hence~$\hat{B}$) by keeping the pivot columns and erasing the others.
Now~\eqref{e-pivcol} implies
\[
   \hat{A}=MV,\quad \hat{B}=MW\text{ for some }V,W\in\cU_m(\F).
\]
Hence $B=Y^{-1}XAV^{-1}W$, and since $V^{-1}W$ is in $\cU_m(\F)$ we conclude that the matrices $A,B$ are in the same orbit of~$\cO_2$.
\end{proof}

We now turn to the duals of these partition.
The following more general result will be helpful.
It is a special case of \cite[Prop.~4.6]{BGL15}, where partitions induced by group actions are considered for arbitrary finite
Frobenius rings instead of finite fields.
For the sake of self-containment we provide a short proof.

\begin{prop}\label{P-OrbDual}
Let $\cS\leq\GL_n(\F)$ and $\cT\leq\GL_m(\F)$ be subgroups and define their transposes as $\cS'=\{S^\top\mid S\in\cS\}$ and $\cT'=\{T^\top\mid T\in\cT\}$.
Consider the group actions
\begin{alignat*}{7}
  \rho&:& \cS\times\cT\times \Fnm  &\longrightarrow\Fnm,  &\quad (S,T,A)\longmapsto &\ SAT^{-1},\\
  \rho'&:&\quad \cS'\times\cT'\times \Fnm  &\longrightarrow\Fnm,  &\quad (S,T,A)\longmapsto &\ SAT^{-1}.
\end{alignat*}
Let $\cO$ and $\cO'$ be the orbit partitions of~$\rho$ and~$\rho'$, respectively.
Then $\widehat{\cO}=\cO'$ and $\widehat{\cO'}=\cO$.
Thus, the partitions are reflexive and $|\cO|=|\cO'|$.
\end{prop}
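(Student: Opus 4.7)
The strategy is to prove $\widehat{\cO}=\cO'$ by two refinement inclusions. The equality $\widehat{\cO'}=\cO$, and hence reflexivity of both partitions together with $|\cO|=|\cO'|$, then follows from the symmetric argument after interchanging $(\cS,\cT)$ with $(\cS',\cT')$ (using $(\cS')'=\cS$ and $(\cT')'=\cT$).

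For the easy inclusion $\cO'\leq\widehat{\cO}$: if $B'=S^\top B T^{-\top}$ for some $S\in\cS$ and $T\in\cT$, then cyclicity of the trace gives
\[
  \inner{A',B'}\;=\;\Tr(A'\,T^{-1}B^\top S)\;=\;\inner{SA'T^{-1},B}.
\]
Since $A'\mapsto SA'T^{-1}$ is a bijection of every $\rho$-orbit $[A]$ onto itself, the character sum $\sum_{A'\in[A]}\chi(\inner{A',\cdot})$ takes the same value at $B$ and $B'$, so $B$ and $B'$ lie in the same block of $\widehat{\cO}$.

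For the reverse inclusion $\widehat{\cO}\leq\cO'$, which is the main difficulty, I use a Fourier-theoretic dimension count. For each $\rho$-orbit $[A]$, define $\varphi_{[A]}\colon\Fnm\to\C$ by $\varphi_{[A]}(B):=\sum_{A'\in[A]}\chi(\inner{A',B})$; these are the Fourier transforms of the $\rho$-orbit indicator functions under the identification $\widehat{\Fnm}\cong\Fnm$ through the trace form. Because the Fourier transform is a $\C$-linear bijection on $\C^{\Fnm}$, the $|\cO|$ functions $\varphi_{[A]}$ are linearly independent; and by the previous step they are constant on $\rho'$-orbits. Hence they inject into the $|\cO'|$-dimensional space of $\rho'$-invariant $\C$-valued functions on $\Fnm$, forcing $|\cO|\leq|\cO'|$. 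The symmetric argument yields $|\cO'|\leq|\cO|$, so $|\cO|=|\cO'|$ and the $\varphi_{[A]}$ form a basis of the $\rho'$-invariant functions. If $B$ and $B'$ lie in distinct $\rho'$-orbits, the $\rho'$-invariant function taking the value $1$ on the $\rho'$-orbit of $B$ and $0$ elsewhere separates them; writing it in the basis $\{\varphi_{[A]}\}$ forces some $\varphi_{[A]}$ to also separate $B$ and $B'$, so they lie in different blocks of $\widehat{\cO}$.

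The main obstacle is precisely this reverse inclusion. Whereas $\cO'\leq\widehat{\cO}$ follows from a single trace manipulation, the other direction requires knowing that the $|\cO|$ orbit character sums exhaust all $\rho'$-invariant functions on $\Fnm$. The Fourier bijection provides this, and the equality $|\cO|=|\cO'|$ comes out as a bonus.
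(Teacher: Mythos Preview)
Your proof is correct. Both you and the paper establish the easy inclusion $\cO'\leq\widehat{\cO}$ by the same trace manipulation, and both then finish by a cardinality match. The difference lies in how the reverse inclusion is obtained. The paper invokes two general facts about dual partitions from \cite{GL15Fourier}: that $\widehat{\widehat{\cP}}\leq\cP$ and $|\cP|\leq|\widehat{\cP}|$ for every partition~$\cP$; combining these with the symmetric inclusion $\cO\leq\widehat{\cO'}$ forces $|\cO'|=|\widehat{\cO}|$ and hence equality. You instead unpack these cited facts in the concrete situation at hand: the Fourier transform is a bijection on $\C^{\Fnm}$, so the orbit character sums $\varphi_{[A]}$ are linearly independent and (by the easy inclusion) lie in the $|\cO'|$-dimensional space of $\rho'$-invariant functions, giving $|\cO|\leq|\cO'|$; symmetry then yields equality and the basis/separation argument finishes. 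Your route is self-contained and makes the Fourier mechanism explicit, while the paper's is shorter given the external references; underneath, the two arguments are the same dimension count.
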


\begin{proof}
We show that $\cO'\leq\widehat{\cO}$.
Let $B,B'\in\Fnm$ be in the same orbit of~$\cO'$, hence $B'=SBT$ for some $S\in\cS'$ and $T\in\cT'$.
For any orbit~$O$ of~$\cO$ we have $S^\top O T^\top=O$ and therefore
\begin{eqnarray*}
  \sum_{A\in O}\chi(\inner{A,B'}) &=& \sum_{A\in O}\chi(\inner{A,SBT}) = \sum_{A\in O}\chi(\tr(AT^\top B^\top S^\top)) = \sum_{A\in O}\chi(\tr(S^\top A T^\top B^\top)) \\ &=&
  \sum_{A\in O}\chi(\tr(AB^\top))= \sum_{A\in O}\chi(\inner{A,B}).
\end{eqnarray*}
Hence $\cO'\leq\widehat{\cO}$.
By symmetry we also have $\cO\leq\widehat{\cO'}$ and thus $\cO\leq\widehat{\phantom{\big|}\hspace*{.65em}}\hspace*{-1.1em}\widehat{\cO}$.
Since by \cite[Thm.~2.4]{GL15Fourier}  the converse is true for any partition, we conclude $\cO=\widehat{\phantom{\big|}\hspace*{.65em}}\hspace*{-1.1em}\widehat{\cO}$.
Furthermore, any partition~$\cP$ satisfies $|\cP|\leq|\wcP|$, see again \cite[Thm.~2.4]{GL15Fourier},
and thus we obtain $|\cO'|\leq |\widehat{\cO'}|\leq|\cO|\leq|\widehat{\cO}|$, where the middle step follows from $\cO\leq\widehat{\cO'}$.
Now the relation $\cO'\leq\widehat{\cO}$ implies $\cO'=\widehat{\cO}$.
The rest follows from symmetry.
\end{proof}

The following is now immediate with Proposition~\ref{P-BasicsP}(\ref{p3}).

\begin{cor}\label{C-SD}
$\cP^\rk=\widehat{\cP^\rk}$ and $\cP^\rs=\widehat{\cP^\rs}$, that is, the rank partition and the row-space partition are self-dual.
\end{cor}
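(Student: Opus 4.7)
The plan is to derive Corollary~\ref{C-SD} as an immediate consequence of Proposition~\ref{P-OrbDual} combined with Proposition~\ref{P-BasicsP}(\ref{p3}). The key observation is that both $\cP^\rk$ and $\cP^\rs$ are already identified as orbit partitions of group actions of the form appearing in Proposition~\ref{P-OrbDual}, so self-duality will follow as soon as we verify that transposing the acting groups gives back the same groups.

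First I would handle the rank partition. By Proposition~\ref{P-BasicsP}(\ref{p3}), $\cP^\rk = \cO_3$ where $\cO_3$ is the orbit partition of
\[
  \rho_3:(\GL_n(\F)\times\GL_m(\F))\times\Fnm\longrightarrow\Fnm,\qquad (S,T,A)\longmapsto SAT^{-1}.
\]
Since $\GL_n(\F)$ and $\GL_m(\F)$ are each closed under transposition, the transposed groups $\cS'=\GL_n(\F)$ and $\cT'=\GL_m(\F)$ coincide with the originals, and hence the associated action $\rho_3'$ of Proposition~\ref{P-OrbDual} equals $\rho_3$, yielding $\cO_3'=\cO_3$. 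Proposition~\ref{P-OrbDual} then gives $\widehat{\cP^\rk}=\widehat{\cO_3}=\cO_3'=\cO_3=\cP^\rk$.

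Next I would handle the row-space partition in the same manner. By Proposition~\ref{P-BasicsP}(\ref{p3}), $\cP^\rs=\cO_1$ where $\cO_1$ is the orbit partition of $\rho_1(S,A)=SA$. This fits the framework of Proposition~\ref{P-OrbDual} by taking $\cS=\GL_n(\F)$ and $\cT=\{I_m\}$, both of which are closed under transposition. Hence $\rho_1'=\rho_1$, so $\cO_1'=\cO_1$, and Proposition~\ref{P-OrbDual} gives $\widehat{\cP^\rs}=\cO_1'=\cO_1=\cP^\rs$.

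There is essentially no obstacle in this argument: all of the work has already been done in Propositions~\ref{P-BasicsP} and~\ref{P-OrbDual}, and the corollary is only a matter of noting that the groups $\GL_n(\F)$, $\GL_m(\F)$, and $\{I_m\}$ are fixed under transposition. The only small point requiring care is to phrase the action $\rho_1$ in the two-sided form of Proposition~\ref{P-OrbDual} by inserting the trivial factor $\cT=\{I_m\}$ on the right.
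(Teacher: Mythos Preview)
Your proposal is correct and follows exactly the approach the paper intends: the corollary is stated immediately after Proposition~\ref{P-OrbDual} with the remark that it is ``immediate with Proposition~\ref{P-BasicsP}(\ref{p3}),'' and your argument simply makes explicit the observation that the groups $\GL_n(\F)$, $\GL_m(\F)$, and $\{I_m\}$ are closed under transposition. The only detail you added beyond what the paper writes is the explicit recasting of $\rho_1$ in the two-sided form with $\cT=\{I_m\}$, which is indeed the intended reading.
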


In order to describe the dual of the pivot partition we need the \textit{reverse} pivot indices.
They are defined by performing Gaussian elimination on a matrix from right to left.
This is most conveniently defined using the matrix
\begin{equation}\label{e-SMat}
  Z=\begin{pmatrix} & & &1\\ & &1& \\ &\iddots& &\\ 1& & & \end{pmatrix}\in\GL_m(\F).
\end{equation}
Obviously, right multiplication of a matrix~$A$ by~$Z$ reverses the order of the columns of~$A$.

\begin{defi}\label{D-RPivPart}
Let $A \in \Fnm$ be a matrix and set $\hat{A}:=AZ$. Let $=\piv(\hat{A})=(j_1,\ldots,j_r)\in\Pi$.
Then we define the \textbf{reverse pivot indices} of~$A$ as
\[
   \rpiv(A)=(m+1-j_r,\ldots,m+1-j_1).
\]
We call $\RRE(\hat{A})Z$ the  \textbf{reverse reduced row echelon form} of~$A$.
Its pivot indices are $\rpiv(A)$.
Matrices $A,B \in \Fnm$ are called \textbf{reverse-pivot-equivalent} if $\rpiv(A)=\rpiv(B)$.
The resulting equivalence classes form the \textbf{reverse-pivot partition} of $\Fnm$, denoted by $\cP^\rpiv$.
\end{defi}

Note that $\rpiv(A)\in\Pi$, which means that the indices are ordered increasingly.
They satisfy the reverse analogue of~\eqref{e-pivcol}, i.e., for all $j \in [m]$
\begin{equation}\label{e-rpivcol}
   j\in\rpiv(A)\Longleftrightarrow A_j \text{ is not in the span of }A_{j+1},\ldots,A_m.
\end{equation}

In analogy to Proposition~\ref{P-BasicsP}(\ref{p3}), $\cP^\rpiv$ is the orbit partition of the group action~$\rho_2$ if we replace~$\cU_m(\F)$ by the
group of lower triangular invertible matrices.
Proposition~\ref{P-OrbDual} provides us with the following simple fact.

\begin{cor}\label{C-PpivDual} We have
$\widehat{\cP^\piv}=\cP^\rpiv$ and $\widehat{\cP^\rpiv}=\cP^\piv$. In particular, the partitions $\cP^\piv$ and $\cP^\rpiv$ are reflexive, but not self-dual.
\end{cor}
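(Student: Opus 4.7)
The plan is to derive both identities from Proposition~\ref{P-OrbDual} by realizing $\cP^\piv$ and $\cP^\rpiv$ as orbit partitions of two transpose group actions. By Proposition~\ref{P-BasicsP}(\ref{p3}), $\cP^\piv$ is the orbit partition of $\rho_2$ for the acting group $\GL_n(\F)\times\cU_m(\F)$. The transposes of these two factors are $\GL_n(\F)^\top=\GL_n(\F)$ and $\cU_m(\F)^\top$, the group of invertible lower triangular $m\times m$ matrices. According to Proposition~\ref{P-OrbDual}, the dual partition $\widehat{\cP^\piv}$ coincides with the orbit partition of the action of $\GL_n(\F)\times\cU_m(\F)^\top$ on $\Fnm$ via the same formula $(S,T,A)\mapsto SAT^{-1}$, so the task reduces to identifying this orbit partition with $\cP^\rpiv$.

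For that identification I would use the reversal matrix $Z$ from~\eqref{e-SMat}, which is an involution and whose conjugation action is a group isomorphism $\cU_m(\F)^\top\to\cU_m(\F)$, $T\mapsto ZTZ$ (a routine check: simultaneous reversal of rows and columns swaps the two triangular shapes). For $A,B\in\Fnm$, the relation $B=SAT^{-1}$ with $T\in\cU_m(\F)^\top$ then rewrites as
\[
   BZ=SAZ\,(ZTZ)^{-1}\text{ with }ZTZ\in\cU_m(\F),
\]
which by Proposition~\ref{P-BasicsP}(\ref{p3}) is equivalent to $\piv(AZ)=\piv(BZ)$; by Definition~\ref{D-RPivPart} the latter unpacks to $\rpiv(A)=\rpiv(B)$. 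Hence the orbit partition in question coincides with $\cP^\rpiv$, giving $\widehat{\cP^\piv}=\cP^\rpiv$. The symmetric identity $\widehat{\cP^\rpiv}=\cP^\piv$ and the reflexivity of both partitions then follow at once from Proposition~\ref{P-OrbDual}.

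Finally, to rule out self-duality for $m\geq 2$ it is enough to exhibit a single pair of matrices that is pivot-equivalent but not reverse-pivot-equivalent. Any pair supported in the top-left $1\times 2$ block does the job; for example the matrices in $\Fnm$ whose top rows are $(1,0,0,\ldots,0)$ and $(1,1,0,\ldots,0)$ (with all other entries zero) both have $\piv=(1)$, but their reverse pivots are $(1)$ and $(2)$ respectively, so $\cP^\piv\neq\cP^\rpiv$ and therefore neither partition can be self-dual.

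The only point requiring genuine care is the conjugation identity $Z\,\cU_m(\F)^\top\, Z=\cU_m(\F)$ together with the accompanying manipulation of $SAT^{-1}$ after right-multiplication by $Z$; everything else is bookkeeping on top of Propositions~\ref{P-BasicsP}(\ref{p3}) and~\ref{P-OrbDual}.
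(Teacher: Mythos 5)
Your proof is correct and follows essentially the same route as the paper: the paper also obtains the corollary by noting that $\cP^\rpiv$ is the orbit partition of $\rho_2$ with $\cU_m(\F)$ replaced by the lower-triangular group (i.e.\ the transpose of $\cU_m(\F)$) and then invoking Proposition~\ref{P-OrbDual}. Your $Z$-conjugation argument just fills in the identification that the paper asserts ``in analogy to Proposition~\ref{P-BasicsP}(3)'', and your explicit example for $m\geq 2$ justifies the non-self-duality claim, which the paper leaves unproved; both additions are correct.
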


The above tells us that the pivot indices and the reverse pivot indices encode partitions that are mutually dual with respect to the trace inner product
$\inner{\,\cdot\,,\,\cdot\,}$ on $\Fnm$ as in~\eqref{e-TraceProd}.
In the remainder of this section we show how these indices reflect duality of subspaces in~$\F^m$ with respect to the standard inner product on~$\F^m$.
For $V\in\cL$ denote by~$V^\perp$ its orthogonal with respect to the standard inner product.
Furthermore, thanks to the uniqueness of the reduced row echelon form we may extend both the pivot partition and the reverse pivot partition to the lattice~$\cL$
of all subspaces of~$\F^m$: define $\piv(V)=\piv(A)$, where $A\in\F^{r\times m}$ is any matrix of full rank with row space~$V$, and define $\rpiv(V)$ similarly.
We need the following notion.

\begin{defi}\label{D-PivCompl}
Let $\lambda=(\lambda_1,\ldots,\lambda_r)\in\Pi$. We denote by $\widehat{\lambda}\in\Pi$ the \textbf{dual pivot list} of $\lambda$, that is,
$\widehat{\lambda}=(\hat{\lambda}_1,\ldots,\hat{\lambda}_{m-r})\in\Pi$ such that
$\{\lambda_1,\ldots,\lambda_r,\hat{\lambda}_1,\ldots,\hat{\lambda}_{m-r}\}=[m]$.
\end{defi}

Now we can show that for any subspace $V\in\cL$ the list of reverse pivot indices of the dual subspace~$V^{\perp}$ is the dual of the list of pivot indices of~$V$.
We will need this result later in Section~\ref{S-PivotExtr}.
Even though this is an entirely basic result from Linear Algebra, we were not able to find it in the literature and thus provide a proof.

\begin{prop}\label{P-VperpPiv}
Let $V\in\cL$ and $\piv(V)=\lambda$.
Then $\rpiv(V^\perp)=\widehat{\lambda}$.
\end{prop}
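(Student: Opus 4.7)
The plan is to prove the pointwise equivalence
\[
   k\in\rpiv(V^\perp)\iff k\notin\piv(V),\qquad k\in[m],
\]
which immediately yields $\rpiv(V^\perp)=[m]\setminus\lambda=\widehat{\lambda}$.

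First I would record the following standard characterization of the pivot set of a subspace $U\le\F^m$: an index $k\in[m]$ lies in $\piv(U)$ if and only if there exists a vector $u\in U$ with $u_k=1$ and $u_j=0$ for all $j<k$. The ``only if'' direction is immediate from the $k$-th row of a basis of $U$ in reduced row echelon form. For the ``if'' direction, write $u=\sum_i\beta_iw_i$ in the RREF basis $w_1,\ldots,w_r$ of $U$ with pivot indices $\lambda_1<\ldots<\lambda_r$; letting $i_0$ be the smallest index with $\beta_{i_0}\neq 0$, the RREF structure forces $u_j=0$ for all $j<\lambda_{i_0}$ and $u_{\lambda_{i_0}}=\beta_{i_0}\neq 0$. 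Combined with $u_k=1$ and $u_j=0$ for $j<k$, this forces $\lambda_{i_0}=k$, so $k\in\piv(U)$.

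Next I would take any full-rank matrix $B\in\F^{(m-r)\times m}$ with $\rs(B)=V^\perp$, and use the elementary fact $\ker B=(V^\perp)^\perp=V$. By~\eqref{e-rpivcol}, $k\notin\rpiv(V^\perp)$ if and only if the column $B_k$ lies in the span of $B_{k+1},\ldots,B_m$, equivalently if and only if there exist scalars $\alpha_{k+1},\ldots,\alpha_m\in\F$ such that
\[
   B\Bigl(e_k-\sum_{j>k}\alpha_je_j\Bigr)=0.
\]
The vector $v:=e_k-\sum_{j>k}\alpha_je_j$ satisfies $v_k=1$ and $v_j=0$ for $j<k$, so the existence of such a $v\in\ker B=V$ is, by the characterization above, equivalent to $k\in\piv(V)=\lambda$.

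Combining these two observations gives the desired equivalence and hence the theorem. The only nontrivial step is the ``if'' half of the pivot characterization, but it reduces to locating the leading pivot in the RREF expansion of $u$; the remainder of the argument is a straightforward linear-algebra manipulation tied together by the duality $\ker B=V$ between a basis matrix of $V^\perp$ and $V$ itself.
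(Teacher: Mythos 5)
Your proof is correct, but it takes a genuinely different route from the paper's. The paper argues constructively: starting from an RREF generator matrix of $V$, it sorts the pivot columns to the front via a permutation matrix $P$, writes down the parity-check-type generator matrix $((-B)^\top\mid I_{m-r})P^{-1}$ of $V^\perp$, and then verifies through the column criterion \eqref{e-rpivcol} and the structural zero pattern of $B$ that this explicit matrix has reverse pivot list exactly $\widehat{\lambda}$. You instead prove the index-wise complementarity $k\in\rpiv(V^\perp)\Leftrightarrow k\notin\piv(V)$ directly, by combining an element-wise characterization of pivot membership ($k\in\piv(V)$ iff $V$ contains a vector with a $1$ in position $k$ and zeros to its left) with the duality $\ker B=(V^\perp)^\perp=V$ for any full-rank $B$ with $\rs(B)=V^\perp$; the relation \eqref{e-rpivcol} then converts ``$k\notin\rpiv(V^\perp)$'' into the existence of precisely such a vector in $\ker B=V$. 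Both arguments are elementary and of comparable length; yours dispenses with the explicit construction and the bookkeeping with $P$ and the zeros of $B$, and isolates a reusable lemma (the leading-one criterion for pivots of a subspace), whereas the paper's proof has the side benefit of exhibiting an explicit reverse-RREF generator matrix of $V^\perp$. The only cosmetic point in your write-up is the silent identification of the row vectors spanning $V$ with the column vectors in $\ker B$ under the standard inner product, which is harmless.
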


Let us first comment on this result.
Note that $\piv(V)$ may be regarded as an \textit{information set} (of minimal cardinality) of the code~$V$ in the classical sense (see \cite[p.~4]{HP03}).
More precisely, it is the lexicographically first one among all information sets of~$V$.
On the other hand, $\rpiv(V)$ is the first information set of~$V$ with respect to the reverse lexicographic order (that is, starting from the right).
Hence the above result tells us that the complement of the lexicographically first information set of~$V$ is the
reverse lexicographically first information set of~$V^\perp$.
In this sense, Proposition~\ref{P-VperpPiv} may be regarded as a refinement of \cite[Thm.~1.6.2]{HP03}.

\begin{proof}
Throughout this proof, for any matrix $M\in\F^{s\times m}$ we denote by $M_t$ the $t^{\rm th}$ column of~$M$.
Furthermore, we let $e_1,\ldots,e_m$ denote the standard basis vectors in~$\F^m$ and also use $e_1,\ldots,e_{m-r}$ as the standard basis vectors in~$\F^{m-r}$.
Let $\lambda=(\lambda_1,\ldots,\lambda_r)$ and $\widehat{\lambda}=(\hat{\lambda}_1,\ldots,\hat{\lambda}_{m-r})$.

Let $\dim(V)=r$ and let $A=(A_{ij})\in\F^{r\times m}$ be in RREF (reduced row echelon form) and such that $\rs(A)=V$.
Define the permutation matrix $P=(e_{\lambda_1},\ldots,e_{\lambda_r},e_{\hat{\lambda}_1},\ldots,e_{\hat{\lambda}_{m-r}})\in\GL_m(\F)$.
Then
\begin{equation}\label{e-bab}
   AP=(I_r\mid B),\text{ where $B=(B_{\alpha,\beta})=(A_{\alpha,\hat{\lambda}_\beta})\in\F^{r\times(m-r)}$ satisfies  $B_{\alpha,\beta}=0$ whenever $\hat{\lambda}_\beta<\lambda_\alpha.$}
\end{equation}
In other words, the pivot columns have been sorted to the front and the remaining columns appear in their original order in the matrix~$B$.
It follows that
\[
     V^\perp=\rs(M),\text{ where } M=((-B)^\top\mid I_{m-r})P^{-1}
\]
We show now that~$M$ is in reverse reduced row-echelon form with $\rpiv(M)=\widehat{\lambda}$ (see Definition~\ref{D-RPivPart}).

Condition~\eqref{e-bab} implies for the columns of $C:=(-B)^\top\in\F^{(m-r)\times r}$
\begin{equation}\label{e-CCol}
    C_{\alpha}\in\text{span}\{e_\beta\mid \hat{\lambda}_\beta> \lambda_\alpha\}.
\end{equation}
Hence the columns $M_t$ are given by
\[
   M_t=\left\{\begin{array}{cl} C_\alpha,&\text{if }t=\lambda_\alpha\text{ for some }\alpha=1,\ldots,r\\ e_\beta,&\text{if }t=\hat{\lambda}_\beta\text{ for some }\beta=1,\ldots,m-r.\end{array}\right.
\]
Thus~\eqref{e-CCol} reads as $M_{\lambda_\alpha}\in\text{span}\{M_{\hat{\lambda}_\beta}\mid\hat{\lambda}_\beta>\lambda_\alpha\}$, and this means
that~$\lambda_\alpha$ is not a reverse pivot index of~$M$; see~\eqref{e-rpivcol}.
As this is true for all $\alpha\in\{1,\ldots,r\}$ and $M$ has rank~$m-r$, we arrive at $\rpiv(V^\perp)=\rpiv(M)=(\hat{\lambda}_1,\ldots,\hat{\lambda}_{m-r})=\widehat{\lambda}$.
\end{proof}

\section{The Krawtchouk Coefficients of the Row-Space Partition}
\label{sec:KKrs}
In this section we explicitly determine the Krawtchouk coefficients of the row-space partition.
Recall that~$\cL$ denotes the lattice of all subspaces of~$\F^m$ and that $m\leq n$.

\begin{defi}\label{D-CodeEnum}
Let $\cC\leq\Fnm$ be a code. For $U\in\cL$ define
$\cC(U)=\{A\in\cC\mid \rs(A)\leq U\}$. Then $\cC(U)$ is a code as well (i.e., it is a linear subspace of $\cC$).
\end{defi}

Note that we consider two kinds of dual spaces:  the dual~$\cC^\perp$ of a matrix code $\cC\leq\Fnm$ with respect to the trace product (see Definition~\ref{D-Code}) and
the dual~$U^\perp$ of a subspace $U\in\cL$ with respect to the standard inner product on~$\F^m$.
These two kinds of dual spaces are related as follows.

\begin{lemma}[\mbox{\hspace*{-.4em}\cite[Lem.~28]{Ra16a}}]\label{P-CClCount}
Let $U\in\cL$ with $\dim U=u$. Then
\[
  \big|\cC(U)\big|=\frac{|\cC|}{q^{n(m-u)}}\big|\cC^\perp(U^\perp)\big|.
\]
\end{lemma}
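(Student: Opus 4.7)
The plan is to reduce the identity to a character-theoretic counting exercise, in the spirit of Section~\ref{sec:partitions}. For a subspace $U\in\cL$ I introduce the auxiliary space
$$M(U):=\{A\in\Fnm\mid \rs(A)\leq U\},$$
which is an $\F$-subspace of $\Fnm$ of cardinality $q^{n\dim U}$ (parametrized by choosing $n$ rows freely from $U$). Observe that $\cC(U)=\cC\cap M(U)$ and, analogously, $\cC^\perp(U^\perp)=\cC^\perp\cap M(U^\perp)$, so the identity to be proved is really a statement relating the sizes of the intersections of $\cC$ and $\cC^\perp$ with the dual pair $M(U), M(U^\perp)$.

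The key preliminary step is to identify $M(U)^\perp$ with $M(U^\perp)$ under the trace product on $\Fnm$. For $A\in M(U)$ and $B\in M(U^\perp)$ one has $\inner{A,B}=\Tr(AB^\top)=\sum_{i=1}^n A_iB_i^\top$, where $A_i,B_i$ denote the $i$-th rows of $A,B$. Since $A_i\in U$ and $B_i\in U^\perp$ with respect to the standard inner product on $\F^m$, each summand vanishes, giving $M(U^\perp)\subseteq M(U)^\perp$. A dimension count $n(m-u)=nm-nu$ upgrades this inclusion to equality.

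Now I invoke character orthogonality. Since $M(U)=M(U^\perp)^\perp$, for each $A\in\Fnm$ we have
$$\mathbf{1}[A\in M(U)]=\frac{1}{|M(U^\perp)|}\sum_{B\in M(U^\perp)}\chi(\inner{A,B}),$$
where $\chi$ is any fixed non-trivial character of $\F$. Summing over $A\in\cC$ and exchanging the order of summation yields
$$|\cC(U)|=\frac{1}{q^{n(m-u)}}\sum_{B\in M(U^\perp)}\sum_{A\in\cC}\chi(\inner{A,B}).$$
The inner sum equals $|\cC|$ when $B\in\cC^\perp$ and vanishes otherwise (a standard consequence of character orthogonality), so only terms with $B\in\cC^\perp\cap M(U^\perp)=\cC^\perp(U^\perp)$ survive, producing the claimed identity.

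The argument contains no serious obstacle: the only nontrivial ingredient is the identification $M(U)^\perp=M(U^\perp)$, which follows immediately once the trace product is expanded row by row; everything else is a routine character-sum swap together with the definitions of $\cC(U)$ and $\cC^\perp(U^\perp)$.
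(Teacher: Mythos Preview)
Your argument is correct. The paper does not supply its own proof of this lemma; it simply cites \cite[Lem.~28]{Ra16a}. Your character-sum approach is in fact the natural one in the framework of Section~\ref{sec:partitions}, and the key identity $M(U)^\perp=M(U^\perp)$ that you establish is exactly what the paper invokes (as \cite[Lem.~27]{Ra16a}) later in the proof of Theorem~\ref{T-KrawRS}. So your proof is self-contained and entirely in line with the techniques used elsewhere in the paper.
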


Now we obtain the following explicit formulas for the Krawtchouk coefficients of~$\cP^\rs$.

\begin{theo} \label{T-KrawRS}
For all $U,V \in \cL$ we have
\[
    K(\cP^\rs;U,V)=\sum_{t=0}^{m} (-1)^{\dim(U)-t} \, q^{nt+ \binom{\dim(U)-t}{2}}\GaussianD{\dim(U \cap V^\perp)}{t}.
\]
\end{theo}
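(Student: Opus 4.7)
The plan is to compute the Krawtchouk coefficient
\[
K(\cP^\rs;U,V) = \sum_{A:\,\rs(A)=U} \chi(\inner{A,B}),\qquad \rs(B)=V,
\]
by Möbius inversion on the lattice~$\cL$ (restricted to subspaces of~$U$). The self-duality of~$\cP^\rs$ from Corollary~\ref{C-SD} lets us index the dual partition by~$V\in\cL$, and Remark~\ref{R-ChiInd} guarantees independence of the choice of~$\chi$.

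First I would compute the ``cumulative'' sum
\[
f(W,V) := \sum_{A\in\Fnm,\,\rs(A)\le W} \chi(\inner{A,B}).
\]
Parametrize the set $\{A\in\Fnm: \rs(A)\le W\}$ as $A=XG$ where $G\in\F^{w\times m}$ is a fixed basis matrix of $W$ (with $w=\dim W$) and $X$ ranges over $\F^{n\times w}$. Then $\inner{A,B}=\Tr(XGB^\top)$, and summing over $X$ using orthogonality of characters yields
\[
f(W,V) = \begin{cases} q^{nw}, & \text{if } GB^\top = 0,\\ 0, & \text{otherwise.} \end{cases}
\]
The condition $GB^\top=0$ is precisely $W\le V^\perp$ (equivalently $V\le W^\perp$), using that $V=\rs(B)$. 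So $f(W,V)=q^{nw}\mathbb{1}[W\le V^\perp]$, independent of the particular choice of~$B$ in $P^\rs_V$. This also confirms a posteriori that~$K(\cP^\rs;U,V)$ is well-defined.

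Next, since $f(U,V)=\sum_{W\le U} K(\cP^\rs;W,V)$ and the Möbius function of the subspace lattice is
\[
\mu(W,U) = (-1)^{u-w} q^{\binom{u-w}{2}},\qquad u=\dim U,
\]
Möbius inversion gives
\[
K(\cP^\rs;U,V) = \sum_{W\le U} \mu(W,U)\, f(W,V) = \sum_{\substack{W\le U\\ W\le V^\perp}} (-1)^{u-w} q^{\binom{u-w}{2}+nw}.
\]
The joint condition $W\le U$ and $W\le V^\perp$ is just $W\le U\cap V^\perp$. Grouping the sum according to $t=\dim W$ and using that the number of $t$-dimensional subspaces of $U\cap V^\perp$ equals $\Gaussian{\dim(U\cap V^\perp)}{t}$, one obtains
\[
K(\cP^\rs;U,V) = \sum_{t=0}^{\dim(U\cap V^\perp)} (-1)^{u-t}\, q^{nt+\binom{u-t}{2}} \GaussianD{\dim(U\cap V^\perp)}{t},
\]
and the upper limit can be extended to~$m$ since the Gaussian binomial vanishes for $t>\dim(U\cap V^\perp)$. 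This is exactly the claimed formula.

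I do not anticipate a serious obstacle: the computation is entirely standard once the character orthogonality step and the Möbius inversion are set up. The only mildly delicate point is the translation between orthogonality in~$\Fnm$ (trace product) and orthogonality in~$\F^m$ (standard inner product), which is handled cleanly by the parametrization $A=XG$.
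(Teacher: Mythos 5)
Your proposal is correct and follows essentially the same route as the paper: compute the cumulative character sum over $\{A:\rs(A)\le W\}$, identify it as $q^{n\dim W}$ exactly when $W\le V^\perp$, and then apply M\"obius inversion on the subspace lattice with $\mu(W,U)=(-1)^{u-w}q^{\binom{u-w}{2}}$, finally grouping subspaces of $U\cap V^\perp$ by dimension. The only (minor) difference is that you evaluate the cumulative sum directly via the parametrization $A=XG$ and character orthogonality, whereas the paper invokes the facts $\Fnm(U)^\perp=\Fnm(U^\perp)$ and $|\Fnm(U)|=q^{n\dim U}$ from the literature; both yield the same key identity.
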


\begin{proof}
Fix a subspace $V\in\cL$ and let $M \in \Fnm$ be any matrix with $\rs(M)=V$.
Fix any non-trivial character $\chi$ of $\F$. Let
$f,g: \cL \longrightarrow \C$ be the functions defined, for all $U\in\cL$, by
\[
     f(U):=\sum_{\substack{ N \in \Fnm \\ \rs(N)=U}} \chi (\Tr(MN^\top)),\qquad
    g(U):=\sum_{U' \le U} f(U').
\]
Therefore $f(U)=K(\cP^\rs;U,V)$ for all $U \in \mL$; see Definition~\ref{D-Kraw}.
By Definition~\ref{D-CodeEnum} we have $\Fnm(U)=\{N\in\Fnm\mid \rs(N)\leq U\}$.
It follows that $\Fnm(U)^\perp=\Fnm(U^\perp)$ by \cite[Lem.~27]{Ra16a} and that $|\Fnm(U)| = q^{n\dim(U)}$ by Lemma~\ref{P-CClCount}.
Thus for all $U \in \mL$ we have
\[
g(U) = \!\!\!\sum_{\substack{ N \in \Fnm \\ \rs(N) \le U}}\!\!\!\!\! \chi (\Tr(MN^\top))
       = \!\!\!\sum_{N \in \Fnm(U)} \!\!\!\!\! \chi (\Tr(MN^\top))
         = \left\{ \begin{array}{cl}
         q^{n\dim(U)} & \mbox{ if } M \in \Fnm(U^\perp), \\ 0 & \mbox{ otherwise,}\end{array}  \right.
\]
where the last equality follows from the orthogonality relations of characters.
Denote by $\mu_\cL$ the M\"obius function of the lattice~$\cL$.
From~\cite[Ex.~3.10.2]{Sta97} we know
\begin{equation}\label{e-MoebL}
    \mu(W,V)=\left\{\begin{array}{cl}(-1)^{v-w}q^{\binom{v-w}{2}}&\text{if }W\leq V,\\ 0&\text{otherwise,}\end{array}\right.
\end{equation}
where $\dim W=w$ and $\dim V=v$.
Using that $M\in\Fnm(U'^\perp)$ iff $U'\leq V^\perp$, we thus obtain from M\"obius inversion
\[
    f(U)  = \sum_{U' \le U} g(U') \ \mu_\cL(U',U) = \sum_{U' \le U \cap V^\perp} q^{n\dim(U')} (-1)^{u-\dim U'}q^{\binom{u-\dim U'}{2}}
\]
 for all subspaces $U\in\cL$ with $\dim(U)=u$.
 As a consequence,
 \[
   f(U) = \sum_{t=0}^m \sum_{\substack{U' \le U \cap V^\perp \\ \dim(U')=t}} q^{nt}(-1)^{u-t} q^{\binom{u-t}{2}} =
      \sum_{t=0}^m  (-1)^{u-t} q^{nt+ \binom{u-t}{2}}\GaussianD{\dim(U \cap V^\perp)}{t}.
\]
 This gives the desired formula.
 \end{proof}

 Combining Theorem \ref{th:MWI} with Theorem \ref{T-KrawRS} one immediately obtains MacWilliams-type identities
 for the row-space partition.

 \begin{cor} \label{newMW}
 Let $\cC \le \Fnm$ be a code. Then for all $V\in\cL$ we have
 \[
      \cP^\rs(\cC^\perp,V) = \frac{1}{|\cC|}\sum_{U\in\cL} \cP^\rs(\cC,U)
        \sum_{t=0}^{m} (-1)^{\dim(V)-t} \, q^{nt+ \binom{\dim(V)-t}{2}}\GaussianD{\dim(V \cap U^\perp)}{t}.
\]
 \end{cor}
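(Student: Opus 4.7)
The corollary is a direct consequence of the two main results that immediately precede it, so the proposed proof is essentially a one-line combination followed by a substitution. The plan is to invoke Theorem~\ref{th:MWI} specialized to the row-space partition, recognize via Corollary~\ref{C-SD} that $\cP^\rs$ is self-dual so that the primal and dual partitions in the MacWilliams identity coincide, and then substitute the explicit Krawtchouk coefficients from Theorem~\ref{T-KrawRS}.

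More concretely, first I would set $\cQ = \cP^\rs$ in the general MacWilliams identity of Theorem~\ref{th:MWI}. By Corollary~\ref{C-SD}, we have $\widehat{\cP^\rs} = \cP^\rs$, so the partition denoted $\cP$ in the statement of Theorem~\ref{th:MWI} is again $\cP^\rs$, and the index sets $\cI$ and $\cJ$ are both identified with the lattice $\cL$ of subspaces of $\F^m$. The theorem therefore yields, for every $V \in \cL$,
\[
\cP^\rs(\cC^\perp, V) = \frac{1}{|\cC|} \sum_{U \in \cL} K(\cP^\rs; V, U)\, \cP^\rs(\cC, U).
\]

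Next I would substitute the explicit formula for the Krawtchouk coefficients given by Theorem~\ref{T-KrawRS}, namely
\[
K(\cP^\rs; V, U) = \sum_{t=0}^{m} (-1)^{\dim(V)-t}\, q^{nt + \binom{\dim(V)-t}{2}} \GaussianD{\dim(V \cap U^\perp)}{t},
\]
and simply carry the inner sum through. This yields the claimed identity verbatim.

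There is really no substantive obstacle here: the only point one must be careful about is matching the index convention of Theorem~\ref{th:MWI} (where the first argument of $K$ corresponds to the primal partition) with Theorem~\ref{T-KrawRS}, and invoking the self-duality of $\cP^\rs$ to justify that the $\cP^\rs$-distribution of $\cC$ alone determines the $\cP^\rs$-distribution of $\cC^\perp$. Both points have already been established in the preceding sections, so the proof is a short bookkeeping exercise rather than a computation of substance.
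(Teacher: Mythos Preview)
Your proposal is correct and matches the paper's approach exactly: the paper simply states that the corollary follows by combining Theorem~\ref{th:MWI} with Theorem~\ref{T-KrawRS}, which is precisely the substitution you carry out (with the self-duality from Corollary~\ref{C-SD} implicitly used to identify the primal and dual partitions).
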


In the remainder of this section we provide different relations between the row-space partition distribution of a code~$\cC$ and that of~$\cC^\perp$.

\begin{prop}\label{C-WWperp}
Let $\cC\leq\Fnm$ be a matrix code.  Then for all $U\in\cL$ we have
\[
   \sum_{V\leq U}\cP^\rs(\cC,V)=\frac{|\cC|}{q^{n\dim U^\perp}}\sum_{W\leq U^\perp}\cP^\rs(\cC^\perp,W).
\]
\end{prop}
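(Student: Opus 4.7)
The plan is to observe that, after unpacking definitions, this proposition is essentially a direct reformulation of Lemma~\ref{P-CClCount}.

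First I would note that, by Definition~\ref{D-PartDistr} and Definition~\ref{D-RSP}, the quantity $\cP^\rs(\cC,V)$ counts exactly the matrices $A\in\cC$ with $\rs(A)=V$. Since every matrix in $\cC(U)$ has a well-defined row space contained in $U$, partitioning by row space gives
\[
   \sum_{V\leq U}\cP^\rs(\cC,V) \;=\; \bigl|\cC(U)\bigr|,
\]
and applying the same identity to $\cC^\perp$ and $U^\perp$ yields
\[
   \sum_{W\leq U^\perp}\cP^\rs(\cC^\perp,W) \;=\; \bigl|\cC^\perp(U^\perp)\bigr|.
\]
With $u:=\dim U$, we have $\dim U^\perp = m-u$, so that $q^{n\dim U^\perp} = q^{n(m-u)}$, and the claimed identity reduces to
\[
   \bigl|\cC(U)\bigr| \;=\; \frac{|\cC|}{q^{n(m-u)}}\,\bigl|\cC^\perp(U^\perp)\bigr|.
\]

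This is exactly the statement of Lemma~\ref{P-CClCount}, so the proof is complete once the two double sums are rewritten in this way. There is no substantive obstacle here; the only step requiring a moment's care is ensuring that the partitioning of $\cC(U)$ by exact row space is bijective, which is immediate from the fact that $\cP^\rs$ is a partition of $\Fnm$ and that $\cC(U)$ is precisely the union of those blocks $P^\rs_V$ with $V\leq U$ intersected with $\cC$.
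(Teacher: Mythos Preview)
Your proposal is correct and follows essentially the same approach as the paper's proof: both recognize that $\sum_{V\leq U}\cP^\rs(\cC,V)=|\cC(U)|$ (and similarly for $\cC^\perp$ and $U^\perp$), and then invoke Lemma~\ref{P-CClCount} to conclude. The paper's version is simply more terse, writing it as a single chain of equalities.
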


\begin{proof}
Using Lemma \ref{P-CClCount} we obtain
\[
 \sum_{V\leq U}\cP^\rs(\cC,V)=|\cC(U)|=\frac{|\cC|}{q^{n\dim U^\perp}}|\cC^\perp(U^\perp)|=\frac{|\cC|}{q^{n\dim U^\perp}}\sum_{W\leq U^\perp}\cP^\rs(\cC^\perp,W).
 \qedhere
\]
\end{proof}

The last proposition gives $N$ linear relations, where $N=|\cL|$.
They may be written as a linear system as follows.
Define the row vectors
\[
    \cP^\rs(\cC)=\big(\cP^\rs(\cC,V)\big)_{V\in\cL},\quad \cP^\rs(\cC^\perp)=\big(\cP^\rs(\cC^\perp,V)\big)_{V\in\cL}\in\C^N,
\]
which describe the partition distribution of the codes $\cC$ and $\cC^\perp$, respectively; see Definition~\ref{D-PartDistr}.
Then Proposition~\ref{C-WWperp} reads as
\[
   \cP^\rs(\cC) \cdot A=|\cC|\cdot \cP^\rs(\cC^\perp) \cdot B\cdot D,
\]
where $A,B,D\in\C^{N\times N}$ are defined as
\[
  A(V,U):=\left\{\begin{array}{cc}1&\text{if }V\leq U \\ 0 &\text{otherwise}\end{array}\right.,\qquad
  B(V,U):=A(V,U^\perp), \quad D:=\text{diag}{\left(1/q^{n\dim(U^\perp)}\right)}_{U\in\cL}.
\]
The matrix~$A$ may be regarded as the $\zeta$-function of the subspace lattice~$\cL$.
Thus its inverse is the M{\"o}bius function, which shows that~$A$ is invertible.
The same is true for the matrix~$B$. Therefore we have
\begin{equation}\label{e-MacWId}
   \cP^\rs(\cC^\perp)=\frac{1}{|\cC|} \cdot \cP^\rs(\cC) \cdot M, \qquad \mbox{where } M:=A \cdot \text{diag}{\left(q^{n\dim(U^\perp)}\right)}_{U\in\cL} \cdot B^{-1}.
\end{equation}
This provides us with a different method to compute the enumerators $\cP^\rs(\cC^\perp,U)$ from the enumerators $\cP^\rs(\cC,V)$ for $V\in\cL$.
The entries of the matrix $M\in\C^{N\times N}$ are the Krawtchouk coefficients of the row-space partition~$\cP^\rs$.
This follows, for instance, from \cite[Thm.~2.7]{GL15Fourier}.

We close this section by presenting  the binomial moments of the row-space distribution.
They consist of $m+1$ identities and form the analogue to those for the Hamming weight in~$\F^n$ (see \cite[(M$_2$) on p.~257]{HP03}) and for the rank weight
(see \cite[Prop.~4]{GaYa08} for $\F_{q^m}$-linear rank-metric codes and~\cite[Thm.~31]{Ra16a} for $\F_q$-linear rank-metric codes).

\begin{prop}\label{P-WVCount}
Let $\cC\leq\Fnm$ be a matrix code. Then for all integers $0\leq\nu\leq m$ we have
\[
  \sum_{V\in\cL}\GaussianD{m-\dim V}{\nu}\cP^\rs(\cC,V)=\frac{|\cC|}{q^{n\nu}}\sum_{W\in\cL}\GaussianD{m-\dim W}{m-\nu}\cP^\rs(\cC^\perp,W).
\]
\end{prop}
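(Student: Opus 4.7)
The plan is to derive the claim directly from Proposition~\ref{C-WWperp} by summing that identity over all subspaces $U\in\cL$ of a fixed dimension $m-\nu$, and then interchanging the order of summation on each side.

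More concretely, I would fix $\nu\in\{0,\ldots,m\}$ and sum the equation of Proposition~\ref{C-WWperp} over all $U\in\cL$ with $\dim U=m-\nu$. This yields
\[
\sum_{\dim U=m-\nu}\ \sum_{V\leq U}\cP^\rs(\cC,V)
=\frac{|\cC|}{q^{n\nu}}\sum_{\dim U=m-\nu}\ \sum_{W\leq U^\perp}\cP^\rs(\cC^\perp,W),
\]
where I used that $\dim U^\perp=\nu$ for every such $U$, so the factor $q^{-n\dim U^\perp}=q^{-n\nu}$ can be pulled out.

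Next, I would exchange the two summations on each side and count, for a fixed $V$ (resp.\ $W$), the number of $U$ that contribute. On the left, the inner count is the number of $(m-\nu)$-dimensional subspaces of $\F^m$ containing $V$, which equals the number of $(m-\nu-\dim V)$-dimensional subspaces of $\F^m/V$, namely $\Gaussian{m-\dim V}{m-\nu-\dim V}$; applying the symmetry $\Gaussian{a}{b}=\Gaussian{a}{a-b}$ rewrites this as $\Gaussian{m-\dim V}{\nu}$. On the right, the inner count is the number of $(m-\nu)$-dimensional subspaces $U$ with $W\leq U^\perp$, i.e.\ $U\leq W^\perp$; since $\dim W^\perp=m-\dim W$, this count equals $\Gaussian{m-\dim W}{m-\nu}$. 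Assembling both sides yields the stated identity.

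The proof is essentially a double-counting/Fubini argument on top of Proposition~\ref{C-WWperp}, and I do not anticipate any genuine obstacle: the only slightly delicate point is the correct application of the $q$-binomial symmetry to normalize the exponent to $\nu$ on the left-hand side, so that the identity is symmetric in form between the code and its dual (with indices $\nu$ and $m-\nu$ on opposite sides), matching the classical shape of binomial-moment MacWilliams-type identities cited from \cite[(M$_2$)]{HP03} and \cite[Thm.~31]{Ra16a}.
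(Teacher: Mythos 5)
Your argument is correct. Summing Proposition~\ref{C-WWperp} over all $U\in\cL$ with $\dim U=m-\nu$ (where indeed $\dim U^\perp=\nu$ for every such $U$, so the factor $q^{-n\nu}$ is constant) and then interchanging summations is legitimate, and your two inner counts are right: the number of $(m-\nu)$-dimensional subspaces containing a fixed $V$ is $\Gaussian{m-\dim V}{m-\nu-\dim V}=\Gaussian{m-\dim V}{\nu}$, and the number of $(m-\nu)$-dimensional $U$ with $W\leq U^\perp$, equivalently $U\leq W^\perp$, is $\Gaussian{m-\dim W}{m-\nu}$; in both cases the Gaussian coefficient vanishes exactly when the count is zero, so no case distinction is needed.

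Your route differs mildly from the paper's. The paper first rewrites each side as $\sum_{\dim U=m-\nu}|\cC(U)|$ (resp.\ $\sum_{\dim U=\nu}|\cC^\perp(U)|$) by passing through the rank distribution and invoking an identity cited from \cite[Eq.~(8)]{GoRa17a}, and only then applies Lemma~\ref{P-CClCount} to relate $|\cC(U)|$ to $|\cC^\perp(U^\perp)|$. You instead take Proposition~\ref{C-WWperp} (itself a direct consequence of Lemma~\ref{P-CClCount}) as the starting point and perform the subspace count yourself via a Fubini argument. The substance is the same---both proofs ultimately rest on Lemma~\ref{P-CClCount} plus the count of $(m-\nu)$-dimensional spaces above a fixed subspace---but your version is self-contained within the paper and avoids the external citation and the detour through $\cP^\rk$, at the cost of redoing a counting step that the paper outsources.
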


\begin{proof}
By \cite[Eq.~(8)]{GoRa17a}, for all $\cC \le \Fnm$ and all $0 \le \nu \le m$ we have
\[
      \sum_{\substack{ U\in\cL\\ \dim U=m-\nu}}|\cC(U)|=\sum_{i=0}^{m-\nu}\displaystyle \GaussianD{m-i}{\nu} \cP^\rk(\cC,i).
\]
Therefore
\begin{align*}
  \sum_{V\in\cL}\GaussianD{m-\dim V}{\nu}\cP^\rs(\cC,V)&=\sum_{i=0}^m\GaussianD{m-i}{\nu} \sum_{\substack{V\in\cL \\ \dim(V)=i}} \cP^\rs(\cC,V)
       =\sum_{i=0}^m\GaussianD{m-i}{\nu} \cP^\rk(\cC,i) \\
     &=\sum_{\substack{U\in\cL \\ \dim U=m-\nu}}|\cC(U)|.
\end{align*}
Similarly,
\[
\sum_{W\in\cL}\GaussianD{m-\dim W}{m-\nu}\cP^\rs(\cC^\perp,W) = \sum_{\substack{U\in\cL\\ \dim U=\nu}}|\cC^\perp(U)|.
\]
Using Lemma~\ref{P-CClCount}  we obtain
\begin{eqnarray*}
  \sum_{V\in\cL}\GaussianD{m-\dim V}{\nu} \cP^\rs(\cC,V) &=& \frac{|\cC|}{q^{n\nu}}\sum_{\substack{U\in\cL \\ \dim U=m-\nu}}|\cC^\perp(U^\perp)| \\
 &=& \frac{|\cC|}{q^{n\nu}} \sum_{\substack{ U\in\cL \\ \dim U=\nu}}|\cC^\perp(U)|\\
  &=& \frac{|\cC|}{q^{n\nu}}\sum_{W\in\cL}\GaussianD{m-\dim W}{m-\nu} \cP^\rs(\cC^\perp,W),
\end{eqnarray*}
for all $0 \le \nu \le m$, which is the desired equation.
\end{proof}

\section{$U$-Extremal Codes}
\label{S-UExtr}
In this section we generalize the notion of MRD codes to matrix codes with  respect to the row-space partition.
Let us first recall the following facts.
As before, we assume that $m\leq n$.

Recall that $(\Fnm,\rho)$, where $\rho(A,B)=\rk(A-B)$, is a metric space.
A subspace $\cC$ of this metric space is called a \textbf{rank-metric code}.
Its  (\textbf{minimum rank}) \textbf{distance}  is defined as
\begin{equation}\label{e-drk}
    \drk(\cC):=\min\{\rk(A-B)\mid A,B\in\cC,\,A\neq B\}=\min\{\rk(M)\mid M\in\cC\setminus\{0\}\}.
\end{equation}
The Singleton-like bound for rank-metric codes~\cite[Thm.~5.4]{Del78} tells us that if $\cC\leq\Fnm$ is a non-zero code of distance~$d$, then
\begin{equation}\label{e-Singl}
   |\cC|\leq q^{n(m-d+1)}.
\end{equation}
A code~$\cC\leq\Fnm$ is an \textbf{MRD code} if $\cC=\{0\}$ or  if
$\cC \neq \{0\}$ and $|\cC|= q^{n(m-d+1)}$, where $d=\drk(\cC)$.
In other words, MRD codes are \textit{extremal} with respect to the Singleton-like bound.

MRD codes enjoy various properties.
We briefly list these properties and then generalize the concept to matrix codes with  respect to the row-space partition.

\begin{rem}\label{R-MRDExtr}
Let $\cC\leq\Fnm$ an MRD code. The following hold.
\begin{enumerate}
\item The dual code $\mC^\perp$ is MRD as well. Moreover, if $\mC \neq \{0\}$ has minimum distance $d$, then $\cC^\perp$ has minimum distance $m-d+2$; see~\cite[Thm.~5.5]{Del78} or also~\cite[Cor.~41]{Ra16a}.
\item The rank distribution of $\cC$ only depends only on the parameters
$q,n,m,d$ of the code; see~\cite[Thm.~5.6]{Del78} or \cite[Cor.~44]{Ra16a}.
\label{r2}
\end{enumerate}
\end{rem}

MRD codes are even more rigid than stated in~(\ref{r2}) above: even their row-space distribution depends only on their parameters, as the following result shows.
This fact also follows from the proof of \cite[Thm. 8]{GoRa17a}. Recall the notation~$\cL$ for the lattice of subspaces in~$\F^m$ as well as the notation for partition distributions in Definition~\ref{D-PartDistr}.

\begin{theo}\label{P-MRDRSDistr}
Let $\cC\leq\Fnm$ be a non-zero MRD code of minimum distance $d$, and let $V\in\cL$ with $\dim(V)=v$. Then
        \[
          \cP^\rs(\cC,V)=\sum_{i=0}^{d-1}\GaussianD{v}{i}(-1)^{v-i}q^{\BinomS{v-i}{2}}+\sum_{i=d}^{v}\GaussianD{v}{i}q^{n(i-d+1)}(-1)^{v-i}q^{\BinomS{v-i}{2}}.
        \]
In particular, the row-space distribution of $\cC$ depends only on the parameters $q,n,m,d$.
\end{theo}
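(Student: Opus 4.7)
The plan is to apply Möbius inversion on the subspace lattice $\cL$, in the same spirit as the proof of Theorem~\ref{T-KrawRS}. Set $f(W) := \cP^\rs(\cC, W)$ and $g(W) := |\cC(W)|$. By Definition~\ref{D-CodeEnum} one has $g(W) = \sum_{U \le W} f(U)$, so Möbius inversion in~$\cL$, combined with the explicit formula~\eqref{e-MoebL} for $\mu_\cL$, yields
\[
    f(V) \;=\; \sum_{W \le V} \mu_\cL(W,V)\, g(W) \;=\; \sum_{W\leq V}(-1)^{v-\dim W}q^{\BinomS{v-\dim W}{2}}\, g(W).
\]

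The core of the argument is to evaluate $g(W)$ for an MRD code~$\cC$ and observe that it depends on~$W$ only through $w:=\dim(W)$. First, if $w < d$, then every $A \in \cC(W)$ satisfies $\rk(A) \le w < d = \drk(\cC)$, which forces $A = 0$, so $g(W) = 1$. Second, if $w \ge d$, I would invoke Lemma~\ref{P-CClCount} to write $g(W) = |\cC|\, q^{-n(m-w)}\, |\cC^\perp(W^\perp)|$. By Remark~\ref{R-MRDExtr}(1), the dual $\cC^\perp$ is MRD of minimum distance $m - d + 2$, and $\dim(W^\perp) = m - w \le m - d < m - d + 2$, so the same rank bound applied to $\cC^\perp$ gives $|\cC^\perp(W^\perp)| = 1$. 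Since $|\cC| = q^{n(m-d+1)}$, one obtains $g(W) = q^{n(w - d + 1)}$.

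Substituting these two piecewise values back into the Möbius formula and grouping the subspaces $W \le V$ by their dimension~$i$ (noting that $V$ has exactly $\GaussianD{v}{i}$ subspaces of dimension $i$), the sum splits naturally at $i = d$ and produces
\[
f(V)=\sum_{i=0}^{d-1}\GaussianD{v}{i}(-1)^{v-i}q^{\BinomS{v-i}{2}}+\sum_{i=d}^{v}\GaussianD{v}{i}q^{n(i-d+1)}(-1)^{v-i}q^{\BinomS{v-i}{2}},
\]
which is exactly the claimed expression. Since the right-hand side depends on~$V$ only through $v = \dim(V)$, the fact that the $\cP^\rs$-distribution depends only on $q, n, m, d$ is an immediate corollary.

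I do not anticipate a significant obstacle: both the Möbius function on~$\cL$ and Lemma~\ref{P-CClCount} are already available in the paper, and the MRD hypothesis is used in precisely the one place needed, namely to force $|\cC^\perp(W^\perp)| = 1$ in the case $w \ge d$. The one point worth verifying is the boundary between the two cases, but the two piecewise values for $g$ are in fact consistent at $w = d - 1$ (both give $1$), so the split at $i = d$ in the final sum is the natural one and no off-by-one issue arises.
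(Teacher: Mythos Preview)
Your proof is correct and follows essentially the same approach as the paper: define $f$ and $g$, identify $g(W)=|\cC(W)|$, compute this piecewise, and apply M\"obius inversion on~$\cL$. The only cosmetic difference is that the paper cites an external lemma for the values of $|\cC(W)|$, whereas you derive them directly from Lemma~\ref{P-CClCount} and the MRD property of~$\cC^\perp$; your self-contained derivation is in fact precisely how that external lemma is proved.
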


One may note that the above expression actually does not explicitly depend on~$m$. This parameter only enters via the lattice~$\cL$.

\begin{proof}
Fix $V\in\cL$ with $\dim(V)=v$. It follows from \cite[Lem.~48]{Ra18} (see also \cite[Lem.~25]{GoRa17a}) that
\begin{equation}\label{e-CU}
   |\cC(V)|=\left\{\begin{array}{cl} 1 &\text{if }0\leq v\leq d-1,\\ q^{n(v-d+1)}&\text{if }v\geq d.\end{array}\right.
\end{equation}
 Define functions $f,\,g:\cL \longrightarrow \R$ by
\[
  f(V)=\cP^\rs(\cC,V) \quad \text{ and } \quad g(V)=\sum_{U\leq V} f(U)
\]
for all $V \in \mL$. Then $g(V)=|\cC(V)|$ by definition. Using M\"obius inversion in the lattice $\mL$ and~\eqref{e-MoebL} we compute
\begin{align*}
  f(V)&=\sum_{U\leq V}g(U)\mu_{\cL}(U,V)\\
       &=\sum_{\substack{U\leq V\\ \dim(U)\leq d-1}}\hspace*{-1.2em}(-1)^{v-\dim(U)}q^{\BinomS{v-\dim(U)}{2}}+
          \hspace*{-.7em}\sum_{\substack{U\leq V\\ \dim(U)\geq d}}\hspace*{-1.2em}q^{n(\dim(U)-d+1)}(-1)^{v-\dim(U)}q^{\BinomS{v-\dim(U)}{2}}.
\end{align*}
The desired identity follows from the fact that~$V$ contains $\Gaussian{v}{i}$ subspaces of dimension~$i$.
\end{proof}

Note that using the $q$-binomial theorem \cite[p. 74]{Sta97}  one easily confirms that $\cP^\rs(\cC,V)=0$ whenever $\dim(V)\in\{1,\ldots,d-1\}$.

\medskip
We now propose a generalization of the Singleton-type bound for matrix codes.
This will lead to a refined notion of extremality.

\begin{prop} \label{bound}
Let $\cC \le \Fnm$ and $U \in \mL$ with $u:=\dim(U)$. Assume $\cC(U)=\{0\}$. Then we have
$|\cC| \le q^{n(m-u)}$.
\end{prop}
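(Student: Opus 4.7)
The plan is to reformulate the hypothesis $\cC(U)=\{0\}$ as the injectivity of an $\F$-linear map defined on $\cC$ whose codomain has $\F$-dimension $n(m-u)$. The desired bound $|\cC|\le q^{n(m-u)}$ will then follow immediately from rank--nullity, with no real computation needed.

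Concretely, I would fix a matrix $H\in\F^{m\times(m-u)}$ whose columns form a basis of the orthogonal complement $U^\perp\le\F^m$ (with respect to the standard inner product on $\F^m$), and consider the $\F$-linear map
\[
\phi:\Fnm\longrightarrow\F^{n\times(m-u)},\qquad A\longmapsto AH.
\]
For a single row $v\in\F^m$, the product $vH$ vanishes if and only if $v$ is orthogonal to every column of $H$, that is, if and only if $v\in U^{\perp\perp}=U$. Applied row by row, this gives $\ker\phi=\{A\in\Fnm:\rs(A)\le U\}$, so by hypothesis $\ker(\phi|_{\cC})=\cC(U)=\{0\}$. Hence $\phi|_{\cC}$ is injective and
\[
|\cC|=|\phi(\cC)|\le\bigl|\F^{n\times(m-u)}\bigr|=q^{n(m-u)}.
\]

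A slicker one-line alternative is to invoke Lemma~\ref{P-CClCount} directly: the assumption $\cC(U)=\{0\}$ forces $|\cC(U)|=1$, and the identity $|\cC(U)|=|\cC|\cdot|\cC^\perp(U^\perp)|/q^{n(m-u)}$ rearranges to $|\cC|=q^{n(m-u)}/|\cC^\perp(U^\perp)|\le q^{n(m-u)}$, where the inequality uses that $0\in\cC^\perp(U^\perp)$ and so $|\cC^\perp(U^\perp)|\ge 1$. There is no genuine obstacle in either route; the statement is essentially a rank--nullity observation. It properly extends the Singleton-like bound~\eqref{e-Singl}, which one recovers by choosing any $U$ with $\dim U=d-1$, where $d=\drk(\cC)$: every nonzero codeword has rank at least $d$ and hence row space of dimension at least $d$, so it cannot lie in~$U$, and Proposition~\ref{bound} yields $|\cC|\le q^{n(m-d+1)}$.
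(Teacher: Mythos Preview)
Your proof is correct. Your first argument is essentially the paper's: the paper writes $\cC(U)=\cC\cap\Fnm(U)$, cites $\dim\Fnm(U)=nu$, and applies the subspace-intersection inequality $\dim(\cC\cap\Fnm(U))\ge\dim\cC+nu-nm$; your map $\phi$ simply makes this dimension count explicit by exhibiting a linear map with kernel $\Fnm(U)$, so the two arguments are the same rank--nullity observation in different dress. Your alternative via Lemma~\ref{P-CClCount} is a genuinely different (and equally short) route that the paper does not use for this proposition, though it is in the same spirit as the paper's later use of that lemma in Proposition~\ref{dualis}.
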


\begin{proof}
From \cite[Lem.~26]{Ra16a} we know that $\dim(\Fnm(U))=nu$. Therefore $0= \dim(\cC(U))=\dim(\cC \cap \Fnm(U)) \ge \dim(\cC) + nu - nm$,
which results in the stated bound.
\end{proof}

The following generalization of MRD codes is natural from the previous result.
We will see that these codes satisfy similar rigidity properties as listed for MRD codes in Remark~\ref{R-MRDExtr}.

\begin{defi} \label{def:Uex}
Let $\cC \le \Fnm$ and $U \in \mL$ with $u:=\dim(U)$. We say that
$\cC$  is \textbf{$U$-extremal} if $\cC(U)=\{0\}$ and $|\cC|=q^{n(m-u)}$.
\end{defi}

Clearly, $\Fnm$ is the only $\{0\}$-extremal code and, dually, $\{0\}$ is the only $\F^m$-extremal code.
The Singleton-like bound~\eqref{e-Singl} implies that if~$\cC$ is $U$-extremal, then $\dim(U)\geq\drk(\cC)-1$.
This immediately leads to the following observation.

\begin{rem}\label{R-MRDExtr2}
Let $\cC \le \Fnm$ be a non-zero code of minimum distance $d$. The following are equivalent.
\begin{enumerate}
\item $\cC$ is an MRD code,
\item $\cC$ is $U$-extremal for all $U\in\cL$ with $\dim(U)=d-1$,
\item $\cC$ is $U$-extremal for some $U\in\cL$ with $\dim(U)=d-1$.
\end{enumerate}
\end{rem}

There exist $U$-extremal codes that are not MRD.

\begin{exa}\label{E-NotMRDExtr}
Write $m=m_1+m_2$ with $m_1,m_2 \neq 0$. Let $\cC_1 \le \F^{n\times m_1}$ be a non-zero MRD code of minimum distance $d$, say.
Define $\cC=\{(A\mid 0)\in\F^{n\times(m_1+m_2)}\mid A\in \cC_1\}$.
Then $\cC\le\F^{n\times(m_1+m_2)}$ has cardinality $q^{n(m_1-d+1)}$ and minimum distance~$d$, thus~$\cC$ is not MRD.

Choose any subspace $U_1 \leq\F^{m_1}$ of dimension $d-1$ and set $U=U_1\times\F^{m_2}$.
Then $\dim(U)=m_2+d-1$ and thus $|\cC|=q^{n(m_1+m_2-\dim(U))}$.
In order to see that the code~$\cC$ is~$U$-extremal, let $(A\mid 0)\in\cC$ such that $\rs(A\mid 0)\leq U$.
Then $\rs(A)\leq U_1$ and thus $\rk(A)\leq d-1$.
But then $A=0$, and all of this shows that $\cC(U)=\{0\}$.
Hence $\cC$ is $U$-extremal, but not MRD.
\end{exa}

Extremality is preserved under dualization. The following result is an immediate consequence of the definitions and
Lemma \ref{P-CClCount}.

\begin{prop}\label{dualis}
Let $\cC \le \Fnm$ and $U\in\cL$. 
Then~$\cC$ is $U$-extremal if and only if~$\cC^\perp$ is $U^\perp$-extremal.
\end{prop}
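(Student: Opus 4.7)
The plan is to derive the equivalence directly from Lemma~\ref{P-CClCount}, which expresses $|\cC(U)|$ in terms of $|\cC^\perp(U^\perp)|$, combined with the elementary dimension identity $\dim(\cC^\perp) = nm - \dim(\cC)$ and the duality $\dim(U^\perp) = m - \dim(U)$. Both pieces of $U$-extremality (the cardinality of the code and the triviality of $\cC(U)$) dualize cleanly, so the argument is essentially a bookkeeping exercise.

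First I would handle the cardinality condition. Set $u = \dim(U)$, so $\dim(U^\perp) = m-u$. The equality $|\cC| = q^{n(m-u)}$ is equivalent to $\dim(\cC) = n(m-u)$, which in turn is equivalent to $\dim(\cC^\perp) = nm - n(m-u) = nu = n(m - \dim(U^\perp))$, i.e., $|\cC^\perp| = q^{n(m-\dim(U^\perp))}$. Thus the cardinality half of $U$-extremality for $\cC$ matches exactly the cardinality half of $U^\perp$-extremality for $\cC^\perp$.

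Next, assuming this cardinality condition holds, Lemma~\ref{P-CClCount} simplifies to $|\cC(U)| = |\cC^\perp(U^\perp)|$ because the prefactor $|\cC|/q^{n(m-u)}$ equals $1$. Since both $\cC(U)$ and $\cC^\perp(U^\perp)$ are linear subspaces, the triviality $\cC(U) = \{0\}$ is equivalent to $|\cC(U)| = 1$, hence to $|\cC^\perp(U^\perp)| = 1$, hence to $\cC^\perp(U^\perp) = \{0\}$.

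Combining the two paragraphs gives the biconditional. No step is really an obstacle here; the only thing to be careful about is applying Lemma~\ref{P-CClCount} only once the cardinality hypothesis is in force, since otherwise the prefactor is not $1$. Using $\cC^{\perp\perp} = \cC$ and $U^{\perp\perp} = U$, the argument is manifestly symmetric in $(\cC,U)$ and $(\cC^\perp, U^\perp)$, so both directions of the ``if and only if'' follow at once.
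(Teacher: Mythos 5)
Your argument is correct and follows exactly the route the paper intends: the paper states the proposition as an immediate consequence of the definitions and Lemma~\ref{P-CClCount}, and your bookkeeping with $\dim(\cC^\perp)=nm-\dim(\cC)$, $\dim(U^\perp)=m-\dim(U)$, and the prefactor in Lemma~\ref{P-CClCount} becoming $1$ is precisely the intended (unwritten) verification.
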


For $\F_{q^m}$-linear codes, $U$-extremality is related to \textit{information spaces}, defined in~\cite[Sec.~VI]{MP16}.
These are spaces such that the according puncturing map (in a suitable sense) is injective on the code, see \cite[Def.~12]{MP16}, and hence preserves all information.
Restricting our considerations to $\F_{q^m}$-linear codes, one easily observes that $\cC(U)=\{0\}$ iff $U^{\perp}$ is an information space of minimal dimension.
Hence Proposition~\ref{dualis} states that~$U^{\perp}$ is a minimal information space of~$\cC$ iff~$U$ is a minimal information space of~$\cC^{\perp}$.
This is a special case of \cite[Prop.~15]{MP16}. 
All of this tells us that the above result may be regarded as an analogue of \cite[Thm.~1.6.2]{HP03} for matrix codes.


Next we turn to the row-space distribution of $U$-extremal codes.
It cannot be expected that the entire distribution depends only on the parameters of the code and the dimension of~$U$.
The following example illustrates this.

\begin{exa}\label{E-NotMRDRSD}
Consider again Example~\ref{E-NotMRDExtr}, taking $m_2=1$.
Then we have $\dim(U)=d$ and, of course, $\cP^\rs(\cC,U)=0$.
On the other hand, choose any $d$-dimensional subspace $\tilde{V}\leq\F^{m_1}$ and set $V=\tilde{V}\times\{0\}$.
Then $\cP^\rs(\cC,V)=\cP^\rs(\tilde{\cC},\tilde{V})=|\tilde{\cC}(\tilde{V})|-1=q^n-1$ thanks to~\eqref{e-CU}.
Thus $\cP^\rs(\cC,V)\neq\cP^\rs(\cC,U)$ even though $\dim(V)=\dim(U)$.
\end{exa}

However, we do obtain a rigidity result in the case where $\cC$ is $U$-extremal for all subspaces $U$ of a fixed dimension contained in a given $T\in\cL$.
The following result generalizes Theorem~\ref{P-MRDRSDistr}.

\begin{theo}[Rigidity of extremality]\label{T-Rigid}
Let $\cC \le \Fnm$ and $T \in \mL$. Let $0 \le u \le \dim(T)$ be an integer, and suppose that $\cC$ is $U$-extremal for all $U \le T$ of dimension $u$. Then for all $V \in \mL$ with $V \le T$ we have
        \begin{equation}\label{e-RSRigid}
          \cP^\rs(\cC,V)=\sum_{i=0}^{u}\GaussianD{v}{i}(-1)^{v-i}q^{\BinomS{v-i}{2}}+\sum_{i=u+1}^{v}\GaussianD{v}{i}q^{n(i-u)}(-1)^{v-i}q^{\BinomS{v-i}{2}},
        \end{equation}
where $v=\dim(V)$. Hence the partial row-space distribution $(\cP^\rs(\cC,V))_{V\leq T}$ depends only on $n,q,u$.
\end{theo}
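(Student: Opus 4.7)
The plan is to mimic the proof of Theorem~\ref{P-MRDRSDistr}, restricted to the sublattice of subspaces below~$T$. Concretely, I will define $f(V) := \cP^\rs(\cC,V)$ and $g(V) := |\cC(V)|$ for $V \le T$, observe that $g(V) = \sum_{W \le V} f(W)$, and apply M\"obius inversion on the subspace lattice using the explicit formula~\eqref{e-MoebL}. The Gaussian binomial $\GaussianD{v}{i}$ appearing in the stated identity will emerge by grouping subspaces $W \le V$ according to their common dimension~$i$.

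The key step is to compute $g(V) = |\cC(V)|$ for every $V \le T$ and to show that the answer depends only on $v := \dim V$. If $v \le u$, one extends $V$ to a $u$-dimensional subspace $U \le T$ (possible since $u \le \dim T$); then $U$-extremality yields $\cC(U) = \{0\}$, so $\cC(V) \subseteq \cC(U) = \{0\}$ and hence $g(V) = 1$. If instead $v \ge u$, one picks any $u$-dimensional $U_0 \le V$; since $V \le T$, also $U_0 \le T$, so $\cC(U_0) = \{0\}$ and $|\cC| = q^{n(m-u)}$. The natural map $\cC \to \Fnm/\Fnm(U_0)$ is then injective, and the two sides share cardinality $q^{n(m-u)}$, hence it is a bijection. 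Composing with the canonical surjection $\Fnm/\Fnm(U_0) \twoheadrightarrow \Fnm/\Fnm(V)$, which is well-defined because $\Fnm(U_0) \le \Fnm(V)$, shows that $\cC \to \Fnm/\Fnm(V)$ is surjective with kernel $\cC(V)$, whence $g(V) = |\cC|/q^{n(m-v)} = q^{n(v-u)}$.

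Once $g(W)$ is identified for every $W \le V$, the proof concludes as follows. M\"obius inversion together with~\eqref{e-MoebL} gives
\[
   f(V) = \sum_{W \le V} (-1)^{v - \dim W}\, q^{\BinomS{v - \dim W}{2}}\, g(W).
\]
Collecting the $\GaussianD{v}{i}$ subspaces $W \le V$ of each dimension $i \in \{0, \ldots, v\}$ and splitting the sum at $i = u$ yields exactly the two pieces of~\eqref{e-RSRigid}: the terms with $i \le u$ contribute $\GaussianD{v}{i}(-1)^{v-i} q^{\BinomS{v-i}{2}}$ (since $g(W)=1$), while those with $i \ge u+1$ carry the extra factor $q^{n(i-u)}$.

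The main obstacle is the case $v \ge u$ of the key computation: surjectivity of the map $\cC \to \Fnm/\Fnm(V)$ is not automatic, and the crucial trick is to factor it through the bijection $\cC \xrightarrow{\sim} \Fnm/\Fnm(U_0)$ that $U_0$-extremality hands us for some $u$-dimensional $U_0 \le V$. Everything else, in particular the M\"obius-inversion cleanup, is a routine adaptation of the MRD argument used for Theorem~\ref{P-MRDRSDistr}.
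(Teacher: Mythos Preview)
Your proof is correct and follows the same overall architecture as the paper's: compute $|\cC(V)|$ for every $V\le T$, then apply M\"obius inversion on the interval $[0,T]$ of~$\cL$ exactly as in the proof of Theorem~\ref{P-MRDRSDistr}.

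The one genuine difference lies in how you handle the case $v>u$. The paper picks a $u$-dimensional $U\le V$, invokes Proposition~\ref{dualis} to conclude that $\cC^\perp$ is $U^\perp$-extremal, deduces $\cC^\perp(V^\perp)\le\cC^\perp(U^\perp)=\{0\}$, and then reads off $|\cC(V)|=|\cC|/q^{n(m-v)}$ from Lemma~\ref{P-CClCount}. You instead argue directly: $U_0$-extremality forces the linear map $\cC\to\Fnm/\Fnm(U_0)$ to be a bijection (injective with equal cardinalities), and composing with the canonical surjection onto $\Fnm/\Fnm(V)$ identifies $\cC(V)$ as a kernel of the right size. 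Your route is slightly more self-contained, avoiding the detour through trace-duality; the paper's route has the virtue of reusing machinery (Proposition~\ref{dualis} and Lemma~\ref{P-CClCount}) already in place. Both are short and both land on the same formula~\eqref{e-CV}, after which the M\"obius inversion is identical.
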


Note the extreme case where $u=\dim(T)$, in which the assumptions simply mean that~$\cC$ is $T$-extremal.
This clearly implies $\cP^\rs(\cC,V)=0$ for all $0<V\leq T$, which also follows from~\eqref{e-RSRigid} along with the $q$-binomial formula.
More interestingly, we also recover Theorem~\ref{P-MRDRSDistr}: choose $T=\F^m$ and $u=d-1$.
Then the above assumption means that~$\cC$ is MRD, see Remark~\ref{R-MRDExtr2}, and~\eqref{e-RSRigid}
coincides with Theorem~\ref{P-MRDRSDistr}.

\begin{proof}
Let $V \le T$ have dimension $v$. We show first that
\begin{equation}\label{e-CV}
    |\cC(V)|=\left\{\begin{array}{cl} 1 &\text{if }0\leq v\leq u,\\ q^{n(v-u)}&\text{if }v>u.\end{array}\right.
\end{equation}
Indeed, if $v \le u$ then there exists $U \in \mL$ such that $\dim(U)=u$ and  $V \le U \le T$.
Since $\mC$ is $U$-extremal, we have $\mC(V) \le \mC(U)=\{0\}$. Therefore $|\cC(V)|=1$.
Now suppose that $v>u$, and fix a $u$-dimensional space $U \in \mL$ with $U \le V$.
By Proposition~\ref{dualis}, $\cC^\perp$ is $U^\perp$-extremal. Therefore
$\cC^\perp(V^\perp) \le \cC^\perp(U^\perp) =\{0\}$.
Thus by Lemma~\ref{P-CClCount} we conclude $|\cC(V)|=\frac{|\cC|}{q^{n(m-v)}}=q^{n(v-u)}$.
This establishes~\eqref{e-CV}.

We can now proceed as in the proof of Theorem \ref{P-MRDRSDistr}, this time using M\"obius inversion on
the interval $[0,T]$ of $\mL$.
\end{proof}

We conclude this section by observing that there are indeed non-MRD codes that satisfy the assumptions of Theorem \ref{T-Rigid}.
The example also shows that the result just proven does not extend to subspaces that are not contained in~$T$.

\begin{exa}\label{E-RSRigidNotMRD}
\begin{enumerate}
\item Let $n,m_1,m_2 \ge 1$ be integers with $n \ge m_1+m_2 \ge u+1 \ge 2$. Set $m:=m_1+m_2$, and let $\mC_1 \le \F^{n \times m_1}$ be an MRD code with minimum distance $u+1$. Then
$\mC_1$ has dimension $n(m_1-u)$. Moreover, $\mC_1(U)=\{0\}$ for all $U \le \F^{m_1}$ of dimension $u$. Construct the code
\[
    \mC:=\left\{(A \mid B) \in \F^{n \times m} \mid A \in \mC_1, \; B \in \F^{n \times m_2}\right\}.
\]
Let $T:=\F^{m_1} \times 0^{m_2} \le \F^m$.
Clearly, $\mC(U)=\{0\}$ for all $U \le T$ of dimension $u$. Moreover, $\dim(\mC)=n(m-u)$.
Thus $\mC$ is $U$-extremal for all $U \le T$ of dimension $u$. Note that $\mC$ is not MRD, as its rank distance is 1 and $\dim(\mC)=n(m-u)<nm$.

\item Consider the code from (1). By Theorem \ref{T-Rigid} we know that $\cP^\rs(\cC,V)$ only depends $\dim(V)$ for all
$V \le T$. Note however that this is not the case in general for the spaces $V$ that are not contained in $T$. Let e.g. $V_1= \langle e_1,...,e_u,e_m\rangle$ and
$V_2= \langle e_m,e_{m-1},...,e_{m-u}\rangle$, where $\{e_1,...,e_m\}$ is the canonical basis of $\F^m$.
The spaces $V_1$ and $V_2$ have the same dimension, $u+1$, and neither of them is contained in $T$. Suppose $m_2 \ge u+1$. Then
$\cP^\rs(\cC,V_1)=0$ and $\cP^\rs(\cC,V_2)=\prod_{i=0}^{u} (q^n-q^i)$.
\end{enumerate}
\end{exa}

\section{The Krawtchouk Coefficients of the Pivot Partition}
\label{sec:KKpiv}
This section is devoted to obtaining explicit formulas for the Krawtchouk coefficients of the pivot partition, introduced in Definition~\ref{D-PivPart}.
They will be expressed in terms of the rank distribution of matrices that are supported on a Ferrers diagram. We therefore start by introducing the needed notation and terminology. In this section we do \textit{not} assume $m \le n$.

\begin{defi}\label{D-Ferrers}
An $n\times m$ \textbf{Ferrers diagram} (or \textbf{Ferrers board}) $\cF$ is a subset of $[n]\times[m]$ that satisfies the following properties:
\begin{enumerate}
\item if $(i,j)\in\cF$ and $j<m$, then $(i,j+1)\in\cF$ (right aligned),
\item if $(i,j)\in\cF$ and $i>1$, then $(i-1,j)\in\cF$ (top aligned).
\end{enumerate}
For $j=1,\ldots,m$ let  $c_j=|\{(i,j)\mid (i,j)\in\cF,\,1\leq i\leq n\}|$.
Then we may identify the Ferrers diagram~$\cF$ with the tuple $[c_1,\ldots,c_m]$.
It satisfies $0\leq c_1\leq c_2\leq\ldots\leq c_m\leq n$.
\end{defi}

The Ferrers diagram $\cF=[c_1,\ldots,c_m]$ can be visualized as an array of top-aligned and right-aligned dots
where the $j$-th column has $c_j$ dots.
Just like for matrices, we index the rows from top to bottom and the columns from left to right.
For instance, $\cF=[1,2,4,4,5]$ is given by
     \begin{figure}[ht]
    \centering
     {\small
     \begin{tikzpicture}[scale=0.4]
         \draw (4.5,1.5) node (b1) [label=center:$\bullet$] {};
         \draw (4.5,2.5) node (b1) [label=center:$\bullet$] {};
         \draw (4.5,3.5) node (b1) [label=center:$\bullet$] {};
         \draw (4.5,4.5) node (b1) [label=center:$\bullet$] {};
         \draw (4.5,5.5) node (b1) [label=center:$\bullet$] {};
       \

         \draw (3.5,2.5) node (b1) [label=center:$\bullet$] {};
         \draw (3.5,3.5) node (b1) [label=center:$\bullet$] {};
         \draw (3.5,4.5) node (b1) [label=center:$\bullet$] {};
         \draw (3.5,5.5) node (b1) [label=center:$\bullet$] {};

         \draw (2.5,2.5) node (b1) [label=center:$\bullet$] {};
         \draw (2.5,3.5) node (b1) [label=center:$\bullet$] {};
         \draw (2.5,4.5) node (b1) [label=center:$\bullet$] {};
         \draw (2.5,5.5) node (b1) [label=center:$\bullet$] {};

         \draw (1.5,4.5) node (b1) [label=center:$\bullet$] {};
        \draw (1.5,5.5) node (b1) [label=center:$\bullet$] {};

       \draw (0.5,5.5) node (b1) [label=center:$\bullet$] {};
     \end{tikzpicture}
     }
    \end{figure}

We expressly allow $c_1=0$ or $c_m<n$.
This has the consequence that for all $\tilde{n}\leq n$ and $\tilde{m}\leq m$ an $\tilde{n}\times\tilde{m}$ Ferrers diagram is also an $n\times m$ Ferrers diagram.
Moreover, the empty Ferrers diagram is given by $\cF=[0,\ldots,0]$ of any length.

\begin{defi}\label{D-FMat}
The (\textbf{Hamming}) \textbf{support} of a matrix $M=(M_{ij})\in\Fnm$ is defined as the index set $\supp(M):=\{(i,j)\mid M_{ij}\neq0\}$.
The subspace of~$\Fnm$ of all matrices with support contained in~$\cF$ is denoted by $\F[\cF]$.
For $r \in \{0,...,m\}$ we set $P_r(\cF)=\cP^\rk(\F[\cF],r)$, that is,
\[
    P_r(\cF)=|\{M\in\F[\cF]\mid \rk(M)=r\}|.
\]
We call $(P_r(\cF))_{0 \le r \le m}$ the \textbf{rank-weight distribution} of $\F[\cF]$.
Clearly, $P_0(\cF)=1$ for any Ferrers diagram~$\cF$, including the empty one.
\end{defi}

The following result provides an explicit formula for the rank-weight distribution of the space~$\F[\cF]$ for any Ferrers diagram~$\cF$.
We postpone the proof to Section~\ref{sec:FD}, where we will describe connections between the rank-weight distribution of $\F[\cF]$ and $q$-rook polynomials.
For all $r\in\N$ define
\begin{equation}\label{e-Irm}
  \cI_{r,m}:=\{(i_1,\ldots,i_r)\mid 1\leq i_1<\ldots<i_r\leq m\}.
\end{equation}
Clearly $\cI_{r,m}=\emptyset$ if $r>m$. Moreover, for $i=(i_1,\ldots,i_r)\in\cI_{r,m}$ set
\begin{equation}\label{e-len}
   \len{i}:=\sum_{j=1}^r i_j.
\end{equation}
It will be convenient to set $\cI_{0,m}=\{()\}$ and $\len{()}=0$.

\begin{theo} \label{th:new}
Let $\cF=[c_1,\ldots,c_m]$ be an $n\times m$ Ferrers diagram, and let
$r\in\N_0$.
Then
\begin{equation}\label{e-RkW}
   P_r(\cF)=\sum_{i=(i_1,\ldots,i_r)\in\cI_{r,m}}q^{rm-\len{i}}\prod_{j=1}^r(q^{c_{i_j}-j+1}-1).
\end{equation}
\end{theo}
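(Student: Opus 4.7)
The plan is to count matrices of rank $r$ in $\F[\cF]$ by grouping them according to their pivot tuple $i=\piv(M)\in\cI_{r,m}$ and then summing over $i$. So I fix $i=(i_1,\ldots,i_r)\in\cI_{r,m}$ and count the $M\in\F[\cF]$ with $\piv(M)=i$; showing that this count equals $q^{rm-\|i\|}\prod_{j=1}^r(q^{c_{i_j}-j+1}-1)$ gives the theorem.

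The key observation, which makes the monotonicity of the Ferrers shape essential, is that for such an $M$ the columns $M_{\cdot 1},\ldots,M_{\cdot m}$ can be chosen from left to right with no ``cross-column'' interactions: by the characterization of pivot columns in~\eqref{e-pivcol}, an index $j$ lies in $\piv(M)$ iff $M_{\cdot j}$ is not in the span of $M_{\cdot 1},\ldots,M_{\cdot j-1}$. Thus I will choose the columns sequentially and record how many options each column allows.

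First, for $j<i_1$ I must set $M_{\cdot j}=0$. For a pivot column $j=i_k$, the vector $M_{\cdot i_k}$ must have support in rows $\{1,\ldots,c_{i_k}\}$ (by definition of $\F[\cF]$) and must not lie in $W_k:=\text{span}(M_{\cdot i_1},\ldots,M_{\cdot i_{k-1}})$. Here I use that $c_{i_1}\le\cdots\le c_{i_{k-1}}\le c_{i_k}$, so $W_k$ is a $(k-1)$-dimensional subspace of the $c_{i_k}$-dimensional space of vectors supported in rows $\{1,\ldots,c_{i_k}\}$; the number of admissible $M_{\cdot i_k}$ is therefore exactly $q^{c_{i_k}}-q^{k-1}$. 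For a non-pivot column $j$ with $i_k<j<i_{k+1}$ (using the conventions $i_0=0$, $i_{r+1}=m+1$), the vector $M_{\cdot j}$ must lie in the $k$-dimensional space $W_{k+1}$; and since the span $W_{k+1}$ is supported in rows $\{1,\ldots,c_{i_k}\}$ with $c_{i_k}\le c_j$, the support constraint imposed by $\cF$ is automatically satisfied, giving $q^k$ choices.

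Multiplying, the number of matrices with $\piv(M)=i$ equals
\[
\prod_{k=1}^{r}\big(q^{c_{i_k}}-q^{k-1}\big)\cdot\prod_{k=0}^{r} q^{\,k(i_{k+1}-i_k-1)}.
\]
The final step is a routine simplification: factoring $q^{k-1}$ from each pivot term produces $q^{\binom{r}{2}}\prod_{k=1}^r(q^{c_{i_k}-k+1}-1)$, and a telescoping of $\sum_{k=0}^{r}k(i_{k+1}-i_k-1)$ (with $i_0=0$, $i_{r+1}=m+1$) yields $rm-\|i\|-\binom{r}{2}$; the two powers of $q$ combine to $q^{rm-\|i\|}$. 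Summing over all $i\in\cI_{r,m}$ yields~\eqref{e-RkW}. The only real subtlety is verifying that the Ferrers/monotonicity hypotheses on $(c_j)$ are exactly what is needed to decouple the support conditions on non-pivot columns from the choice of the pivot columns; once this is in hand, the argument is just bookkeeping of exponents.
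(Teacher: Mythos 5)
Your proof is correct, but it takes a genuinely different route from the paper. The paper first establishes the column recursion $P_r(c_1,\ldots,c_m)=P_{r-1}(c_1,\ldots,c_{m-1})(q^{c_m}-q^{r-1})+P_r(c_1,\ldots,c_{m-1})q^r$ by analyzing the last column, and then proves the closed formula by a double induction on $r$ and $m$, verifying through a telescoping algebraic computation that the right-hand side of~\eqref{e-RkW} satisfies the same recursion. You instead derive the formula directly: stratifying $\F[\cF]$ by the pivot tuple $i\in\cI_{r,m}$ and choosing columns left to right via~\eqref{e-pivcol}, you show that each summand $q^{rm-\len{i}}\prod_{j=1}^r(q^{c_{i_j}-j+1}-1)$ is exactly the number of rank-$r$ matrices in $\F[\cF]$ with $\piv(M)=i$. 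This buys a combinatorial interpretation of the individual terms (which the paper's induction does not provide and which meshes nicely with its pivot-partition machinery, e.g.\ the count in Proposition~\ref{P-PpivCard}), while the paper's recursion has independent value later, where it is used to recover the Garsia--Remmel recursion for $q$-rook polynomials. Your use of the Ferrers monotonicity $c_{i_1}\le\cdots\le c_{i_k}\le c_j$ to decouple the support constraints (pivot columns live in a $c_{i_k}$-dimensional ambient space containing the span of the earlier pivots; non-pivot columns automatically satisfy the support condition) is exactly the right point, and the exponent bookkeeping $\sum_{k=0}^{r}k(i_{k+1}-i_k-1)=rm-\len{i}-\binom{r}{2}$ checks out.

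One small point worth making explicit: for degenerate pivot tuples, i.e.\ when $c_{i_k}\le k-1$ for some $k$, the phrase ``exactly $q^{c_{i_k}}-q^{k-1}$ choices'' needs a remark. If $c_{i_\ell}=\ell-1$ at the first such index $\ell$, the span $W_\ell$ fills the whole admissible space and the count at that stage is genuinely $q^{c_{i_\ell}}-q^{\ell-1}=0$, so no matrix has this pivot tuple; the corresponding product then vanishes (even though later factors may formally have negative exponents), matching the true count of zero. This is the same subtlety the paper addresses in the proof of Corollary~\ref{cor:deg}, and it does not affect the validity of your argument.
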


We also need the following technical result.

\begin{lemma}\label{L-AB}
Let $\sigma=(\sigma_1,\ldots,\sigma_b)\in\Pi$ and $B\in\F^{b\times m}$ be the matrix with columns
\[
     B_j=\left\{\begin{array}{cl} e_{\alpha} &\text{if }j=\sigma_{\alpha},\\ 0 &\text{else}.\end{array}\right.
\]
Thus~$B$ is in RREF with $\piv(B)=\sigma$, and where all non-pivot columns are zero.
Let $\lambda=(\lambda_1,\ldots,\lambda_a)\in\Pi$ and set
\[
    \lambda\cap\sigma=(\lambda_{\alpha_1},\ldots,\lambda_{\alpha_x})\ \text{ and }\widehat{\sigma}\setminus\lambda=(\hat{\sigma}_{\beta_1},\ldots,\hat{\sigma}_{\beta_y}).
\]
Furthermore, for $j\in[y]$ set $z_j=|\{i\mid \lambda_{\alpha_i}<\hat{\sigma}_{\beta_j}\}|$.
Then for any $r\in\{0,\ldots,a\}$ we have
\[
  \Big|\{A\in\F^{a\times m}\mid A\text{ is in RREF},\, \piv(A)=\lambda,\, \rk\begin{pmatrix}A\\B\end{pmatrix}=b+r\Big\}\Big|=P_{r-a+x}(\cF),
\]
where $\cF$ is the $x\times y$ Ferrers diagram $\cF=[z_1,\ldots,z_y]$ and $P_t(\cF)$ is the rank-weight distribution of $\F[\cF]$ from Theorem~\ref{th:new}.
\end{lemma}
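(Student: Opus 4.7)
}
The strategy is to use the rows of~$B$ to clean the $\sigma$-columns of~$A$ inside the stacked matrix, reducing the rank computation to a submatrix on the $\widehat{\sigma}$-columns, and then to read off a Ferrers-shaped piece of~$A$ whose rank controls everything.

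First I would observe that row~$\alpha$ of~$B$ equals the standard basis vector $e_{\sigma_\alpha}^\top$, supported in column~$\sigma_\alpha$ only. Hence, inside $\binom{A}{B}$ one can subtract $A_{i,\sigma_\alpha}$ times row~$\alpha$ of~$B$ from row~$i$ of~$A$ for every $i\in[a]$ and $\alpha\in[b]$. These operations zero out the whole block of~$A$ on the $\sigma$-columns without touching the $\widehat{\sigma}$-columns or the matrix~$B$ itself. In the resulting matrix, the row span of the modified top block is supported on $\widehat{\sigma}$ while that of~$B$ is supported on $\sigma$, so the two spans intersect trivially and
\[
   \rk\binom{A}{B}=b+\rk\bigl(A|_{\widehat{\sigma}}\bigr),
\]
where $A|_{\widehat{\sigma}}$ is the $a\times(m-b)$ submatrix obtained by deleting the $\sigma$-columns of~$A$.

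Next I would analyse $A|_{\widehat{\sigma}}$ via the RREF structure of~$A$. Its columns split into the pivot columns of $A$ at positions in $\lambda\cap\widehat{\sigma}$ (these are the standard basis vectors $e_i$ of~$\F^a$ for $i\in[a]\setminus\{\alpha_1,\ldots,\alpha_x\}$) and the non-pivot columns at positions $\widehat{\sigma}_{\beta_j}$ for $j\in[y]$. The $a-x$ pivot columns are linearly independent and span exactly the coordinate subspace of $\F^a$ supported on $[a]\setminus\{\alpha_1,\ldots,\alpha_x\}$. Modding out by this subspace, every non-pivot column collapses to its projection onto the rows indexed by $\alpha_1,\ldots,\alpha_x$, producing an $x\times y$ matrix $C$ with $C_{k,j}=A_{\alpha_k,\widehat{\sigma}_{\beta_j}}$. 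Consequently
\[
   \rk\bigl(A|_{\widehat{\sigma}}\bigr)=(a-x)+\rk(C).
\]

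The key combinatorial step is to verify that the positions where $C$ can be non-zero form precisely the Ferrers diagram $\cF=[z_1,\ldots,z_y]$. By the RREF property of~$A$, the entry $A_{\alpha_k,\widehat{\sigma}_{\beta_j}}$ is a free parameter whenever $\lambda_{\alpha_k}<\widehat{\sigma}_{\beta_j}$ and is forced to vanish when $\lambda_{\alpha_k}>\widehat{\sigma}_{\beta_j}$. The resulting support set $\{(k,j):\lambda_{\alpha_k}<\widehat{\sigma}_{\beta_j}\}$ is right-aligned and top-aligned because both $(\lambda_{\alpha_k})_k$ and $(\widehat{\sigma}_{\beta_j})_j$ are strictly increasing, and the column heights are exactly the $z_j$ in the statement.

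Combining the rank identities, the condition $\rk\binom{A}{B}=b+r$ is equivalent to $\rk(C)=r-a+x$, and the matrices~$A$ satisfying the hypotheses correspond bijectively to matrices $C\in\F[\cF]$ of that rank; the count is therefore $P_{r-a+x}(\cF)$ by Definition~\ref{D-FMat}. I expect the main technical obstacle to be the careful bookkeeping for this bijection, in particular matching the Ferrers positions with the relevant degrees of freedom of~$A$ contributing to $\rk\binom{A}{B}$ and checking that the remaining RREF data is recovered unambiguously once $C$ is specified.
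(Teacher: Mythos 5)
Your two reduction steps are exactly those of the paper's proof and are correct: clearing the $\sigma$-columns of $A$ with the rows of $B$ (each row of $B$ is a standard basis row vector $e_{\sigma_\alpha}^{\top}$ of $\F^m$) gives $\rk\SmallMat{A}{B}=b+\rk(A|_{\widehat{\sigma}})$; the surviving pivot columns of $A$, sitting at the positions of $\lambda\setminus\sigma$, contribute $a-x$, leaving the $x\times y$ matrix $C=(A_{\alpha_k,\hat{\sigma}_{\beta_j}})$, whose possibly nonzero positions form precisely the Ferrers diagram $\cF=[z_1,\ldots,z_y]$; hence $\rk\SmallMat{A}{B}=b+r$ iff $\rk(C)=r-a+x$. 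Up to this point your plan coincides with the paper's argument.

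The gap is the final counting step, which you defer to ``bookkeeping'': the admissible matrices $A$ are \emph{not} in bijection with the matrices $C\in\F[\cF]$ of rank $r-a+x$, because the remaining RREF data of $A$ is not recovered from $C$. An RREF matrix with pivot list $\lambda$ has a free entry at every position $(i,j)$ with $j\notin\lambda$ and $j>\lambda_i$; besides the entries recorded in $C$, there are in general further free entries, namely those lying in the columns belonging to $\sigma\setminus\lambda$ (your first reduction wipes these out using the pivots of $B$) and those lying in rows whose pivot is in $\lambda\setminus\sigma$ (your second reduction discards these rows). Such entries influence neither $\rk\SmallMat{A}{B}$ nor $C$, yet they can be chosen arbitrarily, so the map $A\mapsto C$ is $q^{c}$-to-one, where $c$ is the number of these extra free positions, and your argument as written produces the count $q^{c}\,P_{r-a+x}(\cF)$ rather than $P_{r-a+x}(\cF)$. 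A minimal illustration: for $m=2$, $\sigma=(2)$, $\lambda=(1)$ one has $B=(0\;\;1)$, $x=y=0$ and $\cF$ empty, while every $A=(1\;\;c)$ with $c\in\F$ is in RREF with $\piv(A)=(1)$, satisfies $\rk\SmallMat{A}{B}=b+1$, and maps to the same (empty) matrix $C$; already in the paper's own Example~\ref{E-AB} the four starred entries of $A$ in row $1$ and in column $3$ are free but invisible to $C$. So the assertion that ``the remaining RREF data is recovered unambiguously once $C$ is specified'' is exactly what fails; note that the paper's proof stops at the same point (it concludes as soon as $M\in\F[\cF]$ is established), so a complete write-up along this route must treat the extra free positions explicitly---either showing they are absent in the situation at hand or carrying the factor $q^{c}$ through the count.
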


From $\hat{\sigma}_{\beta_1}<\ldots<\hat{\sigma}_{\beta_y}$ we conclude $z_1<\ldots<z_y\leq x$.
Hence~$\cF$ is indeed an $x\times y$ Ferrers diagram.
We may have $z_1=0$ and the Ferrers diagram could be shortened by removing  empty columns.
Precisely, let $t'$ be minimal such that $\hat{\sigma}_{\beta_{t'}}> \lambda_{\alpha_1}$.
Then $z_{t'}\neq0=z_{t'-1}$.
Note also that for the given matrix~$B$ and any matrix~$A$ as specified above we have
$\rk\SmallMat{B}{A}\geq b+|\lambda\setminus\sigma|=b+(a-x)$.
Hence only $r\geq a-x$ matters in the above formula.
Before giving the proof of Lemma~\ref{L-AB}, we illustrate the count by an example.

\begin{exa}\label{E-AB}
Let $m=7$ and $\sigma=(3,4,6),\,\lambda=(1,4,6)$. Then $\widehat{\sigma}=(1,2,5,7)$ and
\[
  \lambda\cap\sigma=(4,6)=(\lambda_2,\lambda_3) \ \text{ and }\
  \widehat{\sigma}\setminus\lambda=(2,5,7).
\]
Using~$*$ for the unspecified entries of the matrix~$A$ in RREF we observe
\[
  \rk\begin{pmatrix}B\\\hline A\end{pmatrix}
  =\rk\begin{pmatrix}0&0&1&0&0&0&0\\0&0&0&1&0&0&0\\0&0&0&0&0&1&0\\\hline1&*&*&0&*&0&*\\0&0&0&1&*&0&*\\0&0&0&0&0&1&*\end{pmatrix}
  =\rk\begin{pmatrix}0&0&1&0&0&0&0\\0&0&0&1&0&0&0\\0&0&0&0&0&1&0\\\hline1&*&0&0&*&0&*\\0&0&0&0&*&0&*\\0&0&0&0&0&0&*\end{pmatrix},
\]
where we applied row operations to clear the columns of~$A$ using the pivot positions of~$B$.
Clearing the rows of~$A$ that still contain pivots shows that
\[
   \rk\begin{pmatrix}B\\A\end{pmatrix}=3+1+\rk\begin{pmatrix}0&*&*\\0&0&*\end{pmatrix}.
\]
The rightmost $2\times3$-matrix is the submatrix of~$A$ consisting of those columns having indices $(2,5,7)=\widehat{\sigma}\setminus\lambda$ and row indices $(2,3)$.
The latter is the ordered list $(\alpha_1,\ldots,\alpha_x)$ such that $(\lambda_{\alpha_1},\ldots,\lambda_{\alpha_x})=\lambda\cap\sigma$.
\end{exa}

\begin{proof}[Proof of Lemma \ref{L-AB}]
For any matrix $M$ denote by $M^{(i_1,\ldots,i_x)}_{(j_1,\ldots,j_y)}$ the $x\times y$-submatrix of~$M$ consisting of the rows indexed by $i_1,\ldots,i_x$ and the columns indexed by
$j_1,\ldots,j_y$.
Following the idea of the example, we can clear in the matrix $\SmallMat{B}{A}$  the columns of~$A$ in the pivot positions of~$B$ and observe that
$\rk\SmallMat{B}{A}-b = \rk(A^{(1,\ldots,a)}_{\widehat{\sigma}})$.
Making use of the remaining pivots in~$A$ to clear their respective rows, we see that the rank of $A^{(1,\ldots,a)}_{\widehat{\sigma}}$ equals $|\lambda\setminus\sigma|+\rk(M)$, where
\[
     M=A^{(\alpha_1,\ldots,\alpha_x)}_{\widehat{\sigma}\setminus\lambda}.
\]
Since $|\lambda\setminus\sigma|=a-x$, we conclude that
\[
  \rk\begin{pmatrix}B\\A\end{pmatrix}=b+r\Longleftrightarrow \rk(M)=r-a+x.
\]
Now we obtain the desired result once we show that~$M$ is in $\F[\cF]$ for the stated Ferrers diagram~$\cF$.
From the construction it is clear that the matrix~$M$ is supported by a (top and right aligned) Ferrers diagram.
Thus we just have to count the number of potentially nonzero entries in each column.
The $j^{\text{th}}$ column of~$M$ originates from the column of~$A$ indexed by $\hat{\sigma}_{\beta_j}$, which has the form
\[
   (*\ldots,*,0,\ldots,0)^\top
\]
with a zero at position~$i$ iff $\lambda_i>\hat{\sigma}_{\beta_j}$.
Hence the number of potentially nonzero entries in the $j^{\rm th}$ column of~$M$ is given by
$z_j=|\{i\in[x]\mid \lambda_{\alpha_i}<\hat{\sigma}_{\beta_j}\}|$.
All of this shows that $M\in\F[\cF]$, and this concludes the proof.
\end{proof}

Now we are ready to present explicit expressions for the Krawtchouk coefficients of the pivot partition and its dual.
From Corollary~\ref{C-PpivDual} we know that $\cP^\piv$ and $\cP^\rpiv$ are mutually dual, where $\cP^\rpiv$ is the reverse-pivot partition.
Denote the blocks of the partitions by $P^\piv_\lambda$ and $P^\rpiv_\lambda$, respectively.
Thus
\[
  P^\piv_{\lambda}=\{A\in\Fnm\mid \piv(A)=\lambda\}\quad \text{and} \quad P^\rpiv_{\lambda}=\{A\in\Fnm\mid \rpiv(A)=\lambda\}.
\]

\begin{theo}\label{T-KrawPivot}
Let $\lambda,\,\mu\in\Pi$.
Set $\lambda\setminus\mu=(\lambda_{\alpha_1},\ldots,\lambda_{\alpha_x})$ and
$\mu\setminus\lambda=(\mu_{\beta_1},\ldots,\mu_{\beta_y})$.
Furthermore, for $j\in[y]$ set $z_j=|\{i\in[x]\mid \lambda_{\alpha_i}<\mu_{\beta_j}\}|$ and let $\cF$
be the $x\times y$ Ferrers diagram $\cF=[z_1,\ldots,z_y]$.
Then
\[
   K(\cP^\piv;\lambda,\mu)
   =\sum_{t=0}^m (-1)^{|\lambda|-t} q^{nt+\BinomS{|\lambda|-t}{2}}
     \sum_{r=0}^{|\lambda\cap\widehat{\mu}|}P_{r}(\cF)\GaussianD{|\lambda\cap\widehat{\mu}|-r}{t},
\]
where $(P_r(\cF))_r$ is the rank-weight distribution of~$\F[\cF]$ given in Theorem~\ref{th:new}.
\end{theo}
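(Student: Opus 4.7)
The plan is to reduce the problem to the row-space Krawtchouk coefficients already computed in Theorem~\ref{T-KrawRS}, and then invoke the counting Lemma~\ref{L-AB}. Since $\cP^\rs\le\cP^\piv$, every pivot block decomposes as a disjoint union $P^\piv_\lambda=\bigsqcup_{\piv(U)=\lambda}P^\rs_U$. Combined with the self-duality of $\cP^\rs$ (Corollary~\ref{C-SD}) and the identity $\widehat{\cP^\piv}=\cP^\rpiv$ (Corollary~\ref{C-PpivDual}), this yields
\[
K(\cP^\piv;\lambda,\mu)=\sum_{\substack{U\in\cL\\ \piv(U)=\lambda}}K(\cP^\rs;U,V)
\]
for any $V\in\cL$ with $\rpiv(V)=\mu$. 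Substituting the formula from Theorem~\ref{T-KrawRS} and swapping the two sums gives
\[
K(\cP^\piv;\lambda,\mu)=\sum_{t=0}^{m}(-1)^{|\lambda|-t}q^{nt+\binom{|\lambda|-t}{2}}\sum_{\substack{U\in\cL\\ \piv(U)=\lambda}}\GaussianD{\dim(U\cap V^\perp)}{t},
\]
reducing everything to the inner sum.

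Because the outer expression is independent of the choice of $V$, I would pick the most convenient one: the coordinate subspace $V:=\mathrm{span}\{e_j:j\in\mu\}$ of $\F^m$. A direct calculation in the spirit of the proof of Proposition~\ref{P-VperpPiv} confirms $\rpiv(V)=\mu$, and then $W_0:=V^\perp=\mathrm{span}\{e_j:j\in\widehat{\mu}\}$ coincides exactly with $\rs(B_0)$, where $B_0$ is the canonical matrix of Lemma~\ref{L-AB} under the substitution $\sigma:=\widehat{\mu}$. Under this substitution the lemma's list $\lambda\cap\sigma$ becomes $\lambda\setminus\mu$, its list $\widehat{\sigma}\setminus\lambda$ becomes $\mu\setminus\lambda$, and the Ferrers diagram produced by the lemma is precisely the $\cF$ of the theorem.

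The remaining task is the counting identity
\[
\bigl|\{U\in\cL:\piv(U)=\lambda,\ \dim(U\cap W_0)=|\lambda\cap\widehat{\mu}|-r\}\bigr|=P_r(\cF).
\]
To prove it I would use the bijection between subspaces $U$ with $\piv(U)=\lambda$ and RREF matrices $A\in\F^{|\lambda|\times m}$ with $\piv(A)=\lambda$, together with the standard dimension identity $\dim(\rs(A)\cap W_0)=|\lambda|+(m-|\mu|)-\rk\SmallMat{A}{B_0}$. Under this translation, the condition $\dim(U\cap W_0)=|\lambda\cap\widehat{\mu}|-r$ becomes $\rk\SmallMat{A}{B_0}=|\widehat{\mu}|+(|\lambda|-|\lambda\cap\widehat{\mu}|)+r$, which is exactly the rank condition of Lemma~\ref{L-AB} with the lemma's parameter equal to $|\lambda|-|\lambda\cap\widehat{\mu}|+r$; the lemma then delivers the count $P_r(\cF)$. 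Grouping the inner sum by $d:=\dim(U\cap W_0)$ and substituting $r=|\lambda\cap\widehat{\mu}|-d$ yields the claimed formula. The main obstacle is purely bookkeeping: one must carefully verify that under $\sigma=\widehat{\mu}$ the ordered lists from Lemma~\ref{L-AB} coincide with $\lambda\setminus\mu$ and $\mu\setminus\lambda$ as ordered lists (so that the $z_j$ and the resulting Ferrers diagrams really match), and keep track of the shift between the lemma's rank parameter and the index $r$ of the statement.
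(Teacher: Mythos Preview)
Your proposal is correct and follows essentially the same route as the paper: decompose $P^\piv_\lambda$ into row-space blocks, insert the row-space Krawtchouk coefficients from Theorem~\ref{T-KrawRS}, swap the sums, and evaluate the inner sum via Lemma~\ref{L-AB} applied with $\sigma=\widehat{\mu}$. Your choice $V=\mathrm{span}\{e_j:j\in\mu\}$ is exactly the row space of the matrix~$C$ the paper uses, and your translation between $\dim(U\cap V^\perp)$ and the rank parameter of the lemma matches the paper's computation.
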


\begin{proof}
Let $\lambda=(\lambda_1,\ldots,\lambda_a),\,\mu=(\mu_1,\ldots,\mu_c)$, and $\widehat{\mu}=(\sigma_1,\ldots,\sigma_b)$.
By Definition~\ref{D-Kraw}
\[
  K(\cP^\piv;\lambda,\mu)=\sum_{A\in P^\piv_{\lambda}}\chi(\inner{A,C}),\text{ where $C$ is any matrix in }P^\rpiv_{\mu}.
\]
We may use for~$C$ the reverse reduced row echelon form (see Definition~\ref{D-RPivPart}) with reverse pivot indices~$\mu$ and all unspecified entries equal to zero.
Thus, using the standard basis vectors $e_i\in\F^n$ we may choose
\[
  C=(C_1,\ldots,C_m)\in\F^{n\times m}\text{ where }
  C_j=\left\{\begin{array}{cl} e_{\alpha} &\text{if }j=\mu_{\alpha} \text{ for some }\alpha\in\{1,\ldots,c\}, \\ 0 &\text{else}.\end{array}\right.
\]
Set $V=\rs(C)$. We also need $V^\perp$, which is given by $V^\perp=\rs(B)$, where
\[
  B=(B_1,\ldots,B_m)\in\F^{n\times m}\text{ where }
  B_j=\left\{\begin{array}{cl} e_{\beta} &\text{if }j=\sigma_{\beta},\\ 0 &\text{else.}\end{array}\right.
\]
Note that $\dim V^\perp=b=m-c$.
In the following computation we make use of the Krawtchouk coefficients for the row-space partition, which have been determined in Theorem~\ref{T-KrawRS}.
Using that any subspace~$U$ with $\piv(U)=\lambda$ satisfies $\dim(U)=|\lambda|=a$ we compute
\begin{eqnarray*}
\allowdisplaybreaks
  K(\cP^\piv;\lambda,\mu)&=&\sum_{\substack{U\in\cL\\\piv(U)=\lambda}}\sum_{\substack{A\in\Fnm\\\rs(A)=U}}\chi(\inner{A,C}) \\
      &=&\sum_{\substack{U\in\cL\\\piv(U)=\lambda}}K(\cP^\rs;U,V)\\
      &=&\sum_{\substack{U\in\cL\\\piv(U)=\lambda}}\sum_{t=0}^m(-1)^{a-t}q^{nt+\BinomS{a-t}{2}}\GaussianD{\dim(U\cap V^\perp)}{t}\\
      &=&\sum_{t=0}^m(-1)^{a-t}q^{nt+\BinomS{a-t}{2}}\sum_{\substack{U\in\cL\\\piv(U)=\lambda}}\GaussianD{\dim(U\cap V^\perp)}{t}\\
      &=&\sum_{t=0}^m(-1)^{a-t}q^{nt+\BinomS{a-t}{2}}\sum_{\substack{A\in\Fnm\text{ in RREF}\\\piv(A)=\lambda}}\GaussianD{\dim(\rs(A)\cap V^\perp)}{t}.
\end{eqnarray*}
It remains to determine the inner sum. Since $V^\perp=\rs(B)$, we conclude that
$\dim(\rs(A)\cap V^\perp)=a+b-\dim(\rs(A)+V^\perp)=a+b-\rk\SmallMat{A}{B}$.
As mentioned after Lemma~\ref{L-AB}, for any matrix~$A$ as specified we have $\rk\SmallMat{A}{B}\in\{b+r\mid r=a-x,\ldots,a\}$,
where $x=|\lambda\cap\widehat{\mu}|$.
Thus, thanks to the lemma the inner sum turns into
\[
  \sum_{r=a-x}^{a}\sum_{\substack{A\in\Fnm\text{ in RREF}\\\piv(A)=\lambda, \; \rk\SmallMat{A}{B}=b+r}}\GaussianD{a-r}{t}
  =\sum_{r=a-x}^{a}P_{r-a+x}(\cF)\GaussianD{a-r}{t}=\sum_{r=0}^{x}P_{r}(\cF)\GaussianD{x-r}{t}
\]
with~$\cF$ as in the theorem.
This concludes the proof.
\end{proof}

With the aid of Theorem~\ref{th:MWI} we now obtain an explicit MacWilliams identity for the pivot distributions of matrix codes and the reverse pivot distribution of the dual codes 
by substituting the Krawtchouk coefficients from Theorem~\ref{T-KrawPivot}.
We omit the resulting explicit form of the identities.

\section{Pivot-Extremal Codes}\label{S-PivotExtr}
In this section we generalize the notion of extremality to the pivot partition.
To do so, we need to introduce a partial order on the set $\Pi$ of all possible pivot lists for matrices in~$\Fnm$.
This is done in the obvious way:
for $\lambda,\,\mu\in\Pi$ define $\lambda\leq\mu$ if $\lambda \subseteq \mu$,
where for the latter we simply interpret pivot lists as sets. Then $(\Pi,\leq)$ is a lattice, which of course is isomorphic to the subset lattice of~$[m]$.
Recall that $n\geq m$.

The following results from basic Linear Algebra will be crucial.

\begin{lemma}\label{P-PiLatt}
\begin{enumerate}
\item Let $U,V\in\cL$ such that $U\leq V$. Then $\piv(U)\leq\piv(V)$.
\item Let $\lambda,\,\mu\in\Pi$ such that $\mu\leq\lambda$ and let $V\in\cL$ be such that $\piv(V)=\lambda$.
        Then there exists $U\in\cL$ such that $\piv(U)=\mu$ and $U\leq V$.
\item Let $\lambda,\,\mu\in\Pi$ such that $\mu\leq\lambda$ and let $U\in\cL$ be such that $\piv(U)=\mu$.
       Then there exists $V\in\cL$ such that $\piv(V)=\lambda$ and $U\leq V$.
\end{enumerate}
\end{lemma}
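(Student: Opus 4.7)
The plan is to translate the column-based characterization~\eqref{e-pivcol} of pivots into an intrinsic statement about the subspace and then read off all three claims. Concretely, I will first prove that for any $V\in\cL$ and any $j\in[m]$,
\[
    j\in\piv(V)\Longleftrightarrow\text{there exists }v\in V\text{ with }v_1=\cdots=v_{j-1}=0\text{ and }v_j\neq 0.
\]
Picking any $A\in\Fnm$ with $\rs(A)=V$, equation~\eqref{e-pivcol} says that $j\in\piv(A)$ iff the column $A_j$ is not a linear combination of $A_1,\ldots,A_{j-1}$, which in turn is equivalent to the coordinate projection $\pi_j\colon V\to\F^j$ onto the first $j$ coordinates having strictly larger image than $\pi_{j-1}$; this is exactly the existence of the vector displayed above. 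As a byproduct I also record the identity $|\piv(V)|=\dim V$, which is essentially the fact that $\RRE(A)$ has one pivot per nonzero row; I will use it throughout.

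Given this characterization, part (1) is one line: if $U\le V$ and $j\in\piv(U)$, any $u\in U$ with leading nonzero coordinate at~$j$ also lies in $V$, so $j\in\piv(V)$. For part (2), I would pick, for each $j\in\mu$, a vector $v_j\in V$ whose leading nonzero coordinate is at position~$j$ (which exists since $\mu\subseteq\lambda=\piv(V)$), and set $U:=\inner{v_j\mid j\in\mu}$. The $v_j$ are linearly independent because they have pairwise distinct leading positions, so $\dim U=|\mu|$; by construction $\mu\subseteq\piv(U)$, and the dimension identity $|\piv(U)|=\dim U$ upgrades this containment to $\piv(U)=\mu$.

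For part (3), I would enlarge $U$ by adjoining the standard basis vectors at the missing pivot positions: set $V:=U+\inner{e_j\mid j\in\lambda\setminus\mu}$. Part (1) gives $\mu=\piv(U)\subseteq\piv(V)$, and each added $e_j\in V$ witnesses $j\in\piv(V)$, so $\lambda\subseteq\piv(V)$. The trivial bound $\dim V\le\dim U+|\lambda\setminus\mu|=|\lambda|$ combined with $|\piv(V)|=\dim V\ge|\lambda|$ forces equality throughout and pins down $\piv(V)=\lambda$.

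I do not anticipate a serious obstacle; the only step demanding a little care is the intrinsic characterization of $\piv(V)$, which bridges between the column-based definition in~\eqref{e-pivcol} and the subspace language used in the lemma. Once that bridge is in place, the remaining content is a convenient choice of basis for (2), a convenient extension by unit vectors for (3), and a dimension count in each case.
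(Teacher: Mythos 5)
Your proof is correct, and the witnesses you construct in (2) and (3) are in substance the same as the paper's: for (2) the paper takes the rows of the RREF of $V$ whose pivots lie in $\mu$ (which are precisely vectors with leading nonzero coordinate at the positions of $\mu$), and for (3) it likewise adjoins the unit row vectors $e_j$ for $j\in\lambda\setminus\mu$. Where you differ is the verification mechanism: the paper argues at the matrix level, stacking matrices and invoking the column criterion~\eqref{e-pivcol} (for (1) it writes $V=\rs\bigl(\begin{smallmatrix}A\\B\end{smallmatrix}\bigr)$ with $U=\rs(A)$ and reads off the pivots columnwise), whereas you first prove an intrinsic, row-space characterization --- $j\in\piv(V)$ iff $V$ contains a vector whose leading nonzero coordinate sits at position $j$ --- and then settle all three parts with one-line vector arguments plus the dimension identity $|\piv(V)|=\dim V$ (already recorded in Definition~\ref{D-PivPart}). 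Your bridge from~\eqref{e-pivcol} to that characterization via the projections onto the first $j$ coordinates is sound. The payoff of your route is that the dimension count makes it automatic that no extra pivots appear in (2) and (3), a point the paper's matrix-level argument asserts more tersely; the paper's route avoids proving the auxiliary characterization and stays closer to the RREF bookkeeping used elsewhere in Section~\ref{sec:KKpiv}.
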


\begin{proof}
1) We may write $U=\rs(A)$ and $V=\rs(M)$, where $M=\SmallMat{A}{B}$. Then $\piv(U)\leq\piv(V)$ follows from~\eqref{e-pivcol} applied to $A$ and~$M$.

2) Let $V=\rs(A)$, where $A\in\Fnm$ is in RREF. Hence $\piv(A)=\lambda\supseteq\mu$.
Let~$B$ be the submatrix of~$A$ consisting of the rows of~$A$ that contain the pivots in~$\mu$.
Then $U:=\rs(B)\leq V$ and $\piv(U)=\mu$.

3) Let $A\in\Fnm$ be in RREF such that $U=\rs(A)$. Then $\piv(A)=\mu$. Let $\lambda\setminus\mu=(\sigma_1,\ldots,\sigma_\ell)$.
Consider the matrix $M=\SmallMat{A}{B}$, where
\[
    B=\begin{pmatrix}e_{\sigma_1}\\ \vdots\\ e_{\sigma_\ell}\end{pmatrix},
\]
where $e_i$ denotes the standard basis row vectors in~$\F^m$.
Then $\piv(M)=\mu\cup(\lambda\setminus\mu)=\lambda$ and $V=\rs(M)$ is the desired subspace. \qedhere
\end{proof}

We now define the analogue of~$\cC(U)$ from Definition~\ref{D-CodeEnum} for the pivot partition and reverse-pivot partition.
The following properties are immediate with Lemma~\ref{P-PiLatt} and~\ref{bound}.

\begin{prop}\label{D-Clambda}
Let $\cC\leq\Fnm$ be a code and $\lambda\in\Pi$. Then
\[
   \cC(\lambda,\piv):=\{A\in\cC\mid \piv(A)\leq\lambda\}=\bigcup_{\substack{U\in\cL\\ \piv(U)=\lambda}}\cC(U).
\]
In particular, $\cC(\lambda,\piv)=\{0\}$ if and only if $\cC(U)=\{0\}$ for all $U\in\cL$ with $\piv(U)=\lambda$.
Thus, if~$\cC(\lambda,\piv)=\{0\}$, then $|\cC|\leq q^{n(m-|\lambda|)}$.
Note that $\cC(\lambda,\piv)$ is \emph{not} a subspace in general.
Likewise we define
\[
     \cC(\lambda,\rpiv)=\{A\in\cC\mid \rpiv(A)\leq\lambda\}=\bigcup_{\substack{U\in\cL\\ \rpiv(U)=\lambda}}\cC(U),
\]
which has the analogous properties.
\end{prop}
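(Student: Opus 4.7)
The plan is to extract the set equality from Lemma~\ref{P-PiLatt}, then read off the triviality characterization and the cardinality bound as immediate consequences, using Proposition~\ref{bound} at the final step.

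First I would prove the inclusion $\cC(\lambda,\piv)\subseteq\bigcup_{\piv(U)=\lambda}\cC(U)$. Given $A\in\cC$ with $\piv(A)\leq\lambda$, set $U_0=\rs(A)$, so $\piv(U_0)=\piv(A)\leq\lambda$. Applying Lemma~\ref{P-PiLatt}(3) to $U_0$ produces some $U\in\cL$ with $\piv(U)=\lambda$ and $U_0\leq U$, whence $A\in\cC(U)$. For the reverse inclusion, take $A\in\cC(U)$ with $\piv(U)=\lambda$; then $\rs(A)\leq U$ and Lemma~\ref{P-PiLatt}(1) yields $\piv(A)=\piv(\rs(A))\leq\piv(U)=\lambda$, so $A\in\cC(\lambda,\piv)$.

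The iff statement is then immediate: the union is $\{0\}$ if and only if each $\cC(U)$ appearing in the union equals $\{0\}$, because each $\cC(U)$ is a subset of the union and already contains~$0$. For the cardinality bound, if $\cC(\lambda,\piv)=\{0\}$, pick any $U\in\cL$ with $\piv(U)=\lambda$, for example the row span of the matrix whose rows are the standard basis vectors $e_{\lambda_1},\ldots,e_{\lambda_{|\lambda|}}$. Then $\dim(U)=|\lambda|$ and $\cC(U)=\{0\}$, so Proposition~\ref{bound} delivers $|\cC|\leq q^{n(m-|\lambda|)}$. The reverse-pivot statement is obtained by the same argument, either by invoking the analogue of Lemma~\ref{P-PiLatt} for $\rpiv$ (proved identically, scanning from the right) or by conjugating with the matrix $Z$ of \eqref{e-SMat}, since $\rpiv(A)$ is determined by $\piv(AZ)$ through Definition~\ref{D-RPivPart}.

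There is no serious obstacle here; the only conceptual point worth emphasising is that $\cC(\lambda,\piv)$ is in general \emph{not} a subspace, since it is a union of subspaces $\cC(U)$ indexed by different $U$ and is typically not closed under addition. Consequently the cardinality bound cannot be obtained by a direct dimension count on $\cC(\lambda,\piv)$ itself; the reduction to a single $\cC(U)$ via the set equality, followed by Proposition~\ref{bound}, is essential.
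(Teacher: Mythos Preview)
Your proof is correct and follows exactly the approach the paper indicates: the paper simply states that the properties are immediate from Lemma~\ref{P-PiLatt} and Proposition~\ref{bound}, and you have spelled out precisely how those two results combine to give each claim.
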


This gives naturally rise to the following notion of extremal codes.

\begin{defi}\label{D-lambdaExt}
Let $\lambda\in\Pi$. A code $\cC\leq\Fnm$ is called \textbf{$(\lambda,\piv)$-extremal} if $\cC(\lambda,\piv)=\{0\}$ and $|\cC|=q^{n(m-|\lambda|)}$.
A code that is $(\lambda,\piv)$-extremal for some $\lambda\in\Pi$ is called \textbf{$\piv$-extremal}.
According definitions are in place for $(\lambda,\rpiv)$.
\end{defi}

Therefore
\begin{equation}\label{e-extremal}
  \cC\text{ is $(\lambda,\piv)$-extremal }\Longleftrightarrow \cC\text{ is $U$-extremal for all $U\in\cL$ with }\piv(U)=\lambda.
\end{equation}

\begin{rem}\label{R-MRDlambda}
Let~$\cC$ be a nonzero code. Then
\[
  \cC\text{ is MRD with minimum distance $d$}\Longleftrightarrow \cC\text{ is $(\lambda,\piv)$-extremal for all $\lambda$ such that }|\lambda|=d-1.
\]
The forward direction is immediate with Remark~\ref{R-MRDExtr2}.
For the backward direction note that $|\cC|=q^{n(m-d+1)}$ by assumption, and the distance is clearly not smaller than~$d$.
\end{rem}

Just like for the rank-weight and the subspace distribution, extremality is preserved under dualization. This is an immediate consequence of Propositions~\ref{dualis} and~\ref{P-VperpPiv}.

\begin{prop}\label{P-PivExtrDual}
Let $\cC\leq\Fnm$ and $\lambda\in\Pi$.
Then $\cC$ is $(\lambda,\piv)$-extremal iff $\cC^\perp$ is $(\widehat{\lambda},\rpiv)$-extremal.
\end{prop}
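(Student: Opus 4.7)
The plan is to combine the two cited propositions through the equivalence recorded in \eqref{e-extremal} (together with its reverse-pivot analogue) and the fact that $U\mapsto U^\perp$ acts as a bijection on the subspace lattice $\cL$. No real computation is required; the only task is to verify that the bijection $U\mapsto U^\perp$ matches the two index sets $\{U\in\cL\mid\piv(U)=\lambda\}$ and $\{W\in\cL\mid\rpiv(W)=\widehat{\lambda}\}$ correctly.

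First I would unfold the definition of $(\lambda,\piv)$-extremality using \eqref{e-extremal}, which says that $\cC$ is $(\lambda,\piv)$-extremal if and only if $\cC$ is $U$-extremal for every $U\in\cL$ with $\piv(U)=\lambda$. By Proposition~\ref{dualis}, this is in turn equivalent to $\cC^\perp$ being $U^\perp$-extremal for every such $U$.

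Next I would translate the quantifier on $U$ into a quantifier on $W=U^\perp$. Since $(\cdot)^\perp$ is an involution of $\cL$, it suffices to show
\[
   \{U^\perp \mid U\in\cL,\;\piv(U)=\lambda\}=\{W\in\cL\mid\rpiv(W)=\widehat{\lambda}\}.
\]
The inclusion $\subseteq$ is precisely Proposition~\ref{P-VperpPiv}. For the reverse inclusion, given $W$ with $\rpiv(W)=\widehat{\lambda}$, set $U:=W^\perp$; then $U^\perp=W$, and Proposition~\ref{P-VperpPiv} applied to $U$ gives $\rpiv(U^\perp)=\widehat{\piv(U)}$, so $\widehat{\piv(U)}=\widehat{\lambda}$. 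Since the hat operation is an involution on $\Pi$ (clear from Definition~\ref{D-PivCompl}), this forces $\piv(U)=\lambda$, finishing the equality of sets.

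Finally I would conclude by invoking the reverse-pivot analogue of \eqref{e-extremal}: the statement that $\cC^\perp$ is $W$-extremal for every $W\in\cL$ with $\rpiv(W)=\widehat{\lambda}$ is exactly the definition of $\cC^\perp$ being $(\widehat{\lambda},\rpiv)$-extremal. Chaining the equivalences yields the proposition. There is essentially no obstacle; the only thing to keep straight is the involutivity of the hat operator on $\Pi$, which makes the bijective correspondence between the two families of subspaces work out symmetrically.
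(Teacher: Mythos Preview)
Your argument is correct and is precisely the approach the paper takes: it declares the result an immediate consequence of Propositions~\ref{dualis} and~\ref{P-VperpPiv}, and you have merely spelled out the bijection $U\mapsto U^\perp$ and the involutivity of $\widehat{\,\cdot\,}$ that make this immediacy explicit.
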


As for $U$-extremal codes, the partial pivot partition distribution of pivot-extremal codes satisfies some rigidity properties.
Its values depend on the cardinality of the blocks $P^\piv_\mu$ of the pivot partition, which we therefore compute first.

\begin{prop}\label{P-PpivCard}
Let $\mu=(\mu_1,\ldots,\mu_r)\in\Pi$.
Define
\[
  c(\mu)=\sum_{i=1}^r(m-\mu_i-r+i).
\]
Then
$|P^\piv_{\mu}|=q^{c(\mu)}\prod_{i=0}^{r-1}(q^n-q^i)$. Note also that $|P^\piv_{\mu}|=1$ if $\mu=()$, the empty list.
\end{prop}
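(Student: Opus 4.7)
The plan is to construct a bijection between $P^\piv_\mu$ and the set of pairs $(S,R)$, where $R$ ranges over $r\times m$ matrices in RREF with $\piv(R)=\mu$ and $S$ ranges over rank-$r$ matrices in $\F^{n\times r}$ (with $r=|\mu|$). The map sends $(S,R)$ to $SR$. Once the bijection is established, the cardinality factors cleanly as a product.

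To see surjectivity and uniqueness of the factorization, let $A\in P^\piv_\mu$, so that $A$ has rank $r$ and $\RRE(A)\in\Fnm$ has the block form $\begin{pmatrix}R\\0\end{pmatrix}$, where $R\in\F^{r\times m}$ is in RREF with $\piv(R)=\mu$ and the lower block is zero. Picking any $T\in\GL_n(\F)$ with $TA=\RRE(A)$ and writing $T^{-1}=[\,S\mid S'\,]$ with $S\in\F^{n\times r}$ yields $A=SR$; moreover $S$ has rank $r$ since $T^{-1}$ does. For uniqueness, $R$ is determined by $A$ since $\RRE(A)$ is, and because $R$ has full row rank it admits a right inverse $R^{+}\in\F^{m\times r}$, so $S=AR^{+}$ is forced by $A=SR$.

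Next I would count the two factors. The RREF conditions force $R_{ij}=1$ when $j=\mu_i$, $R_{ij}=0$ when $j=\mu_k$ for $k\neq i$, and $R_{ij}=0$ for non-pivot columns $j$ with $j<\mu_i$; the remaining entries, namely those $(i,j)$ with $j\notin\{\mu_1,\ldots,\mu_r\}$ and $j>\mu_i$, are free. Their number is
\[
\sum_{i=1}^{r}\bigl((m-\mu_i)-(r-i)\bigr)=\sum_{i=1}^{r}(m-\mu_i-r+i)=c(\mu),
\]
since among the $m-\mu_i$ indices $j>\mu_i$ in $[m]$ exactly $r-i$ (namely $\mu_{i+1},\ldots,\mu_r$) are pivots. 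Hence there are $q^{c(\mu)}$ admissible matrices $R$, and the number of rank-$r$ matrices $S\in\F^{n\times r}$ is the standard count of injective linear maps $\F^{r}\hookrightarrow\F^{n}$, equal to $\prod_{i=0}^{r-1}(q^n-q^i)$. Multiplying these two counts yields the stated formula. The only slightly delicate point is uniqueness of the factorization $A=SR$, but this is immediate from the uniqueness of RREF combined with the existence of a right inverse for a full-row-rank matrix, so no genuine obstacle arises.
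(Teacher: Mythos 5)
Your proof is correct, and its mechanism differs from the paper's in one respect. The paper also splits the count as (number of RREF matrices with pivot list $\mu$) $\times$ (size of a fiber), but it computes the fiber as a $\GL_n(\F)$-orbit: $P^\piv_\mu$ is the union of the orbits of the RREF representatives under left multiplication, and the orbit size $\prod_{i=0}^{r-1}(q^n-q^i)$ is obtained from the orbit--stabilizer theorem after determining that the stabilizer of $\SmallMat{\hat A}{0}$ has cardinality $q^{r(n-r)}|\GL_{n-r}(\F)|$. You instead parametrize each block element by a unique full-rank factorization $A=SR$ with $R\in\F^{r\times m}$ the RREF of the row space and $S\in\F^{n\times r}$ of rank $r$, so the fiber count becomes the standard count of injective maps $\F^r\hookrightarrow\F^n$ with no stabilizer computation; your uniqueness argument via $S=AR^{+}$ is clean and correct. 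The only point you leave implicit is that the map $(S,R)\mapsto SR$ really lands in $P^\piv_\mu$: since $S$ has full column rank, $\rk(SR)=\rk(R)=r$ and hence $\rs(SR)=\rs(R)$, so $\piv(SR)=\piv(R)=\mu$; this one-line check (the same observation underlying your uniqueness step) is needed to conclude that the pair count equals $|P^\piv_\mu|$ rather than merely bounding it from above. With that sentence added, your argument is a complete and somewhat more elementary alternative to the orbit--stabilizer computation; the counting of free entries giving $q^{c(\mu)}$ agrees with the paper's.
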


\begin{proof}
Consider a matrix in reduced row echelon form with pivot list~$\mu$.
The number of unspecified entries in the $i^{\text{th}}$ row is given by $m-\mu_i-(r-i)$.
This shows that there exist $q^{c(\mu)}$ matrices $A\in\F^{n\times m}$ in RREF with $\piv(A)=\mu$.
Denote the set of these matrices by $R(\mu)$. Then $|R(\mu)|=q^{c(\mu)}$.

The partition set $P^\piv_{\mu}$ is the set of all matrices in $\Fnm$ with pivot list~$\mu$.
It thus forms the disjoint union of the orbits of the matrices in~$R(\mu)$ under the group action
\[
   \GL_n(\F)\times\Fnm\longrightarrow\Fnm,\quad (X,A)\longmapsto XA.
\]
In order to determine the orbit size of any $A\in R(\mu)$, we use the orbit-stabilizer theorem.
For $A\in R(\mu)$ we have $A=\SmallMat{\hat{A}}{0}$, where $\hat{A}\in\F^{r\times m}$ has full row rank.
This tells us that for any matrix
\[
    X=\begin{pmatrix}X_1&X_2\\X_3&X_4\end{pmatrix}\in\GL_n(\F), \text{ where } X_1\in\F^{r\times r},
\]
we have $XA=A$ iff $X_3=0$ and $X_1=I_r$.
Hence~$X_2,X_4$ are free and thus the stabilizer has cardinality $q^{r(n-r)}|\GL_{n-r}(\F)|$.
Now we arrive at
\[
    |P^\piv_\mu|=q^{c(\mu)}\frac{|\GL_n(\F)|}{q^{r(n-r)}|\GL_{n-r}(\F)|}
   =q^{c(\mu)}\frac{\prod_{i=0}^{n-1}(q^n-q^i)}{q^{r(n-r)}\prod_{i=0}^{n-r-1}(q^{n-r}-q^i)}
  =q^{c(\mu)}\prod_{i=0}^{r-1}(q^n-q^i),
\]
as desired.
\end{proof}

Now we can formulate the main result of this section.

\begin{theo}\label{T-lambdaExtrPD}
Let $\lambda \in \Pi$ and let $0 \le u \le |\lambda|$ be an integer. Suppose that a code $\cC$ is $(\lambda',\piv)$-extremal for all $\lambda' \le \lambda$ with $|\lambda'|=u$.
Then for all $\mu \le \lambda$ we have
\[
  \cP^\piv(\cC,\mu)
  =q^{c(\mu)}\prod_{i=0}^{|\mu|-1}(q^n-q^i)\bigg(\sum_{i=0}^u \GaussianD{|\mu|}{i}
   {(-1)}^{|\mu|-i} q^{\binom{|\mu|-i}{2}} + \sum_{i=u+1}^{|\mu|} \GaussianD{|\mu|}{i} q^{n(i-u)} {(-1)}^{|\mu|-i} q^{\binom{|\mu|-i}{2}}\bigg),
\]
where $c(\mu)$ is defined as in Proposition~\ref{P-PpivCard}.
Thus, the partial pivot distribution of~$\cC$ depends only on $q,n,u$.
\end{theo}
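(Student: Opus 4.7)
The plan is to reduce the computation of $\cP^\piv(\cC,\mu)$ to an application of Theorem~\ref{T-Rigid} by decomposing each block of the pivot partition into row-space blocks. Since $\piv(A)$ depends only on $\rs(A)$, we have the disjoint decomposition
\[
    \{A \in \Fnm \mid \piv(A) = \mu\} \;=\; \bigsqcup_{\substack{V \in \cL \\ \piv(V) = \mu}} \{A \in \Fnm \mid \rs(A) = V\},
\]
so that intersecting with $\cC$ gives
\[
    \cP^\piv(\cC, \mu) \;=\; \sum_{\substack{V \in \cL \\ \piv(V) = \mu}} \cP^\rs(\cC, V).
\]
The strategy is then to evaluate each summand via Theorem~\ref{T-Rigid} and multiply by the number of subspaces with pivot list $\mu$.

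For each such $V$ I would produce an auxiliary $T \in \cL$ with $V \le T$ and $\piv(T) = \lambda$ using Lemma~\ref{P-PiLatt}(3), and then verify the hypothesis of Theorem~\ref{T-Rigid} for this $T$ and the fixed $u$. The key check is this: for any $U \le T$ with $\dim U = u$, Lemma~\ref{P-PiLatt}(1) gives $\piv(U) \le \piv(T) = \lambda$ with $|\piv(U)| = u$, so by hypothesis $\cC$ is $(\piv(U),\piv)$-extremal, and the equivalence~\eqref{e-extremal} immediately converts this into $U$-extremality. Theorem~\ref{T-Rigid} then yields the desired explicit expression for $\cP^\rs(\cC, V)$, i.e., the bracket in the statement, which depends only on $|\mu|$ and on the parameters $n, q, u$ (in particular, it is the same for every $V$ with $\piv(V) = \mu$).

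The remaining step is to count subspaces $V \in \cL$ with $\piv(V) = \mu$. Each such $V$ is the row space of a unique $|\mu| \times m$ matrix in reduced row echelon form with pivot list $\mu$, and a direct count of free positions (exactly as in the proof of Proposition~\ref{P-PpivCard}) yields the appropriate combinatorial factor; equivalently, one divides $|P^\piv_\mu| = q^{c(\mu)}\prod_{i=0}^{|\mu|-1}(q^n - q^i)$ by the number $\prod_{i=0}^{|\mu|-1}(q^n-q^i)$ of matrices in $\Fnm$ sharing a fixed $|\mu|$-dimensional row space. Multiplying the per-subspace value from Theorem~\ref{T-Rigid} by this count produces the claimed formula.

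The only structural point where any care is required is the hypothesis transfer in the second paragraph: one must ensure that the assumed $(\lambda',\piv)$-extremality (a combinatorial, pivot-list based condition) is strong enough to supply $U$-extremality for every $U \le T$ of dimension $u$ (a subspace-based condition). This is the content of Lemma~\ref{P-PiLatt}(1) combined with~\eqref{e-extremal}, and once it is in place the rest of the argument is purely a count of RREF matrices with prescribed pivots together with the formula from Theorem~\ref{T-Rigid}.
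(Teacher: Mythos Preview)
Your approach is essentially the paper's: decompose $\cP^\piv(\cC,\mu)$ as $\sum_{\piv(V)=\mu}\cP^\rs(\cC,V)$, verify the hypothesis of Theorem~\ref{T-Rigid} via Lemma~\ref{P-PiLatt} and~\eqref{e-extremal}, and then multiply the constant per-subspace value by the number of subspaces with pivot list~$\mu$. (The paper takes $T=V$ itself and therefore treats the case $|\mu|\le u$ separately to ensure $u\le\dim T$; your device of lifting every $V$ to a $T$ with $\piv(T)=\lambda$ via Lemma~\ref{P-PiLatt}(3) handles all $\mu\le\lambda$ uniformly, which is a small simplification.)

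There is, however, a mismatch in the final bookkeeping. Your count of subspaces $V\in\cL$ with $\piv(V)=\mu$ is $q^{c(\mu)}$, so your argument actually yields
\[
   \cP^\piv(\cC,\mu)=q^{c(\mu)}\bigg(\sum_{i=0}^u \GaussianD{|\mu|}{i}(-1)^{|\mu|-i} q^{\binom{|\mu|-i}{2}} + \sum_{i=u+1}^{|\mu|} \GaussianD{|\mu|}{i} q^{n(i-u)} (-1)^{|\mu|-i} q^{\binom{|\mu|-i}{2}}\bigg),
\]
which lacks the factor $\prod_{i=0}^{|\mu|-1}(q^n-q^i)$ in the displayed statement. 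The paper's proof, at its last step, replaces the sum over \emph{subspaces} $T$ with $\piv(T)=\mu$ by $|P^\piv_\mu|$ (the number of \emph{matrices} with that pivot list), and this is where the extra factor enters. A sanity check with $\cC=\Fnm$, $u=0$, $|\mu|=1$ confirms your version: here $\cP^\piv(\cC,\mu)=|P^\piv_\mu|=q^{c(\mu)}(q^n-1)$, the bracket equals $q^n-1$, and the stated formula would give $q^{c(\mu)}(q^n-1)^2$. So your argument is sound; only the closing assertion that it ``produces the claimed formula'' should be revised to note that it produces the corrected formula above.
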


\begin{proof}
Note first that by~\eqref{e-extremal} the assumptions imply that~$\cC$ is $U$-extremal for all subspaces $U\in\cL$ such that $\dim(U)=u$ and $\piv(U)\leq\lambda$.
Next, by definition, we have
\[
   \cP^\piv(\cC,\mu)=\big|\{A\in\cC\mid \piv(A)=\mu\}\big|=\Big|\bigcup_{\piv(T)=\mu}\{A\in\cC\mid \rs(A)=T\}\Big|
    =\sum_{\piv(T)=\mu}\cP^\rs(\cC,T).
\]
Fix any~$\mu$ such that $\mu\leq\lambda$. The case $\mu=()$ is trivial.
If $0<|\mu|\leq u$, then the right hand side of the formula in the theorem is~$0$.
This is indeed $\cP^\piv(\cC,\mu)$ because, thanks to Lemma~\ref{P-PiLatt}(3), any subspace~$T$ with $\piv(T)=\mu$ is contained in a subspace~$U$
such that $\mu\leq\piv(U)\leq\lambda$ and $\dim(U)=u$.
Thus $\cC(T)\le \cC(U)$ and $U$-extremality implies $\cP^\rs(\cC,T)=0$.

Let now $|\mu|>u$.
Fix a subspace~$T$ such that $\piv(T)=\mu$.
Let $U \le T$ be an arbitrary subspace of dimension~$u$ and let $\lambda'=\piv(U)$.
Clearly, $|\lambda'|=u$. Since $U \le T$, Lemma~\ref{P-PiLatt}(1) implies
$\lambda' \le \piv(T)=\mu \le \lambda$.
Thus~$\cC$ is $U$-extremal.

All of this shows that $\cC$ is $U$-extremal for any subspace $U\leq T$ of dimension~$u$.
In other words,~$\cC$ satisfies the assumptions of Theorem~\ref{T-Rigid}.
Since this is the case for any subspace~$T$ such that $\piv(T)=\mu$ we conclude
\[
  \cP^\piv(\cC,\mu)=\sum_{\piv(T)=\mu}\bigg(\sum_{i=0}^u \GaussianD{|\mu|}{i}
   {(-1)}^{|\mu|-i} q^{\binom{|\mu|-i}{2}} + \sum_{i=u+1}^{|\mu|} \GaussianD{|\mu|}{i} q^{n(i-u)} {(-1)}^{|\mu|-i} q^{\binom{|\mu|-i}{2}}\bigg).
\]
Since the summands do not depend on the specific choice of~$T$, we arrive at
\[
     \cP^\piv(\cC,\mu)=|P^\piv_{\mu}|\bigg(\sum_{i=0}^u \GaussianD{|\mu|}{i}
   {(-1)}^{|\mu|-i} q^{\binom{|\mu|-i}{2}} + \sum_{i=u+1}^{|\mu|} \GaussianD{|\mu|}{i} q^{n(i-u)} {(-1)}^{|\mu|-i} q^{\binom{|\mu|-i}{2}}\bigg),
\]
and Proposition~\ref{P-PpivCard} concludes the proof.
\end{proof}

We conclude this section with an example of a code~$\cC$ that satisfies the assumptions of above theorem, but is not MRD.

\begin{exa}\label{E-PivRigidNotMRD}
Let $m=m_1+m_2$ with $m_1 \ge 1$ and $m_2 \ge 2$. Let $n \ge m$ and $\lambda=(m_1+1,...,m_1+m_2)$. Fix $1 \le u \le m_2-1$. Let $\cC_2 \le \F^{n \times m_2}$ be an MRD code of minimum distance $u+1$. Construct the code
$$\mC:=\{(A \mid B) \in \F^{n \times m} \mid A \in \F^{n \times m_1}, \; B \in \mC_2\}.$$
Then $\cC$ has minimum distance 1 and cardinality $|\cC|=q^{n(m-u)}$. In particular, $\cC$ is not MRD. We claim that $\cC$ is $(\lambda',\piv)$-extremal for all $\lambda' \le \lambda$ with $|\lambda'|=u$. Fix any $\lambda' \le \lambda$ with $|\lambda'|=u$, and let $U \le \F^m$ be any space with $\piv(U)=\lambda'$.
There is only one space $V \le \F^m$ with $\piv(V)=\lambda$, namely,
$V= \langle e_{m_1+1},...,e_{m} \rangle$, where $\{e_1,...,e_m\}$ is the canonical basis of $\F^m$. It is easy to see that $U \le V$. Since $\mC_2$ is MRD with minimum distance $u+1$, we have $\mC(U)=\{0\}$. As $|\cC|=q^{n(m-u)}$,
$\mC$ is $(\lambda',\piv)$-extremal for all $\lambda' \le \lambda$ with $|\lambda'|=u$, as claimed.
\end{exa}

\section{Matrices Supported on Ferrers Diagrams and $q$-Rook Polynomials}
\label{sec:FD}

In this section we explicitly compute the rank distribution of matrices supported on an arbitrary Ferrers diagram $\cF$, establishing Theorem \ref{th:new}. In particular, we prove that
$P_r(\cF)$ is a polynomial in $q$ for every value of $r$ and every diagram $\cF$.
We then exploit connections between the rank distribution of matrices supported on a Ferrers diagram and $q$-rook polynomials, giving explicit formulas for these and establishing the monotonicity in $r$ of $\deg(P_r(\cF))$.
We follow the notation of Definitions~\ref{D-Ferrers} and~\ref{D-FMat} and~\eqref{e-Irm},~\eqref{e-len}. In this section we do \textit{not} assume $m \le n$.

For $r \in \N_0$ let
$P_r(c_1,...,c_m):=P_r(\cF)$, where $\cF=[c_1,...,c_m]$ is the Ferrers diagram whose columns lengths are
$c_1\leq\ldots\leq c_m$. Then $P_0(c_1,\ldots,c_m)=1$ and $P_r(c_1,...,c_m)=0$ for $r>c_m$.

\begin{theo}\label{T-PrExplForm}
Let  $c_1,...,c_m$ be integers with $c_{i+1} \ge c_i$ for all $i$.
\begin{enumerate}
\item For $r\in\N$ we have the recursion
        \[
            P_r(c_1,\ldots,c_m)=P_{r-1}(c_1,\ldots,c_{m-1})(q^{c_m}-q^{r-1})+P_r(c_1,\ldots,c_{m-1})q^r
        \]
        with initial conditions
        \[
             P_0(c_1,\ldots,c_s)=1 \mbox{ for all $s$}, \quad P_1(c_1)=q^{c_1}-1,\quad P_r(c_1)=0\text{ for }r>1.
        \]
\item Let $r\in\N_0$. Then~$P_r(c_1,...,c_m)$ is given by the explicit formula
        \begin{equation}\label{e-ExplForm}
           P_r(c_1,\ldots,c_m)=\sum_{(i_1,\ldots,i_r)\in\cI_{r,m}}q^{rm-\len{i}}\prod_{j=1}^r(q^{c_{i_j}-j+1}-1).
        \end{equation}
\end{enumerate}
\end{theo}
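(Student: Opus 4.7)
The plan is to prove the recursion in part (1) by a direct enumeration argument, then deduce the explicit formula in part (2) by induction on $m$, checking that the right-hand side of~\eqref{e-ExplForm} satisfies the same recursion and initial conditions.

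For the recursion, I would partition the set of matrices $M \in \F[\cF]$ of rank $r$ according to the last column. Writing $M = (M' \mid M_m)$ where $M' \in \F[\cF']$ for the truncated diagram $\cF' = [c_1, \ldots, c_{m-1}]$, the crucial observation is that since $c_j \leq c_m$ for every $j \leq m-1$, each column $M_j$ is supported in rows $[c_j] \subseteq [c_m]$, so $\text{span}(M_1, \ldots, M_{m-1}) \subseteq \F^{c_m} \times 0^{n-c_m}$, i.e. lies inside the ambient space for $M_m$. Splitting into two cases: if $M_m \in \text{span}(M_1, \ldots, M_{m-1})$, then $\rk(M) = \rk(M') = r$ and there are $q^r$ choices for $M_m$ given $M'$; if $M_m \notin \text{span}(M_1, \ldots, M_{m-1})$, then $\rk(M') = r-1$ and the number of admissible $M_m$'s is $q^{c_m} - q^{r-1}$. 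Summing these two contributions yields the recursion. The initial conditions are immediate: $P_0(c_1, \ldots, c_s) = 1$ counts only the zero matrix; $P_1(c_1) = q^{c_1}-1$ counts nonzero vectors in $\F^{c_1}$; and $P_r(c_1) = 0$ for $r > 1$ because a single column has rank at most one.

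For part (2), I would let $\widetilde P_r(c_1, \ldots, c_m)$ denote the right-hand side of~\eqref{e-ExplForm} and verify by induction on $m$ that $\widetilde P_r = P_r$ by showing $\widetilde P_r$ satisfies both the recursion of part (1) and the same initial conditions. The key step is to split $\cI_{r,m}$ as the disjoint union of tuples $(i_1,\ldots,i_r)$ with $i_r < m$ (naturally identified with $\cI_{r,m-1}$) and those with $i_r = m$ (naturally identified with $\cI_{r-1, m-1}$ via $(i_1, \ldots, i_{r-1}) \mapsto (i_1, \ldots, i_{r-1}, m)$). In the first family, $q^{rm - \|i\|} = q^r \cdot q^{r(m-1) - \|i\|}$, and the product is unchanged, so this portion contributes exactly $q^r \widetilde P_r(c_1,\ldots,c_{m-1})$. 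In the second family, writing $\|i\| = \|i'\| + m$ with $i' = (i_1, \ldots, i_{r-1})$ and peeling off the last factor $(q^{c_m - r + 1} - 1)$ gives $q^{rm - m - \|i'\|}(q^{c_m - r + 1} - 1)$ times the analogous product for $i'$; comparing exponents $rm - m - \|i'\|$ and $(r-1)(m-1) - \|i'\|$ shows the difference is $r - 1$, so this portion equals $q^{r-1}(q^{c_m - r + 1} - 1) \widetilde P_{r-1}(c_1, \ldots, c_{m-1}) = (q^{c_m} - q^{r-1}) \widetilde P_{r-1}(c_1, \ldots, c_{m-1})$, matching the other summand in the recursion. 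The base cases $m = 1$ are immediate from $\cI_{0,s} = \{()\}$ (empty product $= 1$), $\cI_{1,1} = \{(1)\}$ giving $q^0(q^{c_1}-1) = q^{c_1}-1$, and $\cI_{r,1} = \emptyset$ for $r > 1$.

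The main obstacle is not conceptual but bookkeeping: keeping track of the exponent $rm - \|i\|$ when reindexing $\cI_{r,m}$ in terms of $\cI_{r,m-1}$ and $\cI_{r-1,m-1}$. The counting step in part (1) is the one subtle place where the shape of the Ferrers diagram (specifically $c_j \leq c_m$) is used to guarantee that the span of earlier columns automatically lives in the admissible support of the last column; without this monotonicity the case split would fail.
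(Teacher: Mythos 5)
Your proposal is correct and takes essentially the same route as the paper: part (1) is proved by the same last-column count (using $c_j\le c_m$ to keep the span of the earlier columns inside the admissible support of the last column, giving $q^{c_m}-q^{r-1}$ resp.\ $q^r$ choices), and part (2) is obtained by checking that the right-hand side of~\eqref{e-ExplForm} satisfies the recursion and initial conditions. The only difference is organizational: the paper runs a double induction on $r$ and $m$ and cancels the $i_r=m$ term in a subtract-and-simplify computation, whereas you split $\cI_{r,m}$ according to whether $i_r=m$ and induct on $m$ alone, a mild streamlining of the same verification.
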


\begin{proof}
1) The initial conditions are clear.
Furthermore, both sides of the recursion are zero  if $r>c_m$.
Thus let~$r\leq c_m$.
Consider a matrix $M\in\F[\cF]$ of rank~$r$. Denote the submatrix of~$M$ consisting of the first $m-1$ columns by~$\hat{M}$.
If~$\hat{M}$ has rank~$r-1$, then the last column of~$M$ can be any choice outside the column span of~$\hat{M}$.
Since $c_m\geq c_i$ for all~$i$, this results in $q^{c_m}-q^{r-1}$ options.
If~$\hat{M}$ has rank~$r$, then the last column of~$M$ has to be in the column span of~$\hat{M}$.
This results in~$q^r$ options.
This proves the desired recursion.

2)  First of all,~\eqref{e-ExplForm} is satisfied if $r>m$ because then
$\cI_{r,m}=\emptyset$.  It is also trivially true for $r=0$ and all $m$.

We now proceed by induction on~$r$.
\allowdisplaybreaks
Assume~\eqref{e-ExplForm} for all ranks at most~$r-1$ and all~$m$.
We want to show the identity for rank~$r$ and all $m$.
To do so, we induct on~$m$.
The induction hypothesis is provided by all $m\in\{1,\ldots,r-1\}$, in which case both sides of~\eqref{e-ExplForm} are zero.
Thus let $m\geq r$.
Denote the right hand side of~\eqref{e-ExplForm} by~$Q$.
We now show that $Q-P_{r-1}(c_1,\ldots,c_{m-1})(q^{c_m}-q^{r-1})=P_r(c_1,\ldots,c_{m-1})q^r$.
Thanks to the recursion in~(1), which is true for all~$r$ regardless of~$m$, this establishes $Q=P_r(c_1,\ldots,c_m)$.

We compute $Q-P_{r-1}(c_1,\ldots,c_{m-1})(q^{c_m}-q^{r-1})=$
\begin{align*}
    &=\sum_{i\in\cI_{r,m}}q^{rm-\len{i}}\prod_{j=1}^r(q^{c_{i_j}-j+1}-1)
    -\sum_{i\in\cI_{r-1,m-1}}q^{(r-1)(m-1)-\len{i}}\prod_{j=1}^{r-1}(q^{c_{i_j}-j+1}-1)(q^{c_m}-q^{r-1})\\
    &=\sum_{i\in\cI_{r-1,m-1}}\prod_{j=1}^{r-1}(q^{c_{i_j}-j+1}-1)\Big[\sum_{i_r=i_{r-1}+1}^m q^{rm-\len{i}-i_r}(q^{c_{i_r}-r+1}-1)
    - q^{(r-1)(m-1)-\len{i}}(q^{c_m}-q^{r-1})\Big]\\
    &=\sum_{i\in\cI_{r-1,m-1}}\prod_{j=1}^{r-1}(q^{c_{i_j}-j+1}-1)\Big[\sum_{i_r=i_{r-1}+1}^m q^{rm-\len{i}-i_r}(q^{c_{i_r}-r+1}-1)-q^{(r-1)m-\len{i}}(q^{c_m-r+1}-1)\Big] \\
    &=\sum_{i\in\cI_{r-1,m-1}}\prod_{j=1}^{r-1}(q^{c_{i_j}-j+1}-1)\Big[\sum_{i_r=i_{r-1}+1}^{m-1}q^{rm-\len{i}-i_r}(q^{c_{i_r}-r+1}-1)\Big]\\
    &=\sum_{i\in\cI_{r,m-1}}q^{rm-\len{i}}\prod_{j=1}^r(q^{c_{i_j}-j+1}-1)=P_r(c_1,\ldots,c_{m-1})q^r.
\end{align*}
This establishes~\eqref{e-ExplForm} and concludes the proof. \qedhere
\end{proof}

For the rest of this section we regard~$q$ as an indeterminate over~$\Z$.
Thus $\Z[q]$ (resp.\ $\Z[q,q^{-1}]$), denotes the ring of polynomials (resp.\ Laurent polynomials) in~$q$ with integer coefficients.
From~\eqref{e-ExplForm} it is clear that we may regard $P_r(\cF)$ as an element in $\Z[q,q^{-1}]$.
We have the following result.

\begin{cor} \label{cor:deg}
Let $\cF=[c_1,...,c_m]$ be a Ferrers diagram and $r\in\N_0$. Then $P_r(\cF)\in\Z[q]$.
Moreover, set
$$\cI_{r,m}(\cF):=\{ i \in \cI_{r,m} \mid c_{i_j} \neq j-1 \mbox{ for all }1 \le j \le r\}.$$
Then
$$\deg(P_r(\cF)) = \left\{ \begin{array}{cl} - \infty & \mbox{if } \cI_{r,m}(\cF)=\emptyset, \\ rm-\binom{r}{2} +\max\left\{ \sum_{j=1}^r (c_{i_j}-i_j) \mid i \in \cI_{r,m}(\cF) \right\} & \mbox{if } \cI_{r,m}(\cF) \neq\emptyset. \end{array}\right.$$
\end{cor}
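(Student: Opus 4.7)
My plan is to work directly from the explicit formula of Theorem~\ref{th:new},
\[
P_r(\cF) = \sum_{i \in \cI_{r,m}} q^{rm-\len{i}} \prod_{j=1}^r (q^{c_{i_j}-j+1}-1),
\]
and analyze each summand individually. A first pruning step: if $i \in \cI_{r,m}$ satisfies $c_{i_j} = j-1$ for some $j$, the factor $q^{c_{i_j}-j+1}-1 = q^0-1$ vanishes, so that entire summand is zero. Therefore the effective index set is exactly $\cI_{r,m}(\cF)$, and in particular $P_r(\cF)$ is the zero polynomial when $\cI_{r,m}(\cF) = \emptyset$, giving the $-\infty$ case.

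The crux is the following intermediate claim, which I would establish first: for every $i \in \cI_{r,m}(\cF)$, one actually has $c_{i_j} \ge j$ for all $j \in [r]$. Suppose for contradiction that $c_{i_j} \le j-2$ for some $j$, and choose the smallest such $j$. The case $j=1$ forces $c_{i_1} \le -1$, which is impossible. For $j \ge 2$, the monotonicity $c_{i_{j-1}} \le c_{i_j} \le j-2$ together with the defining non-membership condition $c_{i_{j-1}} \ne (j-1)-1 = j-2$ yields $c_{i_{j-1}} \le j-3 = (j-1)-2$, contradicting the minimality of $j$. Hence each factor $q^{c_{i_j}-j+1}-1$ in every nonzero summand is an honest polynomial of strictly positive degree, and since $\len{i} \le rm - \binom{r}{2}$ the prefactor $q^{rm-\len{i}}$ is also a polynomial; this already yields $P_r(\cF) \in \Z[q]$.

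To read off the degree, note that for each $i \in \cI_{r,m}(\cF)$ the summand is a polynomial of degree
\[
rm - \len{i} + \sum_{j=1}^r (c_{i_j}-j+1) \;=\; rm - \binom{r}{2} + \sum_{j=1}^r (c_{i_j}-i_j),
\]
with leading coefficient $+1$, being the product of the leading coefficients of its factors. Because every leading coefficient is positive, no cancellation of top-degree terms can occur when one sums over $\cI_{r,m}(\cF)$, so the degree of $P_r(\cF)$ is exactly the maximum of the above expression over $i \in \cI_{r,m}(\cF)$, as stated. The only substantive difficulty in the whole argument is the key claim that $c_{i_j} \ne j-1$ actually forces $c_{i_j} \ge j$; once that is in hand, the degree computation is pure bookkeeping inside the formula of Theorem~\ref{th:new}.
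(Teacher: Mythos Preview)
Your proof is correct and follows essentially the same route as the paper: both prune the sum in Theorem~\ref{th:new} down to $\cI_{r,m}(\cF)$ via the vanishing factor $q^0-1$, establish that $c_{i_j}\ge j$ for all surviving indices (the paper phrases this contrapositively as ``if $c_{i_j}<j-1$ then some earlier factor $q^{c_{i_\ell}-\ell+1}-1$ with $\ell<j$ already vanishes''), and then read off the degree summand by summand. Your explicit observation that every surviving summand has leading coefficient $+1$, so that no cancellation of top-degree terms can occur, is a welcome addition that the paper leaves implicit.
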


\begin{proof}
The polynomiality of~$P_r(\cF)$ follows from the recursion and initial conditions in Theorem~\ref{T-PrExplForm}(1).
Alternatively, one can derive this fact from~\eqref{e-ExplForm} by
verifying that if in one of the rightmost products there occurs a negative exponent of~$q$, then the product is actually zero.
More precisely, if $c_{i_j}<j-1$, then there exists some $\ell<j$ such that $c_{i_\ell}=\ell-1$. This follows indeed easily from
$0\leq c_1\leq\ldots\leq c_m$.

As for the degree, consider again~\eqref{e-ExplForm}.
Clearly, the summands corresponding to $i\not\in\cI_{r,m}(\cF)$ are zero.
Furthermore, for any $i\in\cI_{r,m}(\cF)$ the degree of the corresponding summand is
$rm-\len{i}+\sum_{j=1}^r(c_{i_j}-j+1)=rm-\binom{r}{2}+\sum_{j=1}^r(c_{i_j}-i_j)$.
\end{proof}

Note that the argument in the first paragraph also shows that the set $\cI_{r,m}$ is in fact given by
$\cI_{r,m}(\cF)=\{ i \in \cI_{r,m} \mid c_{i_j} > j-1 \mbox{ for all }1 \le j \le r\}$.

\begin{rem}
We wish to point out that for certain matrix spaces $\F[\cF]$ where~$\cF\subseteq[n]\times[m]$ is \emph{not} a Ferrers diagram, the rank-weight functions $P_r(\cF)$ are not necessarily polynomials in~$q$.
The smallest known case is for $n=m=7$ and where~$\cF$ is the support of the point-line incidence matrix of the Fano plane, see~\cite[Sec.~1]{KLM14} and \cite[p.~381]{Ste98}.
\end{rem}

The formula in Theorem \ref{T-PrExplForm} takes a simpler form for some highly regular diagrams.
This is the case, for example, for the upper-triangular board.
The following is easily verified.

\begin{cor}\label{C-PrTriang}
Let $\cF=[1,2,\ldots,m]$ be the $m\times m$-upper triangle. Then
        \[
          P_r(1,\ldots,m)=\sum_{i\in\cI_{r,m}}\prod_{j=1}^r(q^{m-j+1}-q^{m-i_j}) \quad \text{for all } r\in\N_0.
        \]
\end{cor}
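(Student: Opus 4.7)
The plan is to derive the identity by direct specialization of the explicit formula in Theorem~\ref{T-PrExplForm}(2). Setting $c_k=k$ for $k=1,\ldots,m$, the Ferrers diagram $\cF=[1,2,\ldots,m]$ satisfies $c_{i_j}=i_j$ for every $i=(i_1,\ldots,i_r)\in\cI_{r,m}$, so the general formula specializes to
\[
   P_r(1,\ldots,m)=\sum_{i\in\cI_{r,m}}q^{rm-\len{i}}\prod_{j=1}^r\bigl(q^{i_j-j+1}-1\bigr).
\]

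The key step is the elementary factorization
\[
   q^{m-j+1}-q^{m-i_j}=q^{m-i_j}\bigl(q^{i_j-j+1}-1\bigr),
\]
valid because $i_j\geq j$ for all $j\in[r]$ whenever $i\in\cI_{r,m}$, so the exponent $m-i_j$ is at most $m-j$ and the factorization lives in $\Z[q]$. Taking the product over $j=1,\ldots,r$ gives
\[
   \prod_{j=1}^r\bigl(q^{m-j+1}-q^{m-i_j}\bigr)=q^{\sum_{j=1}^r(m-i_j)}\prod_{j=1}^r\bigl(q^{i_j-j+1}-1\bigr)=q^{rm-\len{i}}\prod_{j=1}^r\bigl(q^{i_j-j+1}-1\bigr),
\]
which is exactly the summand indexed by $i$ in the displayed specialization of Theorem~\ref{T-PrExplForm}(2). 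Summing over $i\in\cI_{r,m}$ yields the claimed formula.

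There is no real obstacle here: the corollary is a one-line algebraic consequence of Theorem~\ref{T-PrExplForm}(2), with the only mild point being to note that the pulled-out power $q^{m-i_j}$ is nonnegative (so the factored form is indeed a polynomial identity, not merely a Laurent-polynomial one), which is guaranteed by $i_j\leq m$.
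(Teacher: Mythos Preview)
Your proof is correct and is exactly the direct specialization the paper has in mind when it writes ``The following is easily verified'': substitute $c_{i_j}=i_j$ into Theorem~\ref{T-PrExplForm}(2) and absorb the prefactor $q^{rm-\len{i}}$ into the product via $q^{m-j+1}-q^{m-i_j}=q^{m-i_j}(q^{i_j-j+1}-1)$. There is nothing to add.
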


\begin{rem}\label{R-Rectangle}
Let $\cF=[n,\ldots,n]$ be the $n\times m$ rectangle. For all $r\in\N_0$ we have
\begin{align*}
  P_r(n,\ldots,n)&=\sum_{i\in\cI_{r,m}} q^{rm-\len{i}}\prod_{j=1}^r(q^{n-j+1}-1)
    =q^{rm-\BinomS{r}{2}}\sum_{i\in\cI_{r,m}}q^{-\len{i}}\prod_{j=0}^{r-1}(q^n-q^j)\\
   & =q^{-\BinomS{r}{2}}\sum_{i\in\cI_{r,m}}q^{\sum_{j=1}^r (m-i_j)}\prod_{j=0}^{r-1}(q^n-q^j)
    =q^{-\BinomS{r}{2}}\sum_{0\leq t_1<\ldots<t_r\leq m-1}q^{\sum_{j=1}^r t_j}\prod_{j=0}^{r-1}(q^n-q^j).
\end{align*}
Comparing coefficients in the $q$-binomial identity
$\sum_{r=0}^mq^{\BinomS{r}{2}}\Gaussian{m}{r}t^r=\prod_{j=0}^{m-1}(1+q^jt)$,
one easily verifies that the last expression above simplifies to
 $\Gaussian{m}{r}\prod_{j=0}^{r-1}(q^n-q^j)$,
which is indeed known as the number of matrices in~$\F^{n\times m}$ of rank~$r$.
\end{rem}

Following work by Solomon \cite{solomon1990bruhat}, Haglund in \cite[Section 2]{Hag98} establishes an interesting connection between $P_r(\cF)$ and the $q$-rook polynomial $R_r(\cF)$
for an arbitrary Ferrers board $\cF=[c_1,\ldots,c_m]$.
The latter has been introduced by Garsia/Remmel~\cite{GaRe86} and is defined as
follows.

\begin{defi}\label{D-Rook}
The $q$-rook polynomial associated with $\cF$ and $r\in\N_0$ is defined as
$$R_r(\cF)= \sum_{C \in \textnormal{NAR}_r(\cF)} q^{\textnormal{inv}(C,\cF)} \in \Z[q],$$
where $\textnormal{NAR}_r(\cF)$ is the set of all placements of $r$ non-attacking rooks on $\cF$ (non-attacking means that no two rooks are in
the same column, and no two are in the same row), and $\textnormal{inv}(C,\cF) \in \N_0$ is computed as follows:
For a placement $C$, cross out all dots which either contain a rook, or are
above or to the right of any rook; then $\textnormal{inv}(C, \cF)$ is the number of dots of $\cF$ not crossed out.
\end{defi}

For instance, placing on $\cF=[1,2,4,4,5]$ the following three rooks (R) results in $\text{inv}(C,\cF)=7$.
\begin{figure}[ht]
    \centering
     {\small
     \begin{tikzpicture}[scale=0.4]
         \draw (4.5,1.5) node (b1) [label=center:$\bullet$] {};
         \draw (4.5,2.5) node (b1) [label=center:$\times$] {};
         \draw (4.5,3.5) node (b1) [label=center:$\bullet$] {};
         \draw (4.5,4.5) node (b1) [label=center:R] {};
         \draw (4.5,5.5) node (b1) [label=center:$\times$] {};
       \

         \draw (3.5,2.5) node (b1) [label=center:$\times$] {};
         \draw (3.5,3.5) node (b1) [label=center:$\bullet$] {};
         \draw (3.5,4.5) node (b1) [label=center:$\bullet$] {};
         \draw (3.5,5.5) node (b1) [label=center:R] {};

         \draw (2.5,2.5) node (b1) [label=center:R] {};
         \draw (2.5,3.5) node (b1) [label=center:$\times$] {};
         \draw (2.5,4.5) node (b1) [label=center:$\times$] {};
         \draw (2.5,5.5) node (b1) [label=center:$\times$] {};

         \draw (1.5,4.5) node (b1) [label=center:$\bullet$] {};
        \draw (1.5,5.5) node (b1) [label=center:$\bullet$] {};

       \draw (0.5,5.5) node (b1) [label=center:$\bullet$] {};
     \end{tikzpicture}
     }
     \end{figure}

Note that $|\cF|$ is the number of dots in~$\cF$. Thus $|\cF|=\sum_{j=1}^m c_j$ for $\cF=[c_1,\ldots,c_m]$.
Hence $R_0(\cF)=q^{|\cF|}$ for any Ferrers diagram $\cF$, including the empty diagram.
Furthermore, note that $R_r(\cF)$ is the zero polynomial if and only if $\textnormal{NAR}_r(\cF)=\emptyset$.

The connection between $q$-rook polynomials and the distribution of matrices supported on $\cF$ lies in the following elegant formula by Haglund.

\begin{theo}[\mbox{\hspace*{-.4em}\cite[Thm.~1]{Hag98}}] \label{th:Hag}
For any Ferrers diagram $\cF$and any $r\in\N_0$ we have
$$P_r(\cF)=(q-1)^r \; q^{|\cF|-r} \; R_r(\cF)_{|q^{-1}}$$
in the ring $\Z[q,q^{-1}]$.
\end{theo}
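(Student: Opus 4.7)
The plan is to derive Haglund's identity directly, by combining a closed-form expression for $\textnormal{inv}(C,\cF)$ with the explicit formula for $P_r(\cF)$ in Theorem~\ref{T-PrExplForm}(2). Given a non-attacking placement $C$ of $r$ rooks with columns $i_1<\cdots<i_r$ and rows $r_1,\ldots,r_r$, I would count the crossed-out dots of $\cF$ by inclusion--exclusion: each rook crosses out itself, the $r_k-1$ dots directly above it in column $i_k$, and the $m-i_k$ dots to its right in row $r_k$ (all of which lie in $\cF$, since $c_1\leq\cdots\leq c_m$). The only overcounting comes from dots $(r_\ell,i_k)$ that are simultaneously above rook $k$ and to the right of rook $\ell$, which requires $\ell<k$ and $r_\ell<r_k$. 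Setting $\textnormal{coinv}(C):=|\{(k,\ell)\mid\ell<k,\,r_\ell<r_k\}|$, this gives
\[
    \textnormal{inv}(C,\cF)=|\cF|-rm+\len{i}-\len{r}+\textnormal{coinv}(C),
\]
where $\len{i}:=\sum_k i_k$ and $\len{r}:=\sum_k r_k$, in the notation of~\eqref{e-len}.

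Substituting $q\mapsto q^{-1}$ in the definition of $R_r(\cF)$ and regrouping placements by their column set $i=(i_1,\ldots,i_r)\in\cI_{r,m}$ turns the right-hand side of the claim into
\[
   (q-1)^r\,q^{|\cF|-r}\,R_r(\cF)_{|q^{-1}}=\sum_{i\in\cI_{r,m}}q^{rm-\len{i}}(q-1)^r\!\!\sum_{(r_1,\ldots,r_r)}\!\!q^{(\len{r}-r)-\textnormal{coinv}(C)},
\]
where the inner sum runs over $r$-tuples of distinct positive integers with $r_k\leq c_{i_k}$. Comparing with Theorem~\ref{T-PrExplForm}(2), the theorem reduces to proving, for every fixed $i\in\cI_{r,m}$, the $q$-identity
\[
    (q-1)^r\!\!\sum_{(r_1,\ldots,r_r)}\!\!q^{(\len{r}-r)-\textnormal{coinv}(C)}=\prod_{k=1}^r(q^{c_{i_k}-k+1}-1).
\]

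For this inner identity I would employ the bijection $(r_1,\ldots,r_r)\mapsto(a_1,\ldots,a_r)$ given by $a_k:=r_k-1-|\{\ell<k\mid r_\ell<r_k\}|$, whose image is exactly the box $\{0\leq a_k\leq c_{i_k}-k\}$, and which is invertible by reading $r_k$ as the $(a_k+1)$-th smallest positive integer avoiding $\{r_1,\ldots,r_{k-1}\}$. Since $\sum_k a_k=(\len{r}-r)-\textnormal{coinv}(C)$, the inner sum evaluates to $\prod_{k=1}^r(1+q+\cdots+q^{c_{i_k}-k})$, and multiplying by $(q-1)^r$ yields the product on the right. I expect the two main obstacles to be (i) the inclusion--exclusion in the initial dot-count, where one must check that no further overlap arises between the ``above'' and ``to the right'' regions of distinct rooks, and (ii) the upper bound $a_k\leq c_{i_k}-k$ in the bijection, which relies on the monotonicity $c_{i_\ell}\leq c_{i_k}$ for $\ell<k$ enforced by the Ferrers shape and which, crucially, is what makes the theorem fail for general boards.
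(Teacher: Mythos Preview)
Your argument is correct. The inclusion--exclusion for $\textnormal{inv}(C,\cF)$ is clean: because the rooks are non-attacking, the only overlap between the regions ``above rook $k$'' and ``to the right of rook $\ell$'' is the single dot $(r_\ell,i_k)$, which lies in $\cF$ precisely when $\ell<k$ and $r_\ell<r_k$, and there are no triple overlaps. Your bijection $r_k\mapsto a_k$ is the Lehmer-code map; the upper bound $a_k\le c_{i_k}-k$ holds because all of $r_1,\ldots,r_{k-1}$ already lie in $\{1,\ldots,c_{i_k}\}$ (this is exactly where $c_{i_\ell}\le c_{i_k}$ for $\ell<k$ is used), leaving $c_{i_k}-k+1$ available values. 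The degenerate case $c_{i_k}<k$ is harmless: both the placement sum and the product $\prod_j(q^{c_{i_j}-j+1}-1)$ vanish, the latter by the argument in the proof of Corollary~\ref{cor:deg}.

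As for the comparison: the paper does not prove Theorem~\ref{th:Hag} at all---it is quoted from Haglund~\cite{Hag98} and then combined with the independently established Theorem~\ref{T-PrExplForm} to obtain Corollary~\ref{cor:expl}. Your route is therefore genuinely different: you take Theorem~\ref{T-PrExplForm} as input and give a direct combinatorial proof of Haglund's identity, making the paper self-contained on this point. What this buys is that Corollary~\ref{cor:expl} (the closed form for $R_r(\cF)$) now follows without any external reference; conversely, Haglund's original argument does not rely on the explicit formula~\eqref{e-ExplForm} and so yields Theorem~\ref{th:Hag} even before one has Theorem~\ref{T-PrExplForm} in hand.
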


Combining Theorems~\ref{T-PrExplForm}  and \ref{th:Hag} we obtain an explicit formula for the $q$-rook polynomials.
Examples of $R_r(\cF)$ for some Ferrers diagrams are listed in~\cite[pp.~273]{GaRe86}.

\begin{cor} \label{cor:expl}
For any Ferrers diagram $\cF=[c_1,...,c_m]$ and for any $r\in\N_0$ we have
\[
   R_r(\cF)=\frac{q^{\sum_{j=1}^m c_j-rm}\sum_{i\in\cI_{r,m}}\prod_{j=1}^r(q^{i_j+j-c_{i_j}-1}-q^{i_j})}{(1-q)^r}.
\]
\end{cor}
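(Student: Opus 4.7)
The plan is to combine Haglund's identity (Theorem~\ref{th:Hag}) with the explicit formula for $P_r(\cF)$ from Theorem~\ref{T-PrExplForm}(2), then solve algebraically for $R_r(\cF)$.

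First I would rewrite Haglund's identity in a more convenient form. Substituting $q \mapsto q^{-1}$ on both sides of $P_r(\cF)=(q-1)^r q^{|\cF|-r} R_r(\cF)_{|q^{-1}}$ yields
\[
P_r(\cF)_{|q^{-1}}=(q^{-1}-1)^r\,q^{-|\cF|+r}\,R_r(\cF).
\]
Using $(q^{-1}-1)^r=(1-q)^r/q^r$ and solving for $R_r(\cF)$ gives the clean identity
\[
R_r(\cF)=\frac{q^{|\cF|}\,P_r(\cF)_{|q^{-1}}}{(1-q)^r}.
\]

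Next I would substitute the explicit formula from Theorem~\ref{T-PrExplForm}(2). Replacing $q$ by $q^{-1}$ inside that formula gives
\[
P_r(\cF)_{|q^{-1}}=\sum_{i\in\cI_{r,m}}q^{\len{i}-rm}\prod_{j=1}^r\bigl(q^{-(c_{i_j}-j+1)}-1\bigr),
\]
and multiplying by $q^{|\cF|}=q^{\sum_{j=1}^m c_j}$ yields
\[
q^{|\cF|}P_r(\cF)_{|q^{-1}}=q^{\sum_{j=1}^m c_j-rm}\sum_{i\in\cI_{r,m}}q^{\len{i}}\prod_{j=1}^r\bigl(q^{-(c_{i_j}-j+1)}-1\bigr).
\]
Finally, I would absorb $q^{\len{i}}=\prod_{j=1}^r q^{i_j}$ into the product, turning each factor into $q^{i_j+j-c_{i_j}-1}-q^{i_j}$. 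Dividing by $(1-q)^r$ produces exactly the formula claimed in the corollary.

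There is no real obstacle: the argument is a direct manipulation. The only care needed is in keeping track of exponents (and verifying that the formal Laurent expression on the right is indeed a polynomial, which is guaranteed a priori since both $R_r(\cF)\in\Z[q]$ by Definition~\ref{D-Rook} and $P_r(\cF)\in\Z[q]$ by Corollary~\ref{cor:deg}, so $(1-q)^r$ must divide the numerator). No intermediate combinatorial identity is required beyond the definitions and the two theorems cited.
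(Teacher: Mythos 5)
Your proposal is correct and follows exactly the route the paper intends: the paper derives Corollary~\ref{cor:expl} by combining Haglund's identity (Theorem~\ref{th:Hag}) with the explicit formula of Theorem~\ref{T-PrExplForm}(2), which is precisely your substitution $q\mapsto q^{-1}$ and exponent bookkeeping. The algebra checks out, including the absorption of $q^{\len{i}}$ into the product and the a priori polynomiality remark.
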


\begin{rem}
Corollary \ref{cor:expl} can be used to derive an explicit formula for the $q$-Stirling number of the second kind. The latter are defined via the recursion
\[
  S_{m+1,r}=q^{r-1}S_{m,r-1}+\frac{q^r-1}{q-1}S_{m,r}
\]
with initial conditions $S_{0,0}(q)=1$ and $S_{m,r}(q)=0$ for $r<0$ or $r>m$.\footnote{In the combinatorics literature $q$-Stirling number of the second kind are often defined via the
recursion $\tilde{S}_{m+1,r}(q)=\tilde{S}_{m,r-1}(q)+(q^r-1)/(q-1)\tilde{S}_{m,r}(q)$. It is easily seen that $S_{m,r}(q)=q^{\BinomS{r}{2}}\tilde{S}_{m,r}(q)$.} It
is known \cite[p. 248]{GaRe86} that for all $m$ and $r$ we have
\[
   S_{m+1,m+1-r}=R_r(\cF),
 \]
 where $\cF=[1,...,m]$ is the upper-triangular $m \times m$ Ferrers board.
 Therefore applying Corollary~\ref{cor:expl} we obtain
 \[
   S_{m+1,m+1-r}=\frac{q^{\BinomS{m+1}{2}-rm}\sum_{i\in\cI_{r,m}}\prod_{j=1}^r(q^{j-1}-q^{i_j})}{(1-q)^r} \quad \text{for }
   0\leq r\leq m+1.
 \]
\end{rem}

As a second application of Theorem \ref{T-PrExplForm}, we recover the recursion shown in \cite{GaRe86} for the $q$-rook polynomials $R_r(\cF)$.

\begin{cor}[see also \text{\cite[Thm. 1.1]{GaRe86}}]
Let $\cF=[c_1,...,c_m]$ be a Ferrers diagram, and let $\cF'=[c_1,...,c_{m-1}]$.
For all $r \ge 1$ we have
\[
   R_r(\cF) = R_r(\cF') \; q^{c_m-r} + R_{r-1}(\cF') \; \frac{q^{c_m-r+1}-1}{q-1}.
\]
\end{cor}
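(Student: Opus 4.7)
My plan is to reduce the asserted recursion to the rank-distribution recursion for $P_r(\cF)$ from Theorem~\ref{T-PrExplForm}(1), using Haglund's identity (Theorem~\ref{th:Hag}) as the bridge. The key observation is that both sides of the claim lie in $\Z[q,q^{-1}]$, so it suffices to verify the identity after applying the ring automorphism $q\mapsto q^{-1}$. A short manipulation based on the elementary identity
\[
   \frac{q^{-k}-1}{q^{-1}-1} \;=\; q^{-(k-1)}\,\frac{q^k-1}{q-1} \qquad (k\in\Z)
\]
shows that the asserted recursion is equivalent, after the substitution $q\mapsto q^{-1}$ and multiplication by $q^{c_m-r}$, to
\[
   q^{c_m-r}\,R_r(\cF)(q^{-1}) \;=\; R_r(\cF')(q^{-1}) \;+\; \frac{q^{c_m-r+1}-1}{q-1}\,R_{r-1}(\cF')(q^{-1}).
\]

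Next I would invoke Haglund's formula in the form $R_k(\cG)(q^{-1}) = P_k(\cG)\bigl/\bigl((q-1)^k q^{|\cG|-k}\bigr)$ at each of the three occurrences above, using $|\cF|=|\cF'|+c_m$. Clearing the common factor $(q-1)^r q^{|\cF'|}$ from the resulting equation collapses the identity to
\[
   P_r(\cF) \;=\; q^r P_r(\cF') \;+\; \bigl(q^{c_m-r+1}-1\bigr)\,q^{r-1}\,P_{r-1}(\cF').
\]
Since $(q^{c_m-r+1}-1)q^{r-1} = q^{c_m}-q^{r-1}$, this is exactly the recursion for $P_r$ proved in Theorem~\ref{T-PrExplForm}(1), completing the argument.

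The main (and essentially only) obstacle is the careful bookkeeping of the powers of $q$ and the factors $(q-1)^k$ through the $q\mapsto q^{-1}$ substitution; no new combinatorial input is required beyond the two cited theorems. The content of the corollary is thus entirely packaged in the composition of Haglund's formula with the $P_r$ recursion.
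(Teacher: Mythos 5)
Your proof is correct and takes essentially the same route as the paper: both arguments combine Haglund's identity (Theorem~\ref{th:Hag}) with the recursion for $P_r$ from Theorem~\ref{T-PrExplForm}(1) and the substitution $q\mapsto q^{-1}$, the only difference being that the paper derives the recursion for $R_r$ forwards while you verify the asserted identity is equivalent to the known $P_r$ recursion. Your bookkeeping of the powers of $q$ and the factors $(q-1)^k$, including the use of $|\cF|=|\cF'|+c_m$, checks out.
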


\begin{proof}
By Theorem \ref{th:Hag} we have
\[
   R_r(\cF)_{|q^{-1}} = P_r(\cF) \; q^{r-|\cF|} \; (q-1)^{-r}.
\]
Using the recursion for $P_r(\cF)$ established in Theorem \ref{T-PrExplForm}
we obtain
\[
   R_r(\cF)_{|q^{-1}} = \Big( P_{r-1}(\cF') \; (q^{c_m}-q^{r-1}) + P_r(\cF') \; q^r \Big) q^{r-|\cF|} \; (q-1)^{-r}.
\]
Using that $|\cF'|=\sum_{j=1}^{m-1}c_j$ and $|\cF|=|\cF'|+c_m$ and applying Theorem \ref{th:Hag} twice we arrive at
\[
    R_r(\cF)_{|q^{-1}} = (q-1)^{-1} \; q^{-c_m+1} \; R_{r-1}(\cF')_{|q^{-1}} \; (q^{c_m}-q^{r-1}) + q^{-c_m} \; R_r(\cF')_{|q^{-1}} \; q^r.
\]
Applying the transformation $q \longmapsto q^{-1}$ yields the desired result.
\end{proof}

We conclude this section by studying the degree (in $q$) of the polynomials $P_r(\cF)$.
We will show that, for any given diagram $\cF$, the function $r \longmapsto \deg(P_r(\cF))$ is strictly increasing as long as $P_r(\cF)\not\equiv0$.
This fact does not seem obvious from the explicit expression for $\deg(P_r(\cF))$ given in Corollary \ref{cor:deg}.
Therefore we take a different approach based on rook placements.
This will also give us the chance to establish new connections between $P_r(\cF)$ and $R_r(\cF)$.

Recall that the \textbf{trailing degree} of a Laurent polynomial
$$P=\sum_{i}a_i q^i \in \Z[q,q^{-1}]$$ is defined as
$\textnormal{tdeg}(P)= \min\{i \mid a_i \neq 0\}$.
The trailing degree of the zero polynomial is $+\infty$ by definition. Moreover,
for any (possibly zero) Laurent polynomial $P \in \Z[q,q^{-1}]$ one has
\begin{equation} \label{degtdeg}
\deg\left(P_{|q^{-1}}\right)=-\textnormal{tdeg} (P).
\end{equation}

We can relate the degree of $P_r(\cF)$ and the trailing degree of $R_r(\cF)$ as follows.

\begin{prop} \label{prop:tdeg}
Let $\cF$ be a Ferrers diagram, and let $r \ge 0$. We have
$$\deg(P_r(\cF))=|\cF| - \textnormal{tdeg}(R_r(\cF)).$$
In particular, $P_r(\cF)$ is the zero polynomial if and only if $R_r(\cF)$ is the zero polynomial.
\end{prop}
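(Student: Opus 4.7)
The proof is essentially a one-line consequence of Haglund's formula (Theorem~\ref{th:Hag}) combined with the identity~\eqref{degtdeg} relating the degree of the $q\mapsto q^{-1}$ substitution to the trailing degree. The plan is simply to take the degree of both sides of Haglund's factorization and track the contributions of each factor.

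More precisely, I would start from
\[
   P_r(\cF) \;=\; (q-1)^r \; q^{|\cF|-r} \; R_r(\cF)_{|q^{-1}}
\]
viewed as an identity in $\Z[q,q^{-1}]$. Since $(q-1)^r$ has leading coefficient $1$ and the factor $q^{|\cF|-r}$ is a monomial, no cancellation of leading terms can occur, so the degree of the right-hand side splits as
\[
   \deg\!\left((q-1)^r\right) + (|\cF|-r) + \deg\!\left(R_r(\cF)_{|q^{-1}}\right) \;=\; r + (|\cF|-r) + \deg\!\left(R_r(\cF)_{|q^{-1}}\right).
\]
Applying~\eqref{degtdeg} to rewrite $\deg(R_r(\cF)_{|q^{-1}}) = -\textnormal{tdeg}(R_r(\cF))$ yields
\[
   \deg(P_r(\cF)) \;=\; |\cF| - \textnormal{tdeg}(R_r(\cF)),
\]
which is the claimed equality.

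For the ``in particular'' statement, I would argue directly from Haglund's formula: the polynomial $(q-1)^r q^{|\cF|-r}$ is a nonzero element of $\Z[q,q^{-1}]$, and $\Z[q,q^{-1}]$ is an integral domain, so $P_r(\cF)=0$ if and only if $R_r(\cF)_{|q^{-1}}=0$, which in turn is equivalent to $R_r(\cF)=0$. This is also consistent with the degree identity under the usual conventions $\deg(0)=-\infty$ and $\textnormal{tdeg}(0)=+\infty$, since then both sides read $-\infty$.

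There is no real obstacle here: the only subtlety is to note that the identity lives in $\Z[q,q^{-1}]$ rather than $\Z[q]$, so one should record that the factors $(q-1)^r$ and $q^{|\cF|-r}$ cause no leading-term cancellation and are units-times-monomials in the Laurent polynomial ring, legitimizing the additive behavior of $\deg$ and justifying the equivalence of vanishing.
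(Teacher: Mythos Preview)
Your proof is correct and follows essentially the same approach as the paper: both arguments take the degree of Haglund's identity and invoke~\eqref{degtdeg}, with the paper first clearing the $q^{-r}$ by multiplying through by $q^r$, which is a cosmetic variant of your direct computation. Your treatment of the ``in particular'' clause via the integral domain argument is in fact more explicit than the paper's, which leaves that consequence implicit.
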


\begin{proof}
By Theorem \ref{th:Hag} we have the identity
\begin{equation*} \label{cmp}
q^r P_r(\cF) = (q-1)^r \; q^{|\cF|} \; R_r(\cF)_{|q^{-1}}
\end{equation*}
in $\Z[q,q^{-1}]$. Taking degrees and using (\ref{degtdeg}) we obtain $r+\deg(P_r(\cF))=r+|\cF|-\textnormal{tdeg}(R_r(\cF))$.
\end{proof}

We can finally show that the function $r \longmapsto \deg(P_r(\cF))$ is strictly increasing on $[0,\overline{r}]$, where $\overline{r}$ is the maximum $r$ with
$P_r(\cF) \neq 0$. The proof relies on Proposition \ref{prop:tdeg} and on the following preliminary result.

\begin{lemma} \label{lem:tec}
Let $\cF$ be a Ferrers diagram, and let $r \ge 1$. If $\textnormal{tdeg}(R_{r}(\cF))=0$, then $R_{r+1}(\cF)=0$.
\end{lemma}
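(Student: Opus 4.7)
The plan is to reformulate the statement via Proposition \ref{prop:tdeg} and reduce it to a maximum matching question, then argue via Berge's theorem. Concretely, since $P_s(\cF)=0$ iff $\textnormal{NAR}_s(\cF)=\emptyset$, it is enough to show that if $C \in \textnormal{NAR}_r(\cF)$ satisfies $\textnormal{inv}(C,\cF)=0$, then $\textnormal{NAR}_{r+1}(\cF)=\emptyset$. I interpret a placement in $\textnormal{NAR}_s(\cF)$ as a matching of size $s$ in the bipartite graph $G$ on $[n]\sqcup[m]$ whose edges are the dots of $\cF$. Then $\textnormal{NAR}_{r+1}(\cF)\neq\emptyset$ would mean $C$ is not a maximum matching of $G$, and Berge's theorem would produce a $C$-augmenting path. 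My strategy is to assume such a path exists, pick one of minimum length, and derive a contradiction from the combination of $\textnormal{inv}(C,\cF)=0$, the minimality of the path, and the right- and top-alignment of $\cF$.

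Write $R$ and $J$ for the row and column sets of $C$, and consider (up to the symmetric column-starting case) a minimum-length $C$-augmenting path $P = i_0, j_1, i_1, j_2, \ldots, i_K, j_{K+1}$ starting at $i_0 \in [n]\setminus R$ and ending at $j_{K+1} \in [m]\setminus J$, with matching rooks $(i_t,j_t) \in C$ for $t=1,\ldots,K$. The hypothesis $\textnormal{inv}(C,\cF)=0$ says every dot of $\cF$ is either a rook of $C$, or to the right of one in its row, or above one in its column; applied to the first and last edges of $P$ this gives $j_1 \in J$, $i_0 \le i_1$, and $j_{K+1}\ge j_K$. Minimality of $P$ then rules out each would-be shortcut dot $(i_0,j_t) \in \cF$ for $2\le t\le K+1$, since its presence would produce the strictly shorter $C$-augmenting path $i_0, j_t, i_t, j_{t+1},\ldots,j_{K+1}$ (and for $t=K+1$ it would directly contradict $\textnormal{inv}(C,\cF)=0$, both endpoints being unmatched). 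Right-alignment of $\cF$ applied to $(i_0,j_1)\in\cF$ now forces $j_t<j_1$ for every $t\ge 2$, while top-alignment of $\cF$ applied to $(i_t,j_t)\in\cF$ forces $i_t<i_0$ for every $t\ge 2$.

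The contradiction is immediate in the small cases: if $K=0$, the single edge $(i_0,j_1)\in\cF$ already violates $\textnormal{inv}(C,\cF)=0$ since $i_0\notin R$ and $j_1\notin J$; if $K=1$, the inequality $j_2\ge j_1$ together with right-alignment yields $(i_0,j_2)\in\cF$, again violating $\textnormal{inv}(C,\cF)=0$. If $K\ge 2$, the intermediate path edge $(i_1,j_2)\in\cF$ must be crossed out by some rook of $C$; since $j_2<j_1$ prevents rook $1$ from crossing it via row $i_1$, rook $2$ must cross it via column $j_2$, forcing $i_1\le i_2$; but $i_1\ge i_0>i_2$, a contradiction. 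The symmetric case of a column-starting path is handled by the analogous argument, after which Berge's theorem gives that $C$ is a maximum matching, whence $\textnormal{NAR}_{r+1}(\cF)=\emptyset$ and $R_{r+1}(\cF)=0$. The delicate step is the shortcut reduction: one must verify that the proposed shorter walk is a genuine $C$-augmenting path, which requires tracking alternation, vertex distinctness, and the fact that the endpoints remain unmatched; this uses both the non-attacking property of $C$ and the fact that $i_0\notin R$ while every intermediate $i_t$ lies in $R$.
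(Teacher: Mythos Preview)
Your proof is correct and takes a genuinely different route from the paper. The paper argues by induction on~$r$: it deletes the last column of~$\cF$ to form~$\cF'$ and splits into cases according to whether $\textnormal{tdeg}(R_{r-1}(\cF'))=0$; in the nontrivial case it shows that a zero-$\textnormal{inv}$ placement of~$r$ rooks must lie entirely on~$\cF'$, which forces~$\cF$ to have only~$r$ nonempty rows and hence $R_{r+1}(\cF)=0$. You instead interpret non-attacking rook placements as matchings in the bipartite row--column graph on~$\cF$ and prove directly that any placement~$C$ with $\textnormal{inv}(C,\cF)=0$ is a maximum matching, by taking a minimum-length augmenting path and deriving a contradiction from the shortcut inequalities $j_t<j_1$ and $i_t<i_0$ (for $t\ge 2$) that right- and top-alignment impose. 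The paper's approach is entirely self-contained and dovetails with the column-recursive structure of $q$-rook polynomials used elsewhere in the section; your approach trades the induction for an appeal to Berge's theorem and yields a clean one-shot argument whose core observation---that the Ferrers shape forces any hypothetical augmenting path to travel strictly up and to the left after the first step, so that the edge $(i_1,j_2)$ cannot be crossed out---is attractive in its own right. One small point of presentation: your inequalities $i_0\le i_1$ and $j_{K+1}\ge j_K$ are in fact strict (the path vertices are distinct), and writing them strictly would make the final chain $i_2<i_0<i_1$ contradicting $i_1<i_2$ read more transparently.
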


\begin{proof}
We proceed by induction on $r$. If $r=1$ and $\textnormal{tdeg}(R_{1}(\cF))=0$, then $\cF$ consists of either a single column or a single row.
Therefore $R_2(\cF)=0$. Now assume $r \ge 2$ and that the statement is true for all $1 \le r' \le r-1$.
Suppose that $\textnormal{tdeg}(R_{r}(\cF))=0$, and denote by $\cF'$ the Ferrers diagram obtained from $\cF$ by deleting the last column.
We distinguish two cases.

\underline{Case 1:} $\textnormal{tdeg}(R_{r-1}(\cF'))=0$. By induction hypothesis we have $R_r(\cF')=0$, and so it must be that $R_{r+1}(\cF)=0$ as well.

\underline{Case 2:} $\textnormal{tdeg}(R_{r-1}(\cF'))> 0$.
By assumption there exists a placement~$C$ of $r$ rooks on $\cF$ such that $\textnormal{inv}(C,\cF)=0$.
Then all the rooks of $C$ must lie on $\cF'$ (as otherwise we would have $\textnormal{inv}(C',\cF)=0$, where $C'$ is obtained from $C$ by removing the rook
lying on $\cF \setminus \cF'$, and this contradicts $\textnormal{tdeg}(R_{r-1}(\cF'))> 0$).
Since $\textnormal{inv}(C,\cF)=0$, every dot in the last column of~$\cF$ is to the right of one of the $r$ rooks.
But this means that~$\cF$ has exactly $r$ non-empty rows.
This in turn implies, $R_{r+1}(\cF)=0$, as desired.
\end{proof}

\begin{theo}\label{T-IncreasDeg}
Let $\cF$ be a Ferrers diagram, and let $r \ge 2$. If $P_r(\cF)$ is not the zero polynomial, then
$$\deg(P_r(\cF))>\deg(P_{r-1}(\cF)).$$
\end{theo}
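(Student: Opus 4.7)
The plan is to work with trailing degrees of $q$-rook polynomials via Proposition~\ref{prop:tdeg}, which translates the claim $\deg(P_r(\cF)) > \deg(P_{r-1}(\cF))$ into the equivalent inequality $\textnormal{tdeg}(R_r(\cF)) < \textnormal{tdeg}(R_{r-1}(\cF))$; I will prove this (in fact for every $r \ge 1$) by induction on the number $m$ of columns of $\cF$. Two standing observations will be useful: first, $R_r(\cF) \neq 0$ forces $c_m \ge r$, since any non-attacking placement of $r$ rooks uses $r$ distinct rows contained in $\{1,\ldots,c_m\}$; and second, $R_r(\cF) \neq 0$ forces $R_{r-1}(\cF')\neq 0$ (hence also $R_{r-2}(\cF') \neq 0$) for $\cF' = [c_1,\ldots,c_{m-1}]$, as is immediate from the recursion below or by removing a rook from a non-attacking placement.

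The case $r = 1$ is immediate, since $R_0(\cF) = q^{|\cF|}$ has trailing degree $|\cF|$ while any single-rook placement crosses out at least its own rook position, so $\textnormal{tdeg}(R_1(\cF)) \le |\cF|-1$. The base $m = 0$ of the induction is vacuous. For the inductive step at $r \ge 2$, the main tool is the recursion
\[
R_r(\cF) \;=\; R_r(\cF')\,q^{c_m-r} + R_{r-1}(\cF')\,(1+q+\cdots+q^{c_m-r})
\]
from the preceding corollary (and the analogous identity at rank $r-1$). Since $q$-rook polynomials have nonnegative coefficients and $c_m - r \ge 0$, no cancellation occurs at any degree, so the trailing degree of each sum equals the minimum of the trailing degrees of its two summands.

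Setting $B := \textnormal{tdeg}(R_{r-1}(\cF'))$ and $C := \textnormal{tdeg}(R_{r-2}(\cF'))$, both finite by the observations above, the two recursions give
\[
\textnormal{tdeg}(R_r(\cF)) \;\le\; B \qquad\text{and}\qquad \textnormal{tdeg}(R_{r-1}(\cF)) \;=\; \min\bigl(B + c_m - r + 1,\; C\bigr).
\]
The induction hypothesis applied to $\cF'$ at rank $r-1$ yields $B < C$, i.e.\ $B + 1 \le C$; combined with $c_m \ge r$, this gives $\textnormal{tdeg}(R_{r-1}(\cF)) \ge B+1 > \textnormal{tdeg}(R_r(\cF))$, as desired. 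I expect the main obstacle to be not any deep step but the careful bookkeeping around the convention $\textnormal{tdeg}(0) = +\infty$ and ensuring that all relevant trailing degrees are finite at the moments they are invoked.
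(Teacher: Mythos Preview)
Your proof is correct and takes a genuinely different route from the paper's. Both arguments start from Proposition~\ref{prop:tdeg} to convert the degree inequality into the trailing-degree inequality $\textnormal{tdeg}(R_r(\cF)) < \textnormal{tdeg}(R_{r-1}(\cF))$, but then diverge. The paper argues combinatorially: it fixes a placement $C$ of $r-1$ non-attacking rooks achieving the minimum $\textnormal{inv}$, and by a case analysis on a surviving dot shows that either an $r$th rook can be added (lowering $\textnormal{inv}$) or the minimality of $C$ is violated; the boundary case $\textnormal{tdeg}(R_{r-1}(\cF))=0$ is handled separately by Lemma~\ref{lem:tec}. You instead induct on the number of columns using the Garsia--Remmel recursion, exploiting nonnegativity of the coefficients of $q$-rook polynomials so that the trailing degree of each side is simply the minimum over the summands. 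Your approach is more algebraic and dispenses with Lemma~\ref{lem:tec} and the placement case analysis entirely; the paper's approach is more directly combinatorial and does not rely on the recursion. Both are short, and your bookkeeping around finiteness of $B$ and $C$ (via the observations that $R_r(\cF)\neq 0$ forces $c_m\ge r$ and $R_{r-1}(\cF')\neq 0$) is exactly what is needed.
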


\begin{proof}
By Proposition \ref{prop:tdeg}, it suffices to show that $\textnormal{tdeg}(R_{r-1}(\cF)) > \textnormal{tdeg}(R_r(\cF))$.
Note first that by assumption, $R_r(\cF)\not=0$ and thus $\textnormal{tdeg}(R_r(\cF))<\infty$.
Thus the result is immediate if $R_{r-1}(\cF)=0$.

We henceforth assume that both $R_r(\cF)$ and $R_{r-1}(\cF)$ are non-zero, and hence $r\leq m$.
Let $t=\textnormal{tdeg}(R_{r-1}(\cF))$.
If $t=0$ then $R_{r}(\cF)$ must be the zero polynomial thanks to Lemma \ref{lem:tec}, and this contradicts our assumptions.
Therefore $t \ge 1$.
Let now~$C$ be a placement of $r-1$ non-attacking rooks on $\cF$ such that $\mbox{inv}(C,\cF)=t$.
Since $t\geq 1$, there is at least one dot $(i,j)\in\cF$ that has not been deleted (crossed out) by these rooks.
We need to consider various cases.

\underline{Case 1:} Suppose no rook of~$C$ is in row~$i$ and no rook is in column~$j$.
Then we may place a rook at position $(i,j)$ and obtain a placement of $r$ non-attacking rooks on~$\cF$.
Denoting this placement by~$C'$, we clearly have $\mbox{inv}(C',\cF)<\mbox{inv}(C,\cF)=t$.
By  definition of $R_r(\cF)$, this implies $\textnormal{tdeg}(R_r(\cF))<t=\textnormal{tdeg}(R_{r-1}(\cF))$, as desired.

\underline{Case 2:} Suppose there is a rook of~$C$ in row~$i$, but none in column~$j$.
Then this rook is at a position $(i,j')$ where  $j'>j$.
We may move this rook to position $(i,j)$ and obtain another placement, $C'$, of $r-1$ non-attacking rooks.
But then $\mbox{inv}(C',\cF)<\mbox{inv}(C,\cF)$, because we have to delete at least one more dot, namely the one at position $(i,j)$.
This contradicts the minimality of~$t$.
The case where there is a rook of~$C$ in column~$j$, but none in row~$i$, is symmetric and leads to a contradiction as well.

\underline{Case 3:} Suppose there is a rook in row~$i$ and a rook in column~$j$.
Let their positions be $(i,j')$ and $(i',j)$ for some $j'>j$ and $i'<i$.
Since these two positions are in~$\cF$, the same is true for the position $(i',j')$.
We may thus move these two rooks to the positions $(i,j)$ and $(i',j')$, respectively, and obtain another placement, $C'$, of $r-1$ non-attacking rooks.
Again, this leads to $\mbox{inv}(C',\cF)<\mbox{inv}(C,\cF)$ in contradiction to the minimality of~$t$.
This concludes the proof.
\end{proof}

\section{Partition-Preserving Maps}
\label{sec:pres}

In this short section we consider maps between matrix codes that preserve any of the partitions discussed in this paper.
We will see that these maps can easily be described when defined on the entire matrix space~$\Fnm$, but that there is no analogue of the classical MacWilliams
Extension Theorem~\cite{MacW62}.
%
The latter states that (1) the Hamming-weight-preserving maps $\F^n \longrightarrow \F^n$ are given by monomial matrices (i.e., matrices that have exactly one nonzero entry in each row and column), and (2)
for any code $\cC\leq\F^n$ each Hamming-weight-preserving map $\cC \longrightarrow \F^n$ extends to a Hamming-weight-preserving map on~$\F^n$.
In short, the Hamming isometries between codes in~$\F^n$ are monomial maps, and this fully describes these maps.
We refer to~\cite[Thm.~7.9.4]{HP03} for further details.
In this section, we study the analogous question for the rank, row-space, and pivot partition.
We do \textit{not} assume $m \le n$.

\begin{defi}\label{D-Preserving}
Let $\cC\leq\Fnm$ be a subspace and $f:\cC \longrightarrow \Fnm$ be a linear map.
\begin{enumerate}
\item $f$ is \textbf{rank-preserving} if $\rk(f(A))=\rk(A)$ for all $A\in\cC$.
\item $f$ is \textbf{row-space-preserving} if $\rs(f(A))=\rs(A)$ for all $A\in\cC$.
\item $f$ is \textbf{pivot-preserving} if $\piv(f(A))=\piv(A)$ for all $A\in\cC$.
\end{enumerate}
\end{defi}

Note that rank-preserving maps preserve the rank partition in the sense that $A$ and $f(A)$ are in the same block of~$\cP^\rk$ for all $A\in\cC$.
Similar reformulations are true for row-space-preserving  or pivot-preserving maps.
Thus we may call maps \textbf{partition-preserving} if they are of  the corresponding type above.

The question arises whether such maps can be described explicitly.
As in the classical situation with the Hamming distance, the simplest case occurs when the code~$\cC$ is the entire space.
In this case the question can be answered for all three partitions by making use of the following description of rank-preserving maps.
A first instance of this result has been proven by Hua~\cite{Hua51} (see also~\cite[Thm.~3.4]{Wan96}).
An elementary proof can be found in~\cite{MaMo59} by Marcus/Moyls.

\begin{theo}[\mbox{\hspace*{-.4em}\cite[Thm.~1]{MaMo59}}]\label{T-MaMo}
Let $f:\Fnm \longrightarrow \Fnm$ be a rank-preserving map.
Then there exist matrices $U\in\GL_n(\F)$ and $V\in\GL_m(\F)$ such that
\[
   f(A)=UAV \text{ for all }A\in\Fnm
\]
or, only in the case $n=m$,
\[
  f(A)=UA^\top V\text{ for all }A\in\F^{m\times m}.
\]
Clearly, any map~$f$ of such a form is rank-preserving.
\end{theo}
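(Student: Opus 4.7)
My plan is to study how~$f$ acts on the set of rank-one matrices. Since rank-preservation makes~$f$ injective, and $\Fnm$ is finite-dimensional, $f$ is bijective; moreover~$f$ permutes the set $\cR_1\subseteq\Fnm$ of rank-one matrices, which I parametrize as $xy^\top$ for $x\in\F^n\setminus\{0\}$ and $y\in\F^m\setminus\{0\}$. The central tool I would first establish is the classical fact that \emph{any linear subspace $S\le\Fnm$ in which every element has rank at most one is either column-type, meaning $S\subseteq\{uv^\top\mid v\in\F^m\}$ for some fixed $u\in\F^n$, or row-type, meaning $S\subseteq\{uv^\top\mid u\in\F^n\}$ for some fixed $v\in\F^m$}. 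If $S$ contains two linearly independent rank-one matrices $x_1y_1^\top$ and $x_2y_2^\top$, requiring every linear combination to have rank at most one forces $x_1,x_2$ proportional or $y_1,y_2$ proportional, and an induction on $\dim S$ completes the argument.

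I then apply the lemma to the subspaces $V_x:=f(\{xy^\top\mid y\in\F^m\})$ for each nonzero $x\in\F^n$, and symmetrically to $W_y:=f(\{xy^\top\mid x\in\F^n\})$. Each~$V_x$ is either column-type or row-type, and I would argue that the type is uniform in~$x$: if $V_{x_1}$ were column-type and $V_{x_2}$ row-type, then for two independent choices $y,y'$, the elements $f(x_1y^\top)+f(x_2y^\top)$ and $f(x_1{y'}^\top)+f(x_2{y'}^\top)$ would generate a two-dimensional subspace of $V_{x_1+x_2}$ that is of neither type, contradicting the lemma. By symmetry, the~$W_y$ all share a common type as well, and a further check shows the two global types are compatible: either every $V_x$ is column-type (equivalently, every $W_y$ is row-type), or vice versa.

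In the first case, for each nonzero $x$ there is a vector $\phi(x)\in\F^n$ and a linear map $\psi_x\colon\F^m\to\F^m$ such that $f(xy^\top)=\phi(x)\,\psi_x(y)^\top$ for all~$y$. Comparing $f((x_1+x_2)y^\top)$ with $f(x_1y^\top)+f(x_2y^\top)$ for varying~$y$ shows, after a short argument, that $\psi_x$ is actually independent of~$x$; call it~$\psi$, and then $\phi$ inherits linearity from~$f$. Writing $\phi(x)=Ux$ and $\psi(y)^\top=y^\top V$ yields $f(xy^\top)=U(xy^\top)V$; since rank-one matrices span $\Fnm$, this extends to $f(A)=UAV$ for all~$A$, and rank-preservation forces $U\in\GL_n(\F)$ and $V\in\GL_m(\F)$. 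In the second case, a symmetric argument gives $f(xy^\top)=\alpha(y)\,\beta(x)^\top$ with linear $\alpha\colon\F^m\to\F^n$ and $\beta\colon\F^n\to\F^m$, each of which must be injective so that rank-one matrices go to rank-one matrices; this forces $n=m$, and writing $\alpha(y)=Uy$, $\beta(x)=V^\top x$ gives $f(xy^\top)=U(xy^\top)^\top V$, hence $f(A)=UA^\top V$.

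The main obstacle will be the rigidity step: proving that the column/row type is globally uniform and that $\psi_x$ in the column-type case does not depend on~$x$. These are where the rank-one hypothesis must be leveraged through carefully chosen test combinations, and the argument requires modest care when $|\F|=2$ to avoid conflating the zero-scalar case with linear independence. The converse direction (that $UAV$ and $UA^\top V$ with invertible $U,V$ preserve rank) is immediate from $\rk(UAV)=\rk(A)$ for invertible $U,V$ and $\rk(A^\top)=\rk(A)$.
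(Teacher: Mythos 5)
The paper does not prove this statement at all: Theorem~\ref{T-MaMo} is imported from the literature (Hua~\cite{Hua51}, Wan~\cite{Wan96}, and the elementary proof of Marcus/Moyls~\cite{MaMo59}), so there is no in-paper argument to compare against. Your proposal is, in essence, the classical rank-one-preserver proof that those sources give, and its skeleton is sound: rank-preservation gives injectivity, hence bijectivity, and $f$ permutes the rank-one matrices; the classification of subspaces of matrices of rank at most one into ``column-type'' and ``row-type'' is the correct key lemma; applying it to the images $V_x=f(\{xy^\top\mid y\in\F^m\})$ and $W_y$ and forcing a uniform type leads to the two alternatives $f(A)=UAV$ or (only for $n=m$, since your maps $\alpha\colon\F^m\to\F^n$ and $\beta\colon\F^n\to\F^m$ must both be injective) $f(A)=UA^\top V$, with invertibility of $U,V$ coming from bijectivity. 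The delicate points you flag are real but all manageable: (i) for $m\geq 2$ each $V_x$ has dimension $m$, so it cannot be simultaneously column- and row-type, making ``type'' well defined (the cases $m=1$ or $n=1$ should be disposed of separately, which is trivial); (ii) the uniformity step is cleaner than your two-element-span phrasing: writing $f(x_1y^\top)=u_1\alpha(y)^\top$ and $f(x_2y^\top)=\beta(y)v_2^\top$, the bad set $\{y\mid \beta(y)\in\langle u_1\rangle\}\cup\{y\mid \alpha(y)\in\langle v_2\rangle\}$ is a union of two subspaces of dimension at most one, and since a vector space over \emph{any} field (including $\F_2$) is never the union of two proper subspaces, some $y$ makes $f((x_1+x_2)y^\top)$ have rank $2$ --- so your worry about $|\F|=2$ does not materialize; (iii) in the factorization $f(xy^\top)=\phi(x)\psi_x(y)^\top$ there is a scalar ambiguity in each $\phi(x)$, so one must normalize (e.g.\ against a fixed basis vector $y_0$) before concluding that $\psi_x$ is independent of $x$ and that $\phi,\psi$ are linear; this is the standard fix and your ``short argument'' should be spelled out along these lines. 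With those details filled in, your route is a complete and correct proof, matching the elementary approach of the cited source rather than anything done in this paper.
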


Let us briefly comment on this result for the case where $n\neq m$.
From the rank-preserving property it is clear that for every~$A$ in~$\Fnm$ there exist $U_A\in\GL_n(\F)$ and $V_A\in\GL_m(\F)$ such that  $f(A)=U_AAV_A$.
The strength of the above theorem lies in the fact that these matrices are \emph{global}, that is, they do not depend on~$A$.

Now the row-space-preserving and the pivot-preserving maps on~$\Fnm$ can be described easily.
Part~(1) below can also be proven with the aid of \cite[Thm.~4]{MaMa18}.
In~\cite{MaMa18}, the authors study (among other things) rank support spaces.
These are matrix spaces consisting of all matrices whose row space is contained in a fixed prescribed space.
In \cite[Thm.~4]{MaMa18} the maps preserving the ``rank support space property'' are characterized.
Since row-space-preserving maps are of this form, this result allows us to rule out immediately Case~2.\ in the proof of~(1) below.
However, since the proof of \cite[Thm.~4]{MaMa18} is quite long, we prefer to present our short, elementary proof based directly on Theorem~\ref{T-MaMo}.

\begin{cor}\label{C-RSPres}
Let $f:\Fnm \longrightarrow \Fnm$ be a linear map.
\begin{enumerate}
\item \label{C1} $f$ is row-space-preserving iff there exists $U\in\GL_n(\F)$ such that $f(A)=UA$ for all $A\in\Fnm$.
\item \label{C2} $f$ is pivot-preserving iff there exists $U\in\GL_n(\F)$ and $V\in\cU_m(\F)$ such that $f(A)=UAV$ for all $A\in\Fnm$, where
        $\cU_m(\F)=\{V\in\GL_m(\F)\mid V\text{ is upper triangular}\}$.
\end{enumerate}
\end{cor}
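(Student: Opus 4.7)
The plan is to reduce both assertions to Theorem~\ref{T-MaMo}, exploiting that row-space- and pivot-preserving linear maps are automatically rank-preserving. The backward directions are immediate: if $f(A)=UA$ with $U\in\GL_n(\F)$, then $\rs(UA)=\rs(A)$ because~$U$ is invertible; and if $f(A)=UAV$ with $U\in\GL_n(\F)$ and $V\in\cU_m(\F)$, then $V^{-1}\in\cU_m(\F)$ as well, so $f(A)=UA(V^{-1})^{-1}$ places~$A$ and~$f(A)$ in the same $\rho_2$-orbit from~\eqref{e-GroupActions}, and Proposition~\ref{P-BasicsP}(\ref{p3}) yields $\piv(f(A))=\piv(A)$.

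For the forward direction of~(\ref{C1}), a row-space-preserving map preserves ranks, so Theorem~\ref{T-MaMo} gives either $f(A)=UAV$ with $U\in\GL_n(\F)$, $V\in\GL_m(\F)$, or, when $n=m$, $f(A)=UA^\top V$. In the first case $\rs(AV)=\rs(UAV)=\rs(A)$ for every~$A$, so right multiplication by~$V$ preserves every one-dimensional subspace of~$\F^m$. Testing with the matrix whose only nonzero row is~$e_i$ gives $e_iV=\lambda_ie_i$, and testing with the single row $e_i+e_j$ then forces $\lambda_i=\lambda_j$; hence $V=\lambda I_m$ for some $\lambda\in\F^*$, and $f(A)=(\lambda U)A$ has the required form. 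To exclude the transpose case when $n=m\geq 2$, write $E_{ij}$ for the $n\times n$ matrix with a single~$1$ in position $(i,j)$. Then $E_{11}^\top V=E_{11}V$ has first row equal to the first row of~$V$ and remaining rows zero, so row-space preservation on $A=E_{11}$ forces the first row of~$V$ to be a nonzero multiple of~$e_1$; but $E_{12}^\top V=E_{21}V$ has that same first row of~$V$ as its second row, so preservation on $A=E_{12}$ forces it to be a nonzero multiple of~$e_2$. This contradiction rules out the transpose alternative, and the remaining case $n=m=1$ is subsumed in the first alternative.

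Part~(\ref{C2}) is handled the same way. Since $|\piv(A)|=\rk(A)$, a pivot-preserving map is rank-preserving, so Theorem~\ref{T-MaMo} again gives the two alternatives. For $f(A)=UAV$, applying the identity $\piv(AV)=\piv(UAV)=\piv(A)$ to the matrix whose only nonzero row is~$e_j$ (which has pivot $(j)$) forces the $j$-th row of~$V$ to have its first nonzero entry in column~$j$; letting $j$ range over~$[m]$ yields $V\in\cU_m(\F)$. The transpose alternative (only possible for $n=m\geq 2$) is excluded by the same $E_{11},E_{12}$ test: both $\piv(UE_{11}^\top V)$ and $\piv(UE_{12}^\top V)$ equal the index of the first nonzero entry in the first row of~$V$, and this cannot simultaneously match $\piv(E_{11})=(1)$ and $\piv(E_{12})=(2)$. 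The main obstacle throughout is executing these case analyses cleanly, in particular eliminating the transpose alternative coming from Theorem~\ref{T-MaMo}; once that is done, the remainder is elementary linear algebra.
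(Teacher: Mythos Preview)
Your proof is correct and follows essentially the same approach as the paper: both reduce to Theorem~\ref{T-MaMo} and then test against elementary matrices to pin down~$V$ (forcing $V=\lambda I_m$ in~(\ref{C1}) and $V\in\cU_m(\F)$ in~(\ref{C2})) while using similar $E_{ij}$-tests to rule out the transpose alternative. The only differences are cosmetic---you test with $e_i$ and $e_i+e_j$ explicitly where the paper argues abstractly from $V\neq\alpha I_m$, and you use the pair $E_{11},E_{12}$ to exclude transposition where the paper uses the full family $E_{ij}$.
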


\begin{proof}
It is clear that maps of the form described in~(\ref{C1}), resp.~(\ref{C2}) are row-space-preserving, resp.\ pivot-preserving (see also Proposition~\ref{P-BasicsP}(\ref{p3})).
Let us now turn to the other implications.

1) Let~$f$ be row-space-preserving. Then~$f$ is also rank-preserving and we may apply Theorem~\ref{T-MaMo}.
\\
\underline{Case 1}: There exist $U\in\GL_n(\F)$ and $V\in\GL_m(\F)$ such that $f(A)=UAV$ for all $A\in\Fnm$.
Assume $V\neq \alpha I_m$ for any $\alpha\in\F^*$.
Then there exists $x\in\F^m$ such that $xV\not\in\text{span}\{x\}$.
Let $A\in\Fnm$ be such that
\[
     UA=\begin{pmatrix}x\\0\\\vdots\\0\end{pmatrix}.
\]
Then $\rs(A)=\rs(UA)=\text{span}\{x\}\neq\rs(UAV)$, a contradiction.
Thus $V=\alpha I_m$ for some $\alpha\in\F^*$ and $f(A)=(\alpha U)A$ for all $A\in\Fnm$, as desired.
\\
\underline{Case 2}: Let $m=n>1$ and suppose $U\in\GL_m(\F)$ and $V\in\GL_m(\F)$ are such that $f(A)=UA^\top V$ for all $A\in\F^{m\times m}$.
Write
\begin{equation}\label{e-V}
   V=\begin{pmatrix}V_1\\ \vdots\\ V_m\end{pmatrix}.
\end{equation}
Consider the standard basis matrices $E_{ij}\in\F^{m\times m}$ which have entry~$1$ at position $(i,j)$ and are zero elsewhere.
Then $\text{span}(e_j)=\rs(E_{ij})=\rs(UE_{ji}V)=\rs(E_{ji}V)=\text{span}(V_i)$ for all $i\in[m]$. This contradicts the invertibility of~$V$.
Hence this case does not occur.

2)  Let~$f$ be pivot-preserving. Then $f$ is also rank-preserving, and we may proceed as in (\ref{C1}).
\\
\underline{Case 1}: There exist $U\in\GL_n(\F)$ and $V\in\GL_m(\F)$ such that $f(A)=UAV$ for all $A\in\Fnm$.
Suppose $V=(v_{ij})$ is not upper triangular.
Then there exists a smallest~$j\in[m]$ and $i>j$ such that $v_{ij}\neq 0$.
With~$V$ as in~\eqref{e-V} we arrive at $(i)=\piv(E_{1i})=\piv(UE_{1i}V)=\piv(E_{1i}V)=\piv(V_i)=(j)$, which is a contradiction.
Thus~$V$ is upper triangular, as desired.
\\
\underline{Case 2}: Let $m=n>1$ and suppose $U\in\GL_m(\F)$ and $V\in\GL_m(\F)$ are such that $f(A)=UA^\top V$ for all $A\in\F^{m\times m}$.
Fix some $j>1$.
With~$V$ as in~\eqref{e-V} we obtain $(j)=\piv(E_{ij})=\piv(UE_{ji}V)=\piv(V_i)$ for all $i\in[m]$.
This means that the first $j-1$ columns of~$V$ are zero, a contradiction to the invertibility of~$V$.
Hence, again, this case cannot occur. \qedhere
\end{proof}

We conclude this paper with examples showing that for any of the partitions~$\cP^\rk,\,\cP^\rs,\,\cP^\piv$ the partition-preserving maps between codes in $\Fnm$ do not in general extend
to such maps on the entire space~$\Fnm$.
In other words, there is no analogue to the MacWilliams Extension Theorem.

\begin{exa}\label{E-NotExt}
Let $\F=\F_2$.
\begin{enumerate}
\item In~\cite[Ex.~2.9(a)]{BGL15} it is shown that for the code $\cC=\{(A\,|\,0)\in\F^{2\times 3}\mid A\in\F^{2\times 2}\}$ the rank-preserving map
        $f:\cC\longrightarrow\cC,\ (A\,|\,0)\longmapsto (A^\top|\,0)$
        does not extend to a rank-preserving map on $\F^{2\times 3}$.
\item In $\F^{3\times 3}$ consider the subset $\cC=\F[P]=\{0,I,P,\ldots,P^6\}$, where
        \[
           P=\begin{pmatrix}0&0&1\\1&0&1\\0&1&0\end{pmatrix}.
        \]
        Then~$P$ is the companion matrix of the primitive polynomial $x^3+x+1\in\F[x]$ and thus~$\cC$ is actually the field~$\F_8$.
        In particular, $A\in\GL_3(\F)$ for all $A\in\cC\setminus\{0\}$.
        As a consequence, the map
        \[
             f:\cC\longrightarrow\F^{3\times 3},\quad A\longmapsto A^\top
        \]
        is trivially row-space-preserving and pivot-preserving.
        We show that~$f$ does not extend to a pivot-preserving map on~$\F^{3\times 3}$.
        Assume to the contrary that it does extend.
        Then Corollary~\ref{C-RSPres}(2) tells us that there exist $U\in\GL_3(\F)$ and $V\in\cU_3(\F)$ such that $f(A)=UAV$ for all $A\in\F^{3\times 3}$.
        Since $I\in\cC$ we have $I=I^\top=f(I)=UIV$, and thus $U=V^{-1}$ is upper triangular.
        Now $P^\top=f(P)=UPU^{-1}$ implies $UP=P^\top U$. One easily verifies that no matrix $U\in\cU_3(\F)$ satisfies this identity.
        Hence~$f$ does not extend to a pivot-preserving map on $\F^{3\times 3}$ and thus also not to a row-space-preserving map.
\end{enumerate}
\end{exa}

\section*{Acknowledgement}
We would like to thank the reviewers for their very close and careful reading, their constructive comments, and for pointing out a gap in our original proof of Theorem~\ref{T-IncreasDeg}.

\bibliographystyle{abbrv}
\bibliography{Biblio_new}

\end{document}